\begin{document}
 
\newtheorem{theorem}{Theorem}[section]
\newtheorem{corollary}[theorem]{Corollary}
\newtheorem{proposition}[theorem]{Proposition} 
\newtheorem{problem}[theorem]{Problem}
\newtheorem{lemma}[theorem]{Lemma} 
\newtheorem{remark}[theorem]{Remark}
\newtheorem{observation}[theorem]{Observation}
\newtheorem{defin}[theorem]{Definition} 
\newtheorem{example}[theorem]{Example}
\newtheorem{conj}[theorem]{Conjecture} 
\def\EPR{\hfill $\Box$\linebreak\vskip3mm} 
 \def\EEX{\hfill $\diamond$}
\newcommand{\aanote}[1]{\textcolor{red}{(Amirhosein: #1)}}
\newcommand{\anote}[1]{\textcolor{green}{(Andrei: #1)}}
\newcommand{\TODO}[1]{\textcolor{red}{TODO: #1}}
\newcommand{\DONE}[1]{\textcolor{blue}{#1}}
\newcommand{\vvec}[2]{\begin{pmatrix} #1 \\ \vdots \\ #2 \end{pmatrix}}

\def\Pol{{\sf Pol}} 
\def\mPol{{\sf MPol}} 
\def\Polo{{\sf Pol}_1} 
\def\PPol{{\sf pPol\;}} 
\def\Inv{{\sf Inv}}
\def\mInv{{\sf MInv}} 
\def\Clo{{\sf Clo}\;} 
\def\Con{{\sf Con}} 
\def\concom{{\sf Concom}\;} 
\def\End{{\sf End}\;}
\def\Sub{{\sf Sub}\;} 
\def\Sym{{\sf Sym}} 
\def\Im{{\sf Im}} 
\def\Ker{{\sf Ker}\;} 
\def\H{{\sf H}}
\def\S{{\sf S}} 
\def\D{{\sf P}} 
\def\I{{\sf I}} 
\def\Var{{\sf var}} 
\def\PVar{{\sf pvar}} 
\def\fin#1{{#1}_{\rm fin}}
\def\P{{\sf P}} 
\def\Pfin{{\sf P_{\rm fin}}} 
\def\Id{{\sf Id}}
\def\R{{\rm R}} 
\def\F{{\rm F}} 
\def\Term{{\sf Term}}
\def\var#1{{\sf var}(#1)} 
\def\Sg#1{{\sf Sg}(#1)} 
\def\Sgg#1#2{{\sf Sg}_{#1}(#2)} 
\def\Cg#1{{\sf Cg}(#1)}
\def\Cgg#1#2{{\sf Cg}_{#1}(#2)}
\def\Cen{{\sf Cen}}
\def\tol{{\sf tol}} 
\def\lnk{{\sf lk}} 
\def\rbcomp#1{{\sf rbcomp}(#1)}
\def\Clo#1{\mathsf{Clo}(#1)} 
\def\Clr#1{\mathsf{Clr}(#1)} 
\def\Aut{\mathsf{Aut}}
\def\Stab{\mathsf{Stab}}
\def\const{\mathsf{c}}
\def\Sp{\mathsf{Sp}}
  
\let\cd=\cdot 
\let\eq=\equiv 
\let\op=\oplus 
\let\omn=\ominus
\let\meet=\wedge 
\let\join=\vee 
\let\tm=\times
\def\ldiv{\mathbin{\backslash}} 
\def\rdiv{\mathbin/}
  
\def\typ{{\sf typ}} 
\def\zz{{\un 0}} 
\def\zo{{\un 1}}
\def\one{{\bf1}} 
\def\two{{\bf2}} 
\def\three{{\bf3}}
\def\four{{\bf4}} 
\def\five{{\bf5}}
\def\pq#1{(\zz_{#1},\mu_{#1})}
  
\let\wh=\widehat 
\def\ox{\ov x} 
\def\oy{\ov y} 
\def\oz{\ov z}
\def\of{\ov f} 
\def\oa{\ov a} 
\def\ob{\ov b} 
\def\oc{\ov c}
\def\od{\ov d} 
\def\oob{\ov{\ov b}} 
\def\rx{{\rm x}}
\def\rf{{\rm f}} 
\def\rrm{{\rm m}} 
\let\un=\underline
\let\ov=\overline 
\let\cc=\circ 
\let\rb=\diamond 
\def\ta{{\tilde a}} 
\def\tz{{\tilde z}}
\let\td=\tilde
\let\dg=\dagger
\let\ddg=\ddagger
  
  
\def\zZ{{\mathbb Z}} 
\def\B{{\mathcal B}} 
\def\P{{\mathcal P}}
\def\zL{{\mathbb L}} 
\def\zD{{\mathbb D}}
 \def\zE{{\mathbb E}}
\def\zG{{\mathbb G}} 
\def\zA{{\mathbb A}} 
\def\zB{{\mathbb B}}
\def\zC{{\mathbb C}} 
\def\zF{{\mathbb F}} 
\def\zM{{\mathbb M}} 
\def\zR{{\mathbb R}}
\def\zS{{\mathbb S}} 
\def\zT{{\mathbb T}} 
\def\zN{{\mathbb N}}
\def\zQ{{\mathbb Q}} 
\def\zW{{\mathbb W}} 
\def\bK{{\bf K}}
\def\C{{\bf C}} 
\def\M{{\bf M}} 
\def\E{{\bf E}} 
\def\N{{\bf N}}
\def\O{{\bf O}} 
\def\bN{{\bf N}} 
\def\bX{{\bf X}} 
\def\GF{{\rm GF}} 
\def\cC{{\mathcal C}} 
\def\cA{{\mathcal A}}
\def\cB{{\mathcal B}} 
\def\cD{{\mathcal D}} 
\def\cE{{\mathcal E}} 
\def\cF{{\mathcal F}} 
\def\cG{{\mathcal G}} 
\def\cH{{\mathcal H}}
\def\cI{{\mathcal I}} 
\def\cL{{\mathcal L}} 
\def\cO{{\mathcal O}} 
\def\cP{{\mathcal P}} 
\def\cQ{{\mathcal Q}} 
\def\cR{{\mathcal R}} 
\def\cRY{{\mathcal RY}}
\def\cS{{\mathcal S}} 
\def\cT{{\mathcal T}} 
\def\cU{{\mathcal U}} 
\def\cV{{\mathcal V}} 
\def\cW{{\mathcal W}} 
\def\cZ{{\mathcal Z}} 
\def\oB{{\ov B}}
\def\oC{{\ov C}} 
\def\ooB{{\ov{\ov B}}} 
\def\ozB{{\ov{\zB}}}
\def\ozD{{\ov{\zD}}} 
\def\ozG{{\ov{\zG}}}
\def\tcA{{\widetilde\cA}} 
\def\tcC{{\widetilde\cC}}
\def\tcF{{\widetilde\cF}} 
\def\tcI{{\widetilde\cI}}
\def\tB{{\widetilde B}} 
\def\tC{{\widetilde C}}
\def\tD{{\widetilde D}} 
\def\ttB{{\widetilde{\widetilde B}}}
\def\ttC{{\widetilde{\widetilde C}}}
\def\tba{{\tilde\ba}} 
\def\ttba{{\tilde{\tilde\ba}}}
\def\tbb{{\tilde\bb}} 
\def\ttbb{{\tilde{\tilde\bb}}}
\def\tbc{{\tilde\bc}} 
\def\tbd{{\tilde\bd}}
\def\tbe{{\tilde\be}} 
\def\tbt{{\tilde\bt}}
\def\tbu{{\tilde\bu}} 
\def\tbv{{\tilde\bv}}
\def\tbw{{\tilde\bw}} 
\def\tdl{{\tilde\dl}} 
\def\ocP{{\ov\cP}}
\def\tzA{{\widetilde\zA}} 
\def\tzC{{\widetilde\zC}}
\def\new{{\mbox{\footnotesize new}}}
\def\old{{\mbox{\footnotesize old}}}
\def\prev{{\mbox{\footnotesize prev}}}
\def\oo{{\mbox{\sf\footnotesize o}}}
\def\pp{{\mbox{\sf\footnotesize p}}}
\def\nn{{\mbox{\sf\footnotesize n}}} 
\def\oR{{\ov R}}
  
  
\def\gA{{\mathfrak A}} 
\def\gC{{\mathfrak C}} 
\def\gV{{\mathfrak V}} 
\def\gS{{\mathfrak S}} 
\def\gK{{\mathfrak K}} 
\def\gH{{\mathfrak H}}
\def\gG{{\mathfrak G}}
\def\gP{{\mathfrak P}}
  
\def\ba{{\bf a}} 
\def\bb{{\bf b}} 
\def\bc{{\bf c}} 
\def\bd{{\bf d}} 
\def\be{{\bf e}} 
\def\bbf{{\bf f}} 
\def\bg{{\bf g}}
\def\bh{{\bf h}}
\def\bi{{\bf i}}
\def\bj{{\bf j}}
\def\bm{{\bf m}} 
\def\bo{{\bf o}} 
\def\bp{{\bf p}} 
\def\br{{\bf r}}
\def\bs{{\bf s}} 
\def\bu{{\bf u}} 
\def\bt{{\bf t}} 
\def\bv{{\bf v}} 
\def\bx{{\bf x}}
\def\by{{\bf y}} 
\def\bw{{\bf w}} 
\def\bz{{\bf z}}

\def\bC{{\bf C}}
\def\bR{{\bf R}}
\def\bM{{\bf M}}

\def\ga{{\mathfrak a}} 
\def\oal{{\ov\al}} 
\def\obeta{{\ov\beta}}
\def\ogm{{\ov\gm}} 
\def\oep{{\ov\varepsilon}}
\def\oeta{{\ov\eta}} 
\def\oth{{\ov\th}} 
\def\ovm{{\ov\mu}}
\def\ozero{{\ov0}}
\def\bB{{\bf B}} 
\def\bA{{\bf A}}

  
\def\CCSP{\hbox{\rm c-CSP}} 
\def\CSP{{\rm CSP}} 
\def\Hom{\mathsf{Hom}}
\def\hom{\mathsf{hom}}
\def\EVAL{{\rm EVAL}} 
\def\NCSP{{\rm \#CSP}} 
\def\mCSP{{\rm MCSP}} 
\def\pCSP{{\rm p\hbox{-}CSP}}
\def\FP{{\rm FP}} 
\def\PTIME{{\bf PTIME}} 
\def\Part{\mathsf{Part}} 
\def\GS{\hbox{($*$)}} 
\def\ry{\hbox{\rm r+y}}
\def\rb{\hbox{\rm r+b}} 
\def\Gr#1{{\mathrm{Gr}(#1)}}
\def\Grp#1{{\mathrm{Gr'}(#1)}} 
\def\Grpr#1{{\mathrm{Gr''}(#1)}}
\def\Scc#1{{\mathrm{Scc}(#1)}} 
\def\rel{\EuScript R} 
\def\relo{\EuScript Q}
\def\rela{\EuScript S} 
\def\reli{\EuScript T} 
\def\relp{\EuScript P} 
\def\dep{\mathsf{dep}}
\def\Filt{\mathrm{Ft}}
\def\Filts{\mathrm{Fts}} 
\def\Agr{$\mathbb{A}$}
\def\Al{\mathrm{Alg}}
\def\Sig{\mathrm{Sig}}
\def\strat{\mathsf{strat}}
\def\relmax{\mathsf{relmax}}
\def\srelmax{\mathsf{srelmax}}
\def\Meet{\mathsf{Meet}}
\def\amax{\mathsf{amax}}
\def\umax{\mathsf{umax}}
\def\emin{\mathsf{Z}}
\def\as{\mathsf{as}}
\def\star{\hbox{$(*)$}}
\def\bmal{{\mathbf m}}
\def\Af{\mathsf{Af}}
\let\sqq=\sqsubseteq
\def\maj{\mathsf{maj}}
\def\razm{\mathsf{size}}
\def\Razm{\mathsf{MAX}}
\def\Centr{\mathsf{Center}}
\def\centr{\mathsf{center}}
\def\RG{\mathsf{RG}}
\def\hom{\mathsf{hom}}
\def\inj{\mathsf{inj}}

\def\Horn{{\sf Horn}}
\def\NAND{{\sf NAND}}
\def\orr{{\sf or}}
\def\horn{{\sf horn}}
\def\Compl{{\sf Compl}}
\def\ROR{{\sf OR}}
\def\Cl{{\sf Cl}}
\def\NEQ{{\sf NEQ}}
\def\IMP{{\sf IMP}}
\def\NIMP{{\sf NIMP}}
\def\EQ{{\sf EQ}}
\def\existspr{\exists^{\equiv p}}
\def\ext{\mathsf{\# ext}}
\def\Ext{\mathsf{Ext}}
\def\extp{\mathsf{\#_p ext}}

  
\let\sse=\subseteq 
\def\ang#1{\langle #1 \rangle}
\def\angg#1{\left\langle #1 \right\rangle}
\def\dang#1{\ang{\ang{#1}}} 
\def\vc#1#2{#1 _1\zd #1 _{#2}}
\def\tms{\tm\dots\tm}
\def\zd{,\ldots,} 
\let\bks=\backslash 
\def\red#1{\vrule height7pt depth3pt width.4pt
\lower3pt\hbox{$\scriptstyle #1$}}
\def\fac#1{/\lower2pt\hbox{$\scriptstyle #1$}}
\def\me{\stackrel{\mu}{\eq}} 
\def\nme{\stackrel{\mu}{\not\eq}}
\def\eqc#1{\stackrel{#1}{\eq}} 
\def\cl#1#2{\arraycolsep0pt
\left(\begin{array}{c} #1\\ #2 \end{array}\right)}
\def\cll#1#2#3{\arraycolsep0pt \left(\begin{array}{c} #1\\ #2\\
#3 \end{array}\right)} 
\def\clll#1#2#3#4{\arraycolsep0pt
\left(\begin{array}{c} #1\\ #2\\ #3\\ #4 \end{array}\right)}
\def\cllll#1#2#3#4#5#6{ \left(\begin{array}{c} #1\\ #2\\ #3\\
#4\\ #5\\ #6 \end{array}\right)} 
\def\pr{{\rm pr}}
\def\prp{\mathsf{pr}^p}
\newcommand{\colect}[2]{\{ #1 \}_{{#2}} }
\let\upr=\uparrow 
\def\ua#1{\hskip-1.7mm\uparrow^{#1}}
\def\sua#1{\hskip-0.2mm\scriptsize\uparrow^{#1}} 
\def\lcm{{\rm lcm}} 
\def\perm#1#2#3{\left(\begin{array}{ccc} 1&2&3\\ #1&#2&#3
\end{array}\right)} 
\def\w{$\wedge$} 
\let\ex=\exists
\def\NS{{\sc (No-G-Set)}} 
\def\lev{{\sf lev}}
\let\rle=\sqsubseteq 
\def\ryle{\le_{ry}} 
\def\ryprec{\le_{ry}}
\def\os{\mbox{[}} 
\def\zs{\mbox{]}}
\def\link{{\sf link}}
\def\solv{\stackrel{s}{\sim}} 
\def\mal{\mathbf{m}}
\def\precs{\prec_{as}}
\def\relad{\rel_{\mathsf{alldif}}}
\def\polylog{\mathsf{polylog}}
\def\Gf{\mathsf{Gf}}
\def\Hg{\mathsf{Hg}}
\def\htw{\mathsf{hw}}
\def\fhtw{\mathsf{fhw}}
\def\shtw{\mathsf{shw}}
\def\sth{\mathsf{th}}
\def\Fix{\mathsf{Fix}}
\def\rpi{{\frac{\pi r}q}}
\def\rppi{{\frac{\pi(q-r)}q}}

  
\def\lb{$\linebreak$}  
  
\def\ar{\hbox{ar}} 
\def\Im{{\sf Im}\;} 
\def\deg{{\sf deg}}
\def\id{{\rm id}}
  
\let\al=\alpha 
\let\gm=\gamma 
\let\dl=\delta 
\let\ve=\varepsilon
\let\ld=\lambda 
\let\om=\omega 
\let\vf=\varphi 
\let\vr=\varrho
\let\th=\theta 
\let\sg=\sigma 
\let\Gm=\Gamma 
\let\Dl=\Delta
\let\kp=\kappa
  
  
\font\tengoth=eufm10 scaled 1200 
\font\sixgoth=eufm6
\def\goth{\fam12} 
\textfont12=\tengoth 
\scriptfont12=\sixgoth
\scriptscriptfont12=\sixgoth 
\font\tenbur=msbm10
\font\eightbur=msbm8 
\def\bur{\fam13} 
\textfont11=\tenbur
\scriptfont11=\eightbur 
\scriptscriptfont11=\eightbur
\font\twelvebur=msbm10 scaled 1200 
\textfont13=\twelvebur
\scriptfont13=\tenbur 
\scriptscriptfont13=\eightbur
\mathchardef\nat="0B4E 
\mathchardef\eps="0D3F

\author{Andrei A. Bulatov\thanks{The first author is supported by an NSERC Discovery grant.} and Amirhossein Kazeminia}{ }{}{}{}
\title{Modular Counting over 3-Element and Conservative Domains}
\maketitle
\begin{abstract}
In the Constraint Satisfaction Problem (CSP for short) the goal is to decide the existence of a homomorphism from a given relational structure $\cG$ to a given relational structure $\cH$. If the structure $\cH$ is fixed and $\cG$ is the only input, the problem is denoted $\CSP(\cH)$. In its counting version, $\#\CSP(\cH)$, the task is to find the number of such homomorphisms. The CSP and $\#\CSP$ have been used to model a wide variety of combinatorial problems and have received a tremendous amount of attention from researchers from multiple disciplines.

In this paper we consider the modular version of the counting CSPs, that is, problems of the form $\#_p\CSP(\cH)$ of counting the number of homomorphisms to $\cH$ modulo a fixed prime number $p$. Modular counting has been intensively studied during the last decade, although mainly in the case of graph homomorphisms. Here we continue the program of systematic research of modular counting of homomorphisms to general relational structures. The main results of the paper include a new way of reducing modular counting problems to smaller domains and a study of the complexity of such problems over 3-element domains and over conservative domains, that is, relational structures that allow to express (in a certain exact way) every possible unary predicate. 
\end{abstract}

\newpage

\section{Introduction}

\subparagraph*{The Constraint Satisfaction Problem.}
In a Constraint Satisfaction Problem (CSP) the goal is to decide the existence of a homomorphism from a given relational structure $\cG$ to a given relational structure $\cH$, \cite{Feder98:computational}. If the target structure is fixed and only $\cG$ is the input, the problem called a \emph{nonuniform CSP} and is denoted by $\CSP(\cH)$, \cite{Feder98:computational}. CSPs and nonuniform CSPs in particular have been used to model a wide variety of combinatorial problems across many disciplines. The complexity of problems of the form $\CSP(\cH)$ for finite relational structures has been thoroughly investigated \cite{Jeavons:algebraic,Bulatov05:classifying,Barto14:local,Bulatov06:simple,Idziak10:tractability,Bulatov17:dichotomy,Zhuk20:dichotomy}.

In the counting version of the CSP, denoted by $\#\CSP$ (or $\#\CSP(\cH)$ if we are dealing with a nonunifrom problem), the goal is to find the number of homomorphisms from a given structure $\cG$ to a given structure $\cH$, or in the case of a nonuniform problem to a fixed structure $\cH$. Counting CSPs have received much attention starting from the seminal work of Valiant \cite{Valiant79:complexity,Valiant79:computing}, to the classification of Boolean \cite{Creignou96:complexity} counting CSPs and those over graphs \cite{Dyer00:complexity}, to the discovery of the algebraic method for such problems \cite{ref:BULATOV_TowardDichotomy}, to a complexity classification of unweighted counting CSPs \cite{DBLP:journals/jacm/Bulatov13,effective-Dyer-doi:10.1137/100811258}, to a sequence of results on weighted counting CSPs \cite{Dyer09:weighted,Bulatov12:weighted,Goldberg10:complexity,Cai10:graph,Cai14:complexity,Cai16:nonnegative} culminating at a complete characterization of counting CSPs with complex weights by Cai and Chen \cite{cai-complex}. In all the research mentioned above the goal is `exact' counting, that is, finding the exact number of (weighted) homomorphisms between relational structures. Approximate counting is another vast area of research that we do not touch in this paper. Statistical Physics often concerns with the problem of computing partition functions that is closely related to counting CSPs. In this paper we consider another variation of counting problems, counting modulo an integer.

The complexity class the Counting $\CSP$ belongs to is $\#P$, the class of problems of counting accepting paths of polynomial time nondeterministic Turing machines. There are several ways to define reductions between counting problems, but the most widely used ones are parsimonious reductions and Turing (polynomial time) reductions. A \emph{parsimonious reduction} from a counting problem $\#A$ to a counting problem $\#B$ is an algorithm that, given an instance $I$ of $\#A$ produces (in polynomial time) an instance $I'$ of $\#B$ such that the answers to $I$ and $I'$ are equal. A \emph{Turing} (or \emph{polynomial time}) \emph{reduction} is a polynomial time algorithm that solves $\#A$ using $\#B$ as an oracle. 

\subparagraph*{Modular counting.}
While the complexity of and algorithms for exact counting are now well understood, counting modulo an integer has received much attention in the past decade. For an integer $k$ and a relational structure $\cH$ the problem of counting homomorphisms from a given relational structure to $\cH$ modulo $k$ is denoted by $\#_k\CSP(\cH)$. Here (and in almost all papers on the subject) the focus is on prime modulae $p$. 

One of the most important features of relational structures affecting the complexity of the corresponding counting CSP is the automorphism group, $\Aut(\cH)$, of $\cH$. Indeed, it can be easily shown that if $\pi\in\Aut(\cH)$ has order $p$ (or is a \emph{$p$-automorphism}), then $\#_p\CSP(\cH)$ is equivalent, i.e., parsimoniously interreducuble, with $\#_p\CSP(\cH^\pi)$, where $\cH^\pi$ denotes the substructure of $\cH$ induced by the set $\Fix(\pi)$ of the fixed points of $\pi$. Clearly, such a reduction by $p$-automorphisms can be repeated until the resulting structure is \emph{$p$-rigid}, that is, has no $p$-automorphisms. Such a derivative $p$-rigid structure is unique up to an isomorphism, as it is shown in \cite{ref:CountingMod2Ini}, and will be denoted by $\cH^{*p}$.

A related problem that will be important in this paper is computing partition functions. Let $M$ be a $k\tm k$-matrix with entries from (in the most general case) a semiring $\zS$. Let $\EVAL(M)$ denote the problem of computing the function 
\[
Z_M(G)=\sum_{\vf:V\to [k]}\prod_{(u,v)\in E}M[\vf(u),\vf(v)]
\]
for a given (di)graph $G=(V,E)$, called the \emph{partition function} of $G$. The complexity of partition functions has been extensively studied in Statistical Physics and Theoretical Computer Science, see \cite{Welsh90:computational,Barvinok16:combinatorics,Bulatov05:partition} for some samples. In the case of real or complex matrices the complexity of exact computing of partition functions is well understood \cite{Bulatov05:partition,Cai10:graph}. The case of matrices over a finite field that arises from modular counting is wide open and presents challenges that can be glimpsed from the `cancellation phenomenon' observed in \cite{Goldberg10:complexity,cai-complex} and produced by elements of the field of a finite multiplicative order.

\subparagraph*{Existing results.}
A systematic research on modular counting was initiated by Faben in \cite{faben2008complexity}, who introduced the complexity classes $\#_pP$ of problems of counting, modulo $p$, accepting paths of polynomial time Turing machines. In the same paper he also proved some key hardness results and gave a complexity classification of $\#_p\CSP(\cH)$ for 2-element structures $\cH$ into polynomial time solvable and $\#_pP$-complete ones. The notion of reduction between modular counting problems is similar to that for exact counting: it is either polynomial time reduction, in the same sense as before, or parsimonious reductions, where the answers to the two instances must be equal modulo $p$. 

Since then most of the effort has been directed at problems $\#_p\CSP(\cH)$ where $\cH$ is a graph. Faben and Jerrum \cite{ref:CountingMod2Ini} investigated the case when $p=2$ and $\cH$ is a tree, and, apart from classifying the complexity of the problem in this case, they also posed a conjecture stating that if $\cH$ is a $p$-rigid graph, $\#_p\CSP(\cH)$ has the same complexity as the exact counting problem $\#\CSP(\cH)$. In a series of papers \cite{ref:CountingModPToTrees_gbel_et_al_LIPIcs,ref:CountingMod2ToSquarefree,ref:CountingMod2ToCactus,ref:CountingModPToK33free,ref:CountingMod2ToKMinor,ref:CountingModPToSquarefree} the Faben/Jerrum conjecture was confirmed for a number of graph classes and values of $p$, and then was proved in full generality by Bulatov and Kazeminia in \cite{DBLP:conf/stoc/BulatovK22}.

While modular counting of graph homomorphisms provides some insight into the more general case, many of the complications do not occur in that case as it was demonstrated in \cite{Kazeminia25:modular}. As we will be using some of the results from this paper, we discuss it in some detail. The first observation is that the more general version of the Faben/Jerrum conjecture is not true for general relational structures: for any prime $p$ there exists a $p$-rigid structure $\cH$ such that $\#\CSP(\cH)$ is hard, but $\#_p\CSP(\cH)$ is easy. This can be fixed to some extent by introducing and exploiting multi-sorted relational structures with a richer structure of automorphisms. We will be using this framework as well, see Section~\ref{sec:structures}. Although it is possible to modify the Faben/Jerrum conjecture for general multi-sorted structures in such a way that no counterexample is known so far, there is still little understanding of the general case.

There are two properties of graphs that made the result of \cite{DBLP:conf/stoc/BulatovK22} possible but fail for general relational structures. One is the structure of automorphisms of direct products of graphs. Due to the results of \cite{ref:ProductOfGraphs,DBLP:conf/stoc/BulatovK22} every automorphism of a direct product of graphs essentially boils down to a combination of automorphisms of the factors and a permutation of the factors. This is no longer true in the general case, even in the case of digraphs, as was demonstrated in \cite{Kazeminia25:modular}. This leads to a failure of the second important property: reductions through primitive positive (pp-) definitions. Let $\cH$ be a relational structure. Recall that a relation $\rel(\vc xk)$ on $H$, the base set of $\cH$, is said to be \emph{pp-definable} in $\cH$ if there is a first order formula that represents $\rel$ and has the following form
\[
\rel(\vc xk)=\exists\vc ym\ \Phi(\vc xk,\vc ym),
\]
where $\Phi$ is a conjunction of predicates from $\cH$ and equality predicates. If $\cH$ is a $p$-rigid graph, then the problem $\#_p\CSP(\cH+\rel)$, where $\cH+\rel$ denotes the \emph{extension} of $\cH$ by adding the predicate $\rel$, is polynomial time reducible to $\#_p\CSP(\cH)$. Such reductions are known to be a powerful tool in the CSP research, as they allow for the use of polymorphisms, that is, homomorphisms from powers $\cH^n$ to $\cH$, to describe the complexity of constraint problems, see \cite{Bulatov05:classifying,ref:POlymorphismAndUsethem_barto2017polymorphisms}. In particular, the existence of a so-called \emph{Mal'tsev polymorphism}, that is, a polymorphism $m:\cH^3\to\cH$ satisfying the equations $m(x,y,y)=m(y,y,x)=x$, makes it possible to represent solutions to a CSP in a compact form that then in some cases can be used to find their number \cite{Bulatov06:simple,effective-Dyer-doi:10.1137/100811258}. In the case of general relational structures pp-definitions no longer give rise to reductions between problems and need to be replaced with their modular counterpart. A relation $\rel(\vc xk)$ is said to be \emph{$p$-mpp-definable} in $\cH$ for a prime $p$, if can be represented by a formula 
\[
\rel(\vc xk)=\existspr\vc ym\ \Phi(\vc xk,\vc ym),
\]
where the requirements on $\Phi$ are the same as before, and which is true for certain values of $\vc xk$ if and only if the number of values of $\vc ym$ that make $\Phi(\vc xk,\vc ym)$ true is not divisible by $p$. Then \cite{Kazeminia25:modular} proves that $\#_p\CSP(\cH+\rel)$ is polynomial time reducible to $\#_p\CSP(\cH)$ for any relational structure $\cH$ and any relation $\rel$ $p$-mpp-definable in $\cH$. 

Unfortunately, $p$-mpp-definitions cannot be captured by polymorphisms, and overcoming this difficulty will be an ongoing theme of this paper. In some cases, however, polymorphisms of $p$-mpp-definable relations play an important role. For instance, in \cite{Kazeminia25:modular} it was shown that if the set of 2-mpp-definable relations of a structure $\cH$ has a Mal'tsev polymorphism, then $\#_2\CSP(\cH)$ can be solved in polynomial time.

\subparagraph*{Our contribution.}
In this paper we build upon the results of \cite{Kazeminia25:modular} to obtain a complexity classification of $\#_p\CSP(\cH)$ for two classes of structures $\cH$. 
We start with a new version of an automorphism and the corresponding rigidity condition. 

A binary polymorphism $f$ of a relational structure $\cH$ is said to be an \emph{automorphic polynomial} if for any $a\in H$, $f(a,x)$ is a permutation of $H$. If, for a prime $p$, for any $a\in H$, $f(a,x)$ is the identity mapping or a permutation of order $p$, and $f(a,x)$ is a permutation of order $p$ for  at least one $a\in H$, $f$ is said to be a \emph{$p$-automorphic polynomial}. The existence of a $p$-automorphic polynomial allows one to reduce a modular counting CSP to a CSP over a smaller structure. For a complete version of the following result see Proposition~\ref{pro:p-auto-poly} and Corollary~\ref{cor:3-element-auto-poly}.

\begin{proposition}\label{pro:p-auto-poly-intro}
Let a relational structure $\cH$ have a $p$-automorphic polynomial $f$. Then there is $\cH^f$ such that $\cH^f$ has smaller cardinality than $\cH$ and $\#_p\CSP(\cH)$ is polynomial time reducible to $\#_p\CSP(\cH^f)$. Moreover, if $\cH$ satisfies some additional conditions, $\cH^f$ can be chosen such that $\#_p\CSP(\cH)$ and $\#_p\CSP(\cH^f)$ are polynomial time equivalent.
\end{proposition}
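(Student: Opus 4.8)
\emph{Proof sketch.}
The model is the reduction by a $p$-automorphism $\pi$ recalled earlier in the introduction: $\langle\pi\rangle$ acts on $\Hom(\cG,\cH)$ with all orbits of size $1$ or $p$, and a homomorphism is fixed exactly when its image lies in $\Fix(\pi)$, whence $|\Hom(\cG,\cH)|\equiv|\Hom(\cG,\cH^\pi)|\pmod p$. The obstacle to copying this argument is that for a $p$-automorphic polynomial $f$ the unary maps $f_a:=f(a,\cdot)$ are permutations of the \emph{set} $H$ but in general not automorphisms of $\cH$, so $\langle f_a:a\in H\rangle$ does not act on homomorphism sets. The remedy is to act by $f$ coordinatewise. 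Fix an instance $\cG$ with universe $V$ and put $X=\Hom(\cG,\cH)$. Since $f$ is a polymorphism, $X$ is closed under the coordinatewise operation $(\vf,\psi)\mapsto f(\vf,\psi)$; and for a fixed $\vf\in X$ the translation $g_\vf\colon\psi\mapsto f(\vf,\psi)$ is injective on $H^{V}$ (each $f_{\vf(v)}$ is a permutation of $H$), hence a permutation of the finite set $X$, and $g_\vf^{p}=\mathrm{id}$ since each $f_{\vf(v)}$ has order $1$ or $p$. Thus every $\langle g_\vf\rangle$-orbit on $X$ has size $1$ or $p$, so
\[
|\Hom(\cG,\cH)|\ \equiv\ \bigl|\{\psi\in X:\ f(\vf(v),\psi(v))=\psi(v)\ \text{for all }v\in V\}\bigr|\pmod p .
\]

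First I would derive the forward reduction. Let $D=\{a\in H:f_a=\mathrm{id}\}$; since $f$ is a $p$-automorphic polynomial, $D\subsetneq H$, and I take $\cH^f$ to be the substructure of $\cH$ induced on $D$, so $|\cH^f|<|\cH|$ (if $D=\emptyset$ this degenerates, with $\cH^f$ empty and $\#_p\CSP(\cH)$ itself trivial). The plan is to turn the congruence above into a reduction: by choosing $\vf$ so as to take values outside $D$ wherever possible, by locally replacing the vertices of $\cG$ by gadgets enforcing the constraints $\psi(v)\in\Fix(f_{\vf(v)})$, and by iterating on the resulting residual structure — which is where one genuinely uses that $f$ is a polymorphism and not merely a family of set-permutations — one arrives at a polynomial-time transformation $\cG\mapsto\cG'$ with $|\Hom(\cG,\cH)|\equiv|\Hom(\cG',\cH^f)|\pmod p$, hence a polynomial-time reduction $\#_p\CSP(\cH)\le_p\#_p\CSP(\cH^f)$. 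The delicate point here is the precise identification of the invariant set $D$ and the verification that the gadgets have the stated fixed-point behaviour.

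For the ``moreover'' one also needs the reverse reduction $\#_p\CSP(\cH^f)\le_p\#_p\CSP(\cH)$. My plan is to pin it to a single definability question: if the unary relation $D$ — the universe of $\cH^f$ — is $p$-mpp-definable in $\cH$, then $\#_p\CSP(\cH^f)$ reduces to $\#_p\CSP(\cH+D)$ by placing the $D$-constraint on every vertex of the input (a homomorphism $\cG\to\cH$ with all values in $D$ is precisely a homomorphism $\cG\to\cH^f$, as $\cH^f$ carries exactly the relations of $\cH$ restricted to $D$), and $\#_p\CSP(\cH+D)$ reduces to $\#_p\CSP(\cH)$ by the reduction theorem of \cite{Kazeminia25:modular}. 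This is the main obstacle, and it is exactly where the ``additional conditions'' on $\cH$ are used: since $f_a\notin\Aut(\cH)$ in general, one cannot invoke the polymorphism-based definability tools that handle genuine $p$-automorphisms, and $D$ need not be $p$-mpp-definable without further hypotheses. In the two regimes of this paper this goes through: for conservative $\cH$ \emph{every} unary relation, in particular $D$, is $p$-mpp-definable, so the equivalence is immediate; and over a $3$-element domain a short case analysis — on how $f$ acts and on which elements belong to $D$ — either exhibits $D$ as $p$-mpp-definable or reduces to a case where $\cH^f$ can be simulated inside $\cH$, again giving the equivalence.
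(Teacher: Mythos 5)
Your opening observation is right and natural: if $\vf\in\Hom(\cG,\cH)$ then $g_\vf=f(\vf,\cdot)$ is a permutation of $\Hom(\cG,\cH)$ of order dividing $p$, so $|\Hom(\cG,\cH)|\equiv|\Fix(g_\vf)|\pmod p$, and $\Fix(g_\vf)=\{\psi:\psi(v)\in\Fix(f_{\vf(v)})\ \forall v\}$. However, turning this into a polynomial-time reduction to a \emph{fixed} structure is where the proposal breaks. There are two independent obstacles you do not resolve. First, you need to exhibit a $\vf\in\Hom(\cG,\cH)$, but that is precisely an instance of $\CSP(\cH)$, which need not be solvable in polynomial time; the congruence is useless unless you can point to a concrete $\vf$. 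Second, even given $\vf$, the restricted domains $\Fix(f_{\vf(v)})$ vary with $v$ and with $\vf$, so the ``residual structure'' is a multi-sorted structure depending on $\cG$ and $\vf$, not a single target $\cH^f$ determined by $\cH$ and $f$; the gadget/iteration step that would repair this is exactly the missing content and is not sketched in a way I can verify. Related to this, your candidate $\cH^f$ — the substructure induced on $D=\{a:f_a=\mathrm{id}\}$ — is not what the fixed-point analysis yields: the fixed points of $g_\vf$ restrict $\psi(v)$ to $\Fix(f_{\vf(v)})$, which is in general neither contained in nor equal to $D$.

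The paper's proof is genuinely different and circumvents both obstacles. Instead of a global homomorphism $\vf$, it works with \emph{consistent collections of permutations}: for each constraint $C=\langle\bs_C,\rel_C\rangle$ a permutation $\vf_C\in\Sym(\rel_C)$ of the tuple set, required to agree on shared variables. Finding such a collection is itself a CSP $s(\cP)$ with domains $\Sym(\rel_C)$, and the crucial observation is that this CSP has a Mal'tsev polymorphism $m(x,y,z)=xy^{-1}z$ coming from the group structure, so it is solvable (and point-solvable) in polynomial time. Given an element $a$ with $f(a,\cdot)$ of order $p$, one seeds a constraint $C_0$ with $\vf_0(\bx)=f(\ba,\bx)$ for a tuple $\ba$ with $\ba[s]=a$; if $s(\cP)$ has a solution extending $\vf_0$, every solution of $\cP$ with $\psi(v)$ in a nontrivial orbit of $f(a,\cdot)$ lies in a free $p$-orbit, so the orbit set $B$ can be removed from the domain; if not, no solution of $\cP$ uses $\ba$, so $\ba$ can be deleted. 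Iterating over all $\ba$ with $\ba[s]=a$ shrinks each variable's domain to $H-\{a\}$ or $H-B$, yielding a two-sorted $\cH^f$ with $\Sp(\cH^f)\prec\Sp(\cH)$. For the ``moreover'' part (Corollary~\ref{cor:3-element-auto-poly}), the paper shows that when $\{a\}$ and $H-\{a\}$ are $p$-subalgebras the restricted relations are conjunctive-definable in $\cH$, so $\#_p\CSP(\cH^f)$ reduces back; your definability idea for $D$ is in the right spirit but aimed at the wrong set. In short, the key missing idea is the localization from global homomorphisms to per-constraint permutations, solvable via the Mal'tsev CSP over symmetric groups — without it the forward reduction does not go through.
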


In order to prove Proposition~\ref{pro:p-auto-poly-intro}, given an instance $(V,\cC)$ of $\#_p\CSP(\cH)$ we find permutations of the domain that are specific for each variable from $V$, have order $p$, and such that they map every tuple of every constraint to a tuple from the same constraint (so-called \emph{consistent permutations}). These permutations allow us to reduce the domain $\cH$ to a smaller structure, because similar to automorphisms the elements that are shifted by the permutations can be eliminated from the problem. On the other hand, let $a\in H$ be an element such that $f(a,x)$ is a permutation of order $p$. Then, if permutations above do not exist, it implies that the element $a$ cannot be a part of any solution, and therefore can be eliminated. Consistent permutations are found by constructing a derivative CSP, whose domain is the symmetric group on $H$, and the CSP itself is an instance of a Mal'tsev CSP over that group \cite{Bulatov06:simple,Berkholz15:limitations}.

Recall that a unary relation pp-definable in a structure $\cH$ is said to be a \emph{subalgebra} of $\cH$. In the case of modular counting we use \emph{$p$-subalgebras}, that is, unary relations $p$-mpp-definable in $\cH$. The two types of relational structures we consider here are 
\begin{itemize}
    \item 
    3-element structures, which are the next step after a similar result for 2-element structures from~\cite{faben2008complexity}, and
    \item 
    \emph{$p$-conservative} structures, i.e., structures such that every possible unary relation is a $p$-subalgebra. 
\end{itemize}
\noindent
The same types of the CSP have been major milestones in the study of the decision problem, see \cite{Bulatov06:3-element,Bulatov11:conservative,Barto11:conservative,Bulatov16:conservative}. They introduced several key ideas that were then used to obtain a dichotomy theorem for the general CSP. 

In this paper in both cases our complexity classification is incomplete. The remaining difficulty is the complexity of modular partition functions. When $p=2$ every partition function has a 0-1 matrix, and therefore can be treated as the problem of counting homomorphism to a graph or a bipartite graph. The complexity of this problem is known, see, \cite{DBLP:conf/stoc/BulatovK22}. If $p\ge3$ this is no longer possible and the structure of matrices of partition functions becomes more intricate, requiring methods from finite fields. Some partial results have been obtained in \cite{Kazeminia25:thesis}.

Let $\ang{\cH}$ and $\ang{\cH}_p$ denote the set of all relations that are, respectively, pp-definable and $p$-mpp-definable in $\cH$. In the case of exact counting a necessary condition for tractability of $\#\CSP(\cH)$ is the existence of a Mal'tsev polymorphism of $\cH$, mentioned above. No similar property is known to be true for modular counting. Our two main results claim that for 3-element and $p$-conservative structures $\cH$ if $\ang{\cH}_p$ has no Mal'tsev polymorphism then $\#_p\CSP(\cH)$ is $\#_pP$-hard. If a Mal'tsev polymorphism exists then we identify certain cases, in which the problem is polynomial time solvable.


\begin{theorem}\label{the:conservative-hard-intro}
Let $\cH$ be a $p$-conservative structure and $p$ a prime. If $\ang{\cH}_p$ does not have a Mal'tsev polymorphism then $\#_p\CSP(\cH)$ is $\#_pP$-hard. Otherwise, if $p=2$  then $\#_2\CSP(\cH)$ is solvable in polynomial time.
\end{theorem}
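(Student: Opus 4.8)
The plan is to split the statement into its two halves and handle them in sequence, leaning on the machinery already set up in the excerpt. For the hardness half, the strategy is the usual dichotomy-style argument relative to the algebraic invariants: assuming $\ang{\cH}_p$ has no Mal'tsev polymorphism, I would first pass, via $p$-automorphic polynomials and Proposition~\ref{pro:p-auto-poly-intro} together with repeated reduction by $p$-automorphisms, to a $p$-rigid representative; the point of $p$-conservativity is that it is preserved under these reductions (every unary relation on the fixed-point set is still $p$-mpp-definable), and that the absence of a Mal'tsev polymorphism in $\ang{\cH}_p$ is also inherited, because any Mal'tsev polymorphism of the smaller structure's $p$-mpp-closure could be lifted. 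Then, using $p$-conservativity, I would reduce to a \emph{2-element} sub-instance: pick a 2-element subset $B\subseteq H$ which is a $p$-subalgebra (available because every unary relation, in particular $B$, is $p$-mpp-definable), restrict all relations of $\cH$ to $B$, and observe that by the classical algebraic analysis the obstruction ``no Mal'tsev polymorphism'' forces, on some such $B$, the restricted structure to lack even a weaker polymorphism (a semilattice or majority operation), which by Faben's classification \cite{faben2008complexity} of $\#_p\CSP$ over 2-element domains makes $\#_p\CSP(\cH\!\restriction_B)$ $\#_pP$-hard. The reduction $\#_p\CSP(\cH\!\restriction_B)\le_T\#_p\CSP(\cH)$ is exactly an instance of the $p$-mpp-reducibility from \cite{Kazeminia25:modular} quoted in the excerpt, since $B$ and each restricted relation are $p$-mpp-definable in $\cH$. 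The one subtlety is checking that ``no Mal'tsev in $\ang{\cH}_p$'' really does propagate down to one of the 2-element restrictions, which requires a short commutator/tame-congruence argument in the spirit of \cite{Bulatov11:conservative,Barto11:conservative}.

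For the algorithmic half, assume $p=2$ and that $\ang{\cH}_2$ has a Mal'tsev polymorphism $m$. The plan is to invoke, essentially verbatim, the result of \cite{Kazeminia25:modular} stating that if the set of $2$-mpp-definable relations of $\cH$ has a Mal'tsev polymorphism then $\#_2\CSP(\cH)$ is solvable in polynomial time. Concretely: given an instance $(V,\cC)$ of $\#_2\CSP(\cH)$, every constraint relation lies in $\ang{\cH}_2$, hence is preserved by $m$; the solution set of the instance is then a subalgebra of a power of a Mal'tsev algebra, so it admits a compact representation (a ``frame'' or ``witness'' in the sense of \cite{Bulatov06:simple,effective-Dyer-doi:10.1137/100811258}) computable in polynomial time, and from that representation one reads off $|Hom(\cG,\cH)|$, and a fortiori its residue mod $2$. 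Since $p=2$, the ``cancellation phenomenon'' and the partition-function obstacle flagged in the introduction do not arise, because the only multiplicative orders available in $\mathbb{F}_2$ are trivial, so no extra combinatorial gadget is needed beyond the Mal'tsev algorithm. The main thing to verify here is that $2$-mpp-definability of the constraint relations is enough to feed the Mal'tsev-counting machinery, which is precisely the content of the cited theorem, so this half is essentially a citation once the hardness half has been carved out.

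The main obstacle I expect is entirely in the hardness half, and specifically in the claim that failure of Mal'tsev in $\ang{\cH}_p$ can always be localized to a $2$-element $p$-subalgebra on which the known $2$-element hardness applies. In the decision and exact-counting worlds one has a clean Galois correspondence between polymorphisms and pp-definitions; in the modular world, as the excerpt stresses, $p$-mpp-definitions are \emph{not} captured by polymorphisms, so one cannot simply say ``$\ang{\cH}_p$ has no Mal'tsev polymorphism $\Rightarrow$ $\ang{\cH}_p$ contains a specific hard relation.'' The way around this is to use $p$-conservativity heavily: because every unary relation is a $p$-subalgebra, one has access to all $2$-element restrictions simultaneously, and a Mal'tsev polymorphism of $\ang{\cH}_p$ exists if and only if compatible Mal'tsev operations exist on all these $2$-element restrictions and glue — an argument that mirrors the conservative CSP dichotomy proofs of \cite{Bulatov11:conservative,Barto16:conservative} but must be re-run in the setting where ``compatible'' means ``preserves all $p$-mpp-definable relations.'' Getting this gluing/localization step right, and making sure it is insensitive to whether $p=2$ or $p\ge 3$ (the hardness direction is stated for all primes), is where the real work lies; the rest is assembly of results already in the excerpt and in \cite{faben2008complexity,Kazeminia25:modular}.
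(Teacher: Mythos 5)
Your algorithmic half is correct and essentially identical to the paper's: once $p=2$ and $\ang{\cH}_2$ has a Mal'tsev polymorphism, one cites the result from \cite{Kazeminia25:modular} (Theorem~\ref{the:parity-algorithm} in the excerpt) directly. The passage to a $p$-rigid representative via Proposition~\ref{pro:mult-uniqueness-main} is standard and is assumed throughout.

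The hardness half contains a genuine gap, and you have put your finger on exactly the right place without filling it. You propose to localize ``$\ang{\cH}_p$ has no Mal'tsev polymorphism'' to a $2$-element $p$-subalgebra $B$ on which the restricted structure lacks a Mal'tsev (or weaker) polymorphism, and then apply Faben's $2$-element classification, arguing via ``a short commutator/tame-congruence argument in the spirit of \cite{Bulatov11:conservative,Barto11:conservative}.'' This is not what the paper does, and the route as sketched does not obviously work. The difficulty is precisely the one the excerpt emphasizes: $\ang{\cH}_p$ is \emph{not} a relational clone (it is not closed under the usual operations that give the Galois connection with polymorphism clones, since $\exists^{\equiv p}x\,\exists^{\equiv p}y$ and $\exists^{\equiv p}x,y$ differ), so the tame-congruence/commutator apparatus, which lives entirely on the operational side of a Galois correspondence, has no clean foothold. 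In particular, it is not clear that failure of a Mal'tsev polymorphism of $\ang{\cH}_p$ manifests on any $2$-element $p$-subalgebra in a form that the $2$-element theory can detect, and the ``gluing/localization'' step you flag as the main obstacle is exactly where your argument would need a new idea.

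What the paper actually does is quite different and is the technical heart of the conservative case. Rather than localizing the clone-theoretic condition, it constructs an explicit witness of non-rectangularity: a $p$-mpp-definable \emph{$p$-indicator rectangularity obstruction}, built from the ternary indicator problem $\cI_3(\cH^{\dagger p})$, the M\"obius inversion lemma (Lemma~\ref{lem:mobius-point-appendix}), and an analysis of M-automorphisms of $\cH^3$ (Lemma~\ref{lem:M-definitions}). This is Theorem~\ref{the:conservative-malc}: under $p$-conservativity one can quantify away, one coordinate at a time, all the ``extra'' variables of $\Upsilon_3(\cH^{\dagger p})$ while preserving the Mal'tsev failure pattern $(\ba_\cH,\bc_\cH),(\bb_\cH,\bc_\cH),(\bb_\cH,\bd_\cH)\in\rel$, $(\ba_\cH,\bd_\cH)\not\in\rel$; the case analysis on the extension sets $B_1,B_2,B_3$ is where $p$-conservativity enters. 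Only after this obstruction is in hand does the paper restrict to $2$-element domains: each coordinate of the obstruction ranges over a $2$-element $p$-subalgebra (by conservativity), and intersecting with these yields a multi-sorted relation over $2$-element domains to which Proposition~\ref{pro:2-element-obstruction} applies. So the $2$-element reduction in the paper happens on the \emph{relational} side (restricting a concrete bad relation), not on the clone side (restricting the algebra); this is what sidesteps the missing Galois correspondence. Two smaller points: (i) you do not need $p$-automorphic polynomials anywhere in the conservative case, and the paper does not use them there; (ii) the $2$-element criterion is Mal'tsev (equivalently, affine), not ``semilattice or majority,'' which are decision-CSP conditions and do not give counting tractability.
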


In order to state the result for 3-element structures let $\cH^f$ be the structure from Proposition~\ref{pro:p-auto-poly-intro} constructed from $\cH$ using a $p$-automorphic polynomial $f$. Note that if $\cH$ is a 3-element structure, $\cH^f$ has at most 2 elements. 

\begin{theorem}\label{the:dichotomy-3-element-intro}
Let $\cH$ be a 3-element structure and $p$ a prime. If one of the following conditions holds:
\begin{enumerate}
\item[(a)] 
$\cH$ has a $p$-automorphic polynomial $f$ and $\ang{\cH^f}_p$ has a Mal'tsev polymorphism, or
\item[(b)] 
$\cH$ does not have a $p$-automorphic polynomial, $\ang{\cH}_p$ has a Mal'tsev polymorphism, and $p=2$,
\end{enumerate}
then $\#_p\CSP(\cH)$ is solvable in polynomial time. If
\begin{enumerate}
\item[(1)] 
$\cH$ has a $p$-automorphic polynomial $f$ and $\ang{\cH^f}_p$ does not have a Mal'tsev polymorphism, or
\item[(2)] 
$\cH$ does not have a $p$automorphic polynomial, $\ang{\cH}_p$ does not have a Mal'tsev polymorphism, 
\end{enumerate}
then $\#_p\CSP(\cH)$ is $\#_pP$-complete.
\end{theorem}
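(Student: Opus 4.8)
The plan is to treat the four conditions separately; since membership of $\#_p\CSP(\cH)$ in $\#_pP$ is automatic, for the hard cases it suffices to prove $\#_pP$-hardness. In the cases (a) and (1) the structure $\cH$ has a $p$-automorphic polynomial $f$, so Proposition~\ref{pro:p-auto-poly-intro} reduces $\#_p\CSP(\cH)$ to $\#_p\CSP(\cH^f)$ with $|H^f|\le 2$. If $|H^f|=1$ the target is trivial and $\ang{\cH^f}_p$ has a Mal'tsev polymorphism vacuously, so we are in case (a); hence in case (1) necessarily $|H^f|=2$. On a domain of size at most two a structure whose $p$-mpp-closure has a Mal'tsev polymorphism is affine over $\GF(2)$, because the clone generated by any Mal'tsev operation on a two-element set contains $x\op y\op z$ and hence $x\op y\op z$ itself preserves $\ang{\cH^f}_p$. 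Thus in case (a), $\#_p\CSP(\cH^f)$ asks to count modulo $p$ the solutions of a linear system over $\GF(2)$, which Gaussian elimination does in polynomial time, so $\#_p\CSP(\cH)\in\FP$. In case (1) the two-element structure $\cH^f$ is not affine, so $\#_p\CSP(\cH^f)$ is $\#_pP$-complete by Faben's classification of modular counting CSPs over two-element domains \cite{faben2008complexity}; the reverse reduction is supplied by the second part of Proposition~\ref{pro:p-auto-poly-intro} — it is enough that $H^f$ be a $p$-subalgebra of $\cH$, which one reads off from the construction of $\cH^f$ — so $\#_p\CSP(\cH)$ is $\#_pP$-complete. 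Finally, case (b) is immediate from the result of \cite{Kazeminia25:modular} that a Mal'tsev polymorphism of $\ang{\cH}_2$ implies $\#_2\CSP(\cH)\in\FP$; the assumption that $\cH$ has no $p$-automorphic polynomial only serves to keep the cases disjoint.

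\emph{Case (2)} is the heart of the argument: $\cH$ is a three-element structure with no $p$-automorphic polynomial and $\ang{\cH}_p$ has no Mal'tsev polymorphism, and we must derive $\#_pP$-hardness. The plan is a case analysis driven by the lattice of $p$-subalgebras of $\cH$, i.e.\ of the unary relations in $\ang{\cH}_p$. If all three singletons are $p$-subalgebras then $\cH$ is $p$-conservative and Theorem~\ref{the:conservative-hard-intro} applies at once. If $\cH$ has a two-element $p$-subalgebra $B$, then every relation of the induced structure $\cH|_B$ is $p$-mpp-definable in $\cH$, so $\#_p\CSP(\cH|_B)$ reduces to $\#_p\CSP(\cH)$; if $\ang{\cH|_B}_p$ has no Mal'tsev polymorphism we finish by the two-element classification \cite{faben2008complexity}, and otherwise $\cH|_B$ is affine and we combine the group operation $p$-mpp-definable on $B$ with a $p$-mpp-definable relation that links $B$ to the third element — such a relation exists and must violate rectangularity, for otherwise a Mal'tsev polymorphism of $\ang{\cH}_p$ could be assembled on all of $H$ — to encode a $\#_pP$-hard problem. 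In the remaining situations $\cH$ has essentially no proper $p$-subalgebra, so $\Pol(\cH)$ is a tightly constrained three-element idempotent clone, and we analyze it through tame congruence theory: the absence of a $p$-automorphic polynomial rules out the ``$G$-set'' behaviour with $p$ dividing the order of the acting group (that case is absorbed by Proposition~\ref{pro:p-auto-poly-intro}), the absence of a Mal'tsev polymorphism of $\ang{\cH}_p$ rules out the affine behaviour, and what is left is forced to $p$-mpp-define either the graph of a semilattice operation, yielding a reduction from a $\#_pP$-hard counting problem, or a non-rectangular binary relation, yielding a reduction from a modular partition function known to be $\#_pP$-hard in the relevant cases (for $p=2$ these are classified in \cite{DBLP:conf/stoc/BulatovK22}).

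\emph{Main obstacle.} Throughout case (2) the delicate point is that $\ang{\cH}_p$ need not coincide with $\Inv(\Pol(\ang{\cH}_p))$ — $p$-mpp-definitions are not captured by polymorphisms — whereas only $p$-mpp-definable relations yield reductions. Hence the familiar chain ``no Mal'tsev polymorphism $\Rightarrow$ a witnessing hard relation is pp-definable $\Rightarrow$ hardness'' must be re-derived with the witnessing relation genuinely $p$-mpp-definable, and carrying this out in the no-proper-$p$-subalgebra case, where the polymorphism machinery is weakest and the three-element structural analysis in the spirit of \cite{Bulatov06:3-element} has to be redone modulo $p$, is where most of the work lies. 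It is also the reason the classification stops short of a full dichotomy: the regime in which $\cH$ has no $p$-automorphic polynomial, $\ang{\cH}_p$ has a Mal'tsev polymorphism, and $p\ge 3$ hinges on the complexity of modular partition functions over $\GF(p)$, which is currently open.
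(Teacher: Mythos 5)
The high-level scaffolding you set up is right -- dispatch (a)/(1) through the $p$-automorphic-polynomial reduction to a $\le 2$-element structure, dispatch (b) via the Mal'tsev tractability result, and concentrate on the hardness in case (2) -- but there are genuine gaps both in the easy cases and, more seriously, in the heart of the argument.

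For (a) and (1) you correctly note that the tractability direction needs a \emph{reverse} reduction from $\#_p\CSP(\cH^f)$ back to $\#_p\CSP(\cH)$, but you dismiss the requirement with ``it is enough that $H^f$ be a $p$-subalgebra, which one reads off from the construction.'' That is not something one can read off: what is actually needed is that both $\{a\}$ and $H\setminus\{a\}$ (where $f(a,\cdot)$ is the order-$p$ permutation) be $p$-subalgebras, which is exactly the extra hypothesis of Corollary~\ref{cor:3-element-auto-poly}; and establishing that hypothesis is itself a consequence of clause~(3) of the structural trichotomy Theorem~\ref{the:3-elem-maltsev}. Moreover $\cH^f$ is in general a \emph{multi-sorted} structure (the construction splits the 3-element domain into two domains $H'_i$ and $H''_i$), so your inference ``Mal'tsev on a two-element set $\Rightarrow$ contains $x\oplus y\oplus z$ $\Rightarrow$ affine over $\GF(2)$'' does not immediately give a single uniform minority operation; Proposition~\ref{pro:2-element-obstruction} does nontrivial clone-theoretic work to manufacture one operation that acts as minority on every domain simultaneously.

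For case (2) the proposal diverges substantially from what can be made to work and is not carried out. You invoke tame congruence theory and claim that ``the absence of a $p$-automorphic polynomial rules out the `$G$-set' behaviour'' -- but a $p$-automorphic polynomial is a rather specific binary polymorphism, not the same thing as a type-$\mathbf{1}$ minimal algebra, and no bridge between the two is supplied. You also assert that $\ang{\cH|_B}_p$ reduces to $\ang{\cH}_p$ for a 2-element $p$-subalgebra $B$ and that a non-rectangular link to the third element must exist ``for otherwise a Mal'tsev polymorphism of $\ang{\cH}_p$ could be assembled on all of $H$''; neither assembly step is justified, and the paper does not argue this way. What the paper actually does is first establish (Theorem~\ref{the:3-elem-maltsev}) that a 3-element $\cH$ with no Mal'tsev polymorphism on $\ang{\cH}_p$ and no $p$-automorphic polynomial \emph{admits a $p$-mpp-definable $p$-indicator rectangularity obstruction}, then runs an arity-reduction argument on that obstruction, organized by the number of 2-element $p$-subalgebras of $\cH$ (conservative / two / one / none), each step projecting out one coordinate while preserving the obstruction; this bottoms out at a binary non-rectangular $p$-mpp-definable relation, and Lemma~\ref{lem:3-element-eval-appendix} (via the modular partition-function classification of \cite{DBLP:conf/stoc/BulatovK22}, plus a separate handling of the $p=3$ matrix with a 3-automorphism) converts that into $\#_pP$-hardness. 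Your ``main obstacle'' paragraph correctly identifies why polymorphism-based shortcuts fail, but the proposal does not actually overcome the obstacle it names: without Theorem~\ref{the:3-elem-maltsev}, or a replacement for it, you have no $p$-mpp-definable witness to reduce, and the TCT sketch does not produce one.
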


Since $\cH^f$ has at most 2 elements, the cases when $\cH$ has a $p$-automorphic polynomial can be dealt with using an extension of the results by Faben \cite{faben2008complexity}, see Proposition~\ref{pro:2-element-obstruction}.

Apart from the introduction of $p$-automorphic, the main technical contribution of the paper is the following. The property of relations with a Mal'tsev polymorphism that makes Mal'tsev polymorphisms useful is rectangularity. A relation $\rel$ (at least binary) is said to be \emph{rectangular} if for any partition of the coordinate set of $\rel$ into two nonempty parts for any tuples $\ba,\bb$ from the first part and $\bc,\bd$ from the second part, if $(\ba, \bc),(\bb, \bc),(\bb, \bd) \in \rel$, then $(\ba, \bd)\in \rel$. If a structure $\cH$ has a Mal'tsev polymorphism then any relation $\rel\in\ang{\cH}$ is rectangular, a property that is sometimes referred to as \emph{strong rectangularity}. It is the lack of rectangularity that often makes a counting CSP hard, as it was shown in \cite{ref:BULATOV_TowardDichotomy}. In the case of modular counting a similar result is also true, although highly nontrivial, provided a non-rectangular relation is in $\cH$, see \cite{DBLP:conf/stoc/BulatovK22}. The major difficulty though has been obtain hardness from non-rectangular relations in $\ang{\cH}$. Here we found a way to use $\ang{\cH}_p$ instead, while still using a Mal'tsev polymorphism.

\section{Preliminaries}
Let $[n]$ denote the set $\{1, 2, \ldots, n\}$. Let $H^n$ be the direct product of the set $H$ with itself $n$ times and $H_1\tms H_n$ the direct product of sets $\vc Hn$. We denote the members of $H^n$ and $H_1\tms H_n$ using bold font, $\mathbf{a} \in H^n$, $\ba\in H_1\tms H_n$. The $i$-th entry of $\mathbf{a}$ is denoted by~$\mathbf{a}[i]$. 
For $I = \{ \vc ik \} \subseteq [n]$, we write $\pr_I \ba$ for the tuple $(\ba[i_1], \dots, \ba[i_k])$, and for a relation $\rel\sse H^n$ or $\rel \subseteq H_1 \times \dots \times H_n$, we define $\pr_I \rel = \{ \pr_I \ba \mid \ba \in \rel \}$. The \emph{arity} of $\rel$, denoted by $\ar(\rel)$, is defined to be $n$, the length of the tuples in $\rel$.
For $\ba \in H_1\tms H_s$, $s\in[n]$, let $\Ext_\rel(\ba) =  \{ \bb 
\in H_{s+1} \tms H_n | (\ba , \bb) \in \rel\}$, and by $\ext_\rel(\ba)$ we denote the cardinality of $\Ext_\rel(\ba)$. (Note that the order of elements in $\ba$ and $\bb$ and $\rel$ might differ, hence we slightly abuse the notation here.) For a prime $p$ we denote the cardinality of $\Ext_\rel(\ba)$ mod $p$ by $\extp_\rel(\ba)$. Moreover, $\prp_{I}\rel$ denotes the set 
$
\{\pr_{I} \ba\mid\ba \in \rel \text{ and } \extp_{\rel}(\pr_{I} \ba) \neq 0  \}
$.
Often, we treat relations $\rel\sse H_1\tms H_n$ as predicates $\rel: H_1\tms H_n\to \{0,1\}$.

\subsection{Relational structures}\label{sec:structures}

We begin by introducing \emph{multi-sorted} sets. Let $H =\colect{H_i}{i\in [k]}$ be a collection of sets. We assume that the sets $\vc Hk$ are disjoint. 
%
A mapping $\vf$ between two multi-sorted sets $G= \colect{G_i}{i\in[k]}$ and $H= \colect{H_i}{i\in[k]}$ is defined as a collection of mappings $\vf=\colect{\vf_i}{i\in[k]}$, where $\vf_i:G_i\to H_i$, that is, $\vf_i$ maps elements of the sort $i$ in $G$ to elements of the sort $i$ in $H$. 
A mapping $\vf = \colect{\vf_i}{i\in[k]}$ from $\colect{G_i}{i\in[k]}$ to $\colect{H_i}{i\in [k]}$ is injective (bijective), if for all $i\in [k]$, $\vf_i$ is injective (bijective).

A \emph{multi-sorted relational signature} $\sg$ over a set of types $[k]$ is a collection of \emph{relational symbols}, a symbol $\rel\in\sg$ is assigned a positive integer $\ell_\rel$, the \emph{arity} of the symbol denoted $\ar(\rel)$, and a tuple $(\vc i{\ell_\rel})\in[k]^{\ell_\rel}$, the \emph{type} of the symbol. A \emph{multi-sorted relational structure} $\cH$ with signature $\sg$ is a multi-sorted set $\{H_i\}_{i \in [k]}$ and an \emph{interpretation} $\rel^\cH$ of each $\rel\in\sg$, where $\rel^\cH$ is a relation or a predicate on $H_{i_1} \times ... \times H_{i_{\ell_\rel}}$. 
The multi-sorted structure $\cH$ is finite if $H$ contains finitely many finite domains,  and $\sg$ is finite. All structures in this paper are finite. The set $H$ is said to be the \emph{base set} or the \emph{universe} of $\cH$. For the base set we will use the same letter as for the structure, only in the regular font. 
Multi-sorted structures with the same signature and types are called \emph{similar}. 

Let $\cH = (\{H_i\}_{i \in [k]}; \rel_1, \dots, \rel_\ell)$ be a relational structure. We use $\Sp(\cH)$ to denote its \emph{spectrum}, that is, an infinite sequence $(n_1,n_2,\dots)$ where $n_j$ is the number of domains of cardinality $j$. If the structure is finite, $\Sp(\cH)$ is essentially a finite sequence, and we will truncate it by removing the trailing zeroes. Let $\prec$ denote the lexicographic order on the set of such sequences assuming later entries are more senior.

Let $\cG,\cH$ be similar multi-sorted structures with signature $\sg$. A \emph{homomorphism} $\vf$ from $\cG$ to $\cH$ is a mapping $\vf:G\to H$ such that for any $\rel\in\sg$ with type $(\vc i{\ell_\rel})$, if $\rel^{\cG}(\vc a{\ell_\rel})$ is true for $(\vc a{\ell_\rel})\in G_{i_1} \times ... \times G_{i_{\ell_\rel}}$, then $\rel^{\cH}(\vf_{i_1}(a_1)\zd\vf_{i_{\ell_\rel}}(a_{\ell_\rel}))$ is true as well. 
The set of all homomorphisms from $\cG$ to $\cH$ is denoted $\Hom(\cG,\cH)$. The cardinality of $\Hom(\cG,\cH)$ is denoted by $\hom(\cG,\cH)$. 
A homomorphism $\vf$ is an \emph{isomorphism} if it is bijective and the inverse mapping $\vf^{-1}$ is a homomorphism from $\cH$ to $\cG$. If $\cH$ and $\cG$ are isomorphic, we write $\cH\cong\cG$. A homomorphism of a structure to itself is called an \emph{endomorphism}, and an isomorphism to itself is called an \emph{automorphism}. As is easily seen, automorphisms of a structure $\cH$ form a group denoted $\Aut(\cH)$. Let $\pi$ be an automorphism of $\cH$. By $\Fix(\pi)$ we denote the collection $\{\Fix(\pi_i) \}_{j\in[k]}$ of sets of fixed points of the $\pi_i$'s. For $\relo\sse H_{i_1}\tm\dots\tm H_{i_\ell}$, we use $\pi(\relo)$ to denote the set $\{\pi(\ba)\mid \ba\in\relo\}$, where $\pi(\ba)=(\pi_{i_1}(\ba[1])\zd\pi_{i_\ell}(\ba[\ell]))$.
For a prime number $p$ we say that $\pi$ has order $p$ or is a \emph{$p$-automorphism} if $\pi$ is not the identity in $\Aut(\cH)$ and has order $p$ in $\Aut(\cH)$. In other words, each of the $\pi_i$'s is either the identity mapping or has order $p$, and at least one of the $\pi_i$'s is not the identity mapping. Structure $\cH$ is said to be \emph{$p$-rigid} if it has no $p$-automorphisms. 

\begin{proposition}[\cite{Kazeminia25:modular}]\label{pro:mult-uniqueness-main}
Let $\cH$ be a multi-sorted  structure and $p$ a prime. Then up to an isomorphism there exists a unique $p$-rigid multi-sorted  structure $\cH^{*p}$ such that for any structure $\cG$ similar to $\cH$ it holds that $\hom(\cG,\cH)=\hom(\cG,\cH^{*p})$.
\end{proposition}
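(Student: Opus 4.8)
The plan is to follow the Faben--Jerrum strategy for graph homomorphisms, carried out sort by sort: first show that iterated quotienting by $p$-automorphisms yields a $p$-rigid structure with the same homomorphism counts modulo $p$ (so the asserted identity is to be read modulo $p$), and then show that a $p$-rigid structure is pinned down up to isomorphism by its homomorphism counts modulo $p$. For the existence step, let $\pi=\{\pi_i\}$ be a $p$-automorphism of $\cH$ and $\cG$ any similar structure. The map $\vf\mapsto\pi\circ\vf$, meaning $\{\pi_i\circ\vf_i\}$, is a well-defined permutation of $\Hom(\cG,\cH)$ because $\pi$ is an automorphism, and it generates an action of the cyclic group of order $p$; every orbit has size $1$ or $p$, and $\vf$ has a singleton orbit exactly when $\vf_i(G_i)\subseteq\Fix(\pi_i)$ for all $i$, that is, exactly when $\vf\in\Hom(\cG,\cH^\pi)$. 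Counting $\Hom(\cG,\cH)$ by orbits gives $\hom(\cG,\cH)\equiv\hom(\cG,\cH^\pi)\pmod p$. Since at least one $\pi_i$ moves a point, $\cH^\pi$ is similar to $\cH$ and $\sum_i|H_i^{\cH^\pi}|<\sum_i|H_i^{\cH}|$, so iterating this reduction terminates at a $p$-rigid structure $\cH^{*p}$ with $\hom(\cG,\cH)\equiv\hom(\cG,\cH^{*p})\pmod p$ for every similar $\cG$.

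Next I would pass from homomorphism counts to injective-homomorphism counts. Classifying homomorphisms $\cG\to\cH$ by their kernel $\theta=\{\theta_i\}$, a collection of equivalence relations on the $G_i$'s, yields $\hom(\cG,\cH)=\sum_\theta\inj(\cG/\theta,\cH)$, where $\cG/\theta$ is the quotient structure (again similar to $\cH$) and $\theta$ ranges over the lattice of such sortwise equivalence relations, which is a finite product of partition lattices. Applying this to every quotient of $\cG$ and M\"obius-inverting over this lattice expresses $\inj(\cG,\cH)$ as a fixed integer combination of the numbers $\hom(\cG/\theta,\cH)$. Hence, if two similar structures $\cH_1,\cH_2$ satisfy $\hom(\cG',\cH_1)\equiv\hom(\cG',\cH_2)\pmod p$ for all structures $\cG'$ similar to them, then $\inj(\cG,\cH_1)\equiv\inj(\cG,\cH_2)\pmod p$ for all such $\cG$.

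Finally, uniqueness. A key observation is that an injective endomorphism $h$ of a finite structure is automatically an automorphism: each $h_i$ is a bijection, the endomorphism monoid is finite, hence $h^k=\mathrm{id}$ for some $k\ge 1$, so $h^{-1}=h^{k-1}$ is itself an endomorphism. Consequently $\inj(\cH,\cH)=|\Aut(\cH)|$, and if $\cH$ is $p$-rigid then $p\nmid|\Aut(\cH)|$, because by Cauchy's theorem an element of order $p$ in $\Aut(\cH)$ would be a $p$-automorphism. Now let $\cH_1,\cH_2$ be $p$-rigid, similar, and hom-equivalent modulo $p$. Taking $\cG=\cH_1$ in the previous paragraph gives $\inj(\cH_1,\cH_1)=|\Aut(\cH_1)|\not\equiv 0\pmod p$, so $\inj(\cH_1,\cH_2)\ne 0$ and there is an embedding $f\colon\cH_1\hookrightarrow\cH_2$; symmetrically there is an embedding $g\colon\cH_2\hookrightarrow\cH_1$. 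Then $g\circ f$ and $f\circ g$ are injective endomorphisms, hence automorphisms, which forces $|H_i^{\cH_1}|=|H_i^{\cH_2}|$ for every sort $i$, so each $f_i$ is a bijection; and $f^{-1}=(g\circ f)^{-1}\circ g$ is a homomorphism, so $f$ is an isomorphism and $\cH_1\cong\cH_2$. Applied to the structure built in the existence step, this shows it is independent of the choice of $p$-automorphisms, which gives both existence and uniqueness. I expect the existence/reduction step to be routine; the real work is the $\hom$-to-$\inj$ conversion via M\"obius inversion together with the group-theoretic fact that $p$-rigidity is equivalent to $p\nmid|\Aut(\cH)|$.
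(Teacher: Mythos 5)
Your proof is correct and follows the standard Faben--Jerrum strategy, lifted sort by sort to the multi-sorted setting. The paper does not reproduce a proof for this proposition (it is cited from \cite{Kazeminia25:modular}), but your argument matches the approach one would expect there, and is also consistent with the M\"obius-inversion-over-partitions machinery the present paper itself uses in the proof of Lemma~\ref{lem:mobius-point-appendix}. You also correctly noticed that the displayed identity $\hom(\cG,\cH)=\hom(\cG,\cH^{*p})$ must be read as a congruence modulo~$p$. The three ingredients you assemble are exactly the right ones: (i) the orbit-counting argument for a single $p$-automorphism, using that the fixed homomorphisms are precisely those landing in $\Fix(\pi)$; (ii) conversion from $\hom$ to $\inj$ via M\"obius inversion over the product of partition lattices, one factor per sort; (iii) the Lov\'asz-style uniqueness argument, using that injective endomorphisms of finite structures are automorphisms, that $p$-rigidity is equivalent to $p\nmid|\Aut(\cH)|$ by Cauchy, and that mutual nonzero injective counts plus finiteness force an isomorphism. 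The small point that $\Fix(\pi_i)$ might be empty for some sort does not cause trouble, as the resulting structure is still well-defined and the counting argument goes through unchanged. No gaps.
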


The \emph{direct product} of multi-sorted $\sg$-structures $\cH,\cG$, denoted $\cH\tm\cG$ is the multi-sorted $\sg$-structure with the base set $H\tm G=\colect{H_i \times G_i}{i\in[k]}$, the interpretation of $\rel\in\sg$ is given by $\rel^{\cH\tm\cG}((a_1,b_1)\zd(a_k,b_k))=1$ if and only if $\rel^{\cH}(\vc ak)=1$ and $\rel^{\cG}(\vc bk)=1$. By $\cH^\ell$ we will denote the \emph{$\ell$th power} of $\cH$, that is, the direct product of $\ell$ copies of $\cH$. A \emph{substructure} $\cH'$ of $\cH$ \emph{induced} by a collection of sets $\{H'_i\}_{i\in[k]}$, where $H'_i\sse H_i$, is the relational structure given by $(\colect{H'_i}{i\in[k]};\vc{\rel'}m)$, where $\rel'_j=\rel_j\cap(H'_{i_1}\tm\dots\tm H'_{i_\ell})$ and $(\vc i\ell)$ is the type of $\rel_j$.

\subsection{The CSP and Modular Counting}

There are two standard formulations of the Constraint Satisfaction Problem (CSP).
Let\lb $\cH = (\{H_i\}_{i \in [k]}; \rel^\cH_1\zd\rel^\cH_\ell)$ be a multi-sorted relational structure. The problem $\CSP(\cH)$ asks, given a relational structure $\cG = (\{G_i\}_{i \in [k]}; \rel^\cG_1\zd\rel^\cG_\ell)$ similar to $\cH$ whether there exists a homomorphism $\vf$ from $\cG$ to $\cH$.

The other standard definition does not involve relational structures. Let $H = \{ H\}_{i \in [k]}$ be a multi-sorted domain and $\Gamma$ a set of multi-sorted relations over $H$, called a \emph{constraint language}. An instance $\cP = (V, \tau, \cC)$ of $\CSP(\Gamma)$ consists of:
\begin{itemize}
    \item a finite set of variables $V$,
    \item a type function $\tau: V \to [k]$ assigning each variable a sort,
    \item a finite set of constraints $\cC$, where each constraint is a pair $\langle \bs, \rel \rangle$ with $\rel \in \Gamma$ of arity $m$ and $\bs = (v_1, \ldots, v_m)$ a tuple of variables whose sorts match the type of $\rel$.
\end{itemize}
A \emph{solution} is a mapping $\varphi : V \to \bigcup_{i\in[k]}H_i$ such that for all $v \in V$, $\varphi(v) \in H_{\tau(v)}$, and for every constraint $\langle \bs, \rel \rangle$, the tuple $(\varphi(v_1), \ldots, \varphi(v_m))$ belongs to $\rel$.

For a structure $\cH = (\{H_i\}_{i \in [k]}; \rel^\cH_1\zd\rel^\cH_\ell)$, let $\Gm_\cH=\{\rel^\cH_1\zd\rel^\cH_\ell\}$. It is well known, see e.g.\ \cite{Feder98:computational} and \cite{ref:POlymorphismAndUsethem_barto2017polymorphisms}, that the problems $\CSP(\cH)$ and $\CSP(\Gm_\cH)$ can be easily translated into each other. The same holds for the counting versions $\#\CSP(\cH)$ and $\#\CSP(\Gm_\cH)$, as well as the modular counting versions $\#_p\CSP(\cH)$ and $\#_p\CSP(\Gm_\cH)$. The conversion procedure is as follows.  Given an instance $\cG$ of $\CSP(\cH)$ with signature $\sg$, construct an instance $\cP = (V, \tau, \cC)$ of $\CSP(\Gm_\cH)$ by setting $V = \bigcup_{i \in [k]} G_i$, with the type function $\tau$ induced by the sorts of $\cG$, and for every relation $\rel \in \sg$ and tuple $\bs \in \rel^\cG$, include the constraint $\langle \bs, \rel^\cH \rangle$ in $\cC$. Conversely, for a finite constraint language $\Gm=\{\vc\rel\ell\}$ over $H$ we can construct a structure $\cH_\Gm=(H,\vc\rel\ell)$ with an appropriately selected signature. Then any instance of $\CSP(\Gm)$ can be transformed into an instance of $\CSP(\cH_\Gm)$  by reversing this construction. This correspondence naturally extends to the counting and modular counting versions.

\subsection{Expansions of Relational Structures and the CSP}

A (multi-sorted) relational structure $\cH'=(H;\vc{\rel'}{\ell'})$, $H=\colect{H_i}{i\in[k]}$, is an \emph{expansion} of a relational structure $\cH=(H;\vc\rel\ell)$, if $\ell'\ge\ell$ and $\rel'_i=\rel_i$ for $i\in[\ell]$. It will be convenient to denote $\cH'$ by $\cH+\{\rel_{\ell+1}\zd\rel_{\ell'}\}$ or just $\cH+\rel_{\ell'}$ if $\ell'=\ell+1$. Several types of expansions are often used in the context of the CSP.

Let $\cH=(\{\cH_i\}_{i\in[k]}; \vc{\rel}{\ell})$ be a multi-sorted structure with signature $\sg$ and $\cH^=$ its expansion by adding a family of binary relational symbols $=_{H_i}$ (one for each type) interpreted as the equality relation on $H_i$, $i\in[k]$. A \emph{constant relation} over a set $\{H_i\}_{i\in[k]}$ is a unary relation $C_{H_i,a}=\{a\}$, $a\in H_i,i\in[k]$ (such a predicate can only be applied to variables of type $i$). 
For a structure $\cH$ by $\cH^\const$ we denote the expansion of $\cH$ by all the constant relations $C_{H_i,a}$, $a\in H_i,i\in[k]$. 

\begin{proposition}[\cite{Kazeminia25:modular}]\label{pro:ConstantCSP-main}
Let $\cH$ be a multi-sorted relational structure and $p$ prime.\\[1mm]
(1) $\#_p\CSP(\cH^=)$ is polynomial time reducible to $\#_p\CSP(\cH)$;\\[1mm]
(2) Let $\cH$ be $p$-rigid. Then $\#_p\CSP(\cH^\const)$ is polynomial time reducible to $\#_p\CSP(\cH)$. 
\end{proposition}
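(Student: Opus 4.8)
The plan is to prove the two parts by rather different means: part (1) by a parsimonious reduction that contracts equality constraints, and part (2) by a Turing reduction built from a ``frame'' gadget together with an automorphism-counting argument.

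For part (1), convert a given instance of $\#_p\CSP(\cH^=)$ into the list form $\cP=(V,\tau,\cC)$ over the language $\Gm_\cH\cup\{=_{H_i}\mid i\in[k]\}$, and let $\sim$ be the equivalence relation on $V$ generated by the equality constraints (computed by union--find). Since the symbol $=_{H_i}$ has type $(i,i)$, every $\sim$-class is sort-homogeneous, so $\tau$ descends to $V/{\sim}$; let $\cP'=(V/{\sim},\bar\tau,\cC')$ be obtained by deleting the equality constraints from $\cC$ and replacing every variable in the remaining constraints by its $\sim$-class. A map $V\to H$ is a solution of $\cP$ precisely when it is constant on $\sim$-classes and the induced map on $V/{\sim}$ is a solution of $\cP'$, so $\cP$ and $\cP'$ have the same number of solutions (hence the same count modulo $p$). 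Translating $\cP'$ back into an instance of $\#_p\CSP(\cH)$ completes the (polynomial-time, parsimonious) reduction.

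For part (2), assume $\cH$ is $p$-rigid. Given an instance $\cP$ of $\#_p\CSP(\cH^\const)$, first normalize the constant constraints: two conflicting constants on one variable force the answer $0$, so we may assume each variable $v$ in some subset $V_c\sse V$ carries exactly one constant constraint, pinning it to $a_v\in H_{\tau(v)}$. Now add a \emph{frame}: fresh variables $z_{i,h}$ for every $i\in[k]$ and $h\in H_i$, and, for each relational symbol $\rel$ of $\cH$ of type $(i_1,\dots,i_m)$, the constraint $\langle(z_{i_1,h_1},\dots,z_{i_m,h_m}),\rel^\cH\rangle$ for every $(h_1,\dots,h_m)\in\rel^\cH$. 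In any solution the values of the $z_{i,h}$ form the graph of an endomorphism of $\cH$. Replace the constant constraint on each $v\in V_c$ by the equality constraint $v=_{H_{\tau(v)}}z_{\tau(v),a_v}$, and remove the equalities using part (1); call the resulting instance of $\#_p\CSP(\cH)$ $\cP'$. A solution of $\cP'$ is a pair $(\alpha,\psi)$ with $\alpha$ an endomorphism of $\cH$ (read off the $z$'s), $\psi$ satisfying all non-constant constraints of $\cP$, and $\psi(v)=\alpha(a_v)$ for all $v\in V_c$; writing $f(\cdot)$ for the number of solutions of an instance and $\cP^\alpha$ for $\cP$ with each $a_v$ replaced by $\alpha(a_v)$, this yields
\[
\#(\text{solutions of }\cP')=\sum_{\alpha\in\End(\cH)}f(\cP^\alpha).
\]
When $\alpha\in\Aut(\cH)$, the bijection $\psi\mapsto\alpha^{-1}\circ\psi$ shows $f(\cP^\alpha)=f(\cP)$. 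Since $\cH$ is $p$-rigid, $\Aut(\cH)$ contains no element of order $p$, so $p\nmid|\Aut(\cH)|$ by Cauchy's theorem and $|\Aut(\cH)|$ is invertible modulo $p$; hence
\[
\#(\text{solutions of }\cP')\equiv|\Aut(\cH)|\cdot f(\cP)+\!\!\sum_{\alpha\in\End(\cH)\setminus\Aut(\cH)}\!\!f(\cP^\alpha)\pmod p .
\]
If $\cH$ is a core the last sum is empty, and one oracle call together with multiplication by $|\Aut(\cH)|^{-1}\bmod p$ recovers $f(\cP)\bmod p$.

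The main obstacle is the sum over non-surjective endomorphisms when $\cH$ is not a core. Each such $f(\cP^\alpha)$ is again the answer to an instance of $\#_p\CSP(\cH^\const)$, but one whose pinned values all lie in the proper substructure induced by the image of $\alpha$; moreover $\cP^\alpha$ depends only on the restriction of $\alpha$ to the fixed, bounded-size set $S=\{a_v\mid v\in V_c\}$. The plan is to apply the gadget not just to $\cP$ but to every repinning $\cP^{[\beta]}$ of $\cP$, with $\beta$ ranging over the constantly many maps in $\{\delta|_S\mid\delta\in\End(\cH)\}$; this produces a linear system over $\mathbb{F}_p$ whose unknowns are the values $f(\cP^{[\beta]})$, whose coefficients are the integers $|\{\alpha\in\End(\cH)\mid(\alpha\gamma)|_S=\beta\}|$, and whose right-hand sides are oracle-computable, and solving it yields $f(\cP)=f(\cP^{[\id_S]})$. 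Showing that this system is solvable modulo $p$ --- for instance by induction, peeling off repinnings supported on ever smaller images down to the constant-free problem $\#_p\CSP(\cH)$, with $p$-rigidity ensuring that the relevant $|\Aut(\cdot)|$-type factors stay invertible at each stage --- is the technical heart of the argument.
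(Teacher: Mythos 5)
Part (1) is correct and is the standard parsimonious contraction of equality constraints; nothing to add there.

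Part (2) is where the proposal breaks down, and you correctly flag the spot yourself: the core case is fine, but the treatment of non-surjective endomorphisms is not a proof. Two concrete issues with the proposed repair. First, the linear system you set up has coefficient matrix with entries of the shape $|\{\alpha\in\End(\cH)\mid(\alpha\gamma)|_S=\beta\}|$, and there is no a priori reason this matrix is invertible over $\mathbb{F}_p$; invertibility would itself be a nontrivial theorem and nothing in $p$-rigidity of $\cH$ obviously delivers it. Second, the proposed induction ``peeling off repinnings supported on ever smaller images'' silently requires a $p$-rigidity-type statement for the substructures that arise as images of endomorphisms, but $p$-rigidity of $\cH$ does \emph{not} pass to its proper substructures: an image $\alpha(\cH)$ can perfectly well have a $p$-automorphism even when $\cH$ has none. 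So the ``relevant $|\Aut(\cdot)|$-type factors stay invertible'' step is exactly where the argument can fail, and it is not patched by anything you have written.

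The route the cited paper takes (and the one that has become standard since Faben--Jerrum and G\"obel--Goldberg--Richerby) avoids endomorphisms altogether and instead does M\"obius inversion over the lattice $\Part(H)$ of equivalence relations on the base set, exactly the machinery this paper already exposes in the proof of Lemma~\ref{lem:mobius-point-appendix}. Concretely, to evaluate $\hom\bigl((\cG,\vc xs),(\cH,\vc as)\bigr)\bmod p$, for each $\theta\in\Part(H)$ one glues the quotient $\bigl(\cH/_\theta,\ \ba/_\theta\bigr)$ to $(\cG,\vc xs)$ along the distinguished vertices and queries the oracle on the resulting constant-free instance; the M\"obius-weighted sum over $\theta$ telescopes, via $M(\theta)=\sum_{\eta\geq\theta}N(\eta)$ with $N$ counting injective extensions, to $\hom\bigl((\cG,\vc xs),(\cH,\vc as)\bigr)$ times a factor that is the order of a subgroup of $\Aut(\cH)$. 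Since $\cH$ is $p$-rigid, that factor is nonzero modulo $p$ by Cauchy, and one division recovers the pinned count. Crucially this argument only ever invokes $p$-rigidity of $\cH$ itself, never of any image substructure, so the core/non-core distinction that sinks your gadget simply does not arise. Your frame gadget is a perfectly good way to see \emph{why} the automorphism count shows up and why $p$-rigidity is the right hypothesis, but as a proof it needs to be replaced by (or upgraded to) the partition-lattice inversion.
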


Observe that if a relational structure contains all the constant relations, in particular, any structure of the form $\cH^\const$, it is obviously rigid, as every automorphism should respect every $C_{H_i,a}$ and map $a$ to $a$, and therefore also $p$-rigid for any $p$.

Using Proposition~\ref{pro:mult-uniqueness-main} we may assume that all the structures we are dealing with are $p$-rigid. Then, by Proposition~\ref{pro:ConstantCSP-main}(2) we may assume they contain all the constants, that is, $\cH=\cH^\const$. We will use this assumption from now on. 

Primitive-positive definitions play a central role in the CSP research. It has been proved in multiple circumstances that expanding a relational structure with pp-definable relations does not change the complexity of the corresponding CSP. This has been proved for the decision CSP in \cite{Jeavons:algebraic,Bulatov05:classifying}, the exact counting CSP \cite{ref:BULATOV_TowardDichotomy}, and in certain cases (where relational structures are expansions of graphs) \cite{DBLP:conf/stoc/BulatovK22} of modular counting CSP. The reader is referred to \cite{ref:POlymorphismAndUsethem_barto2017polymorphisms} for details about primitive positive definitions and their use in the study of the CSP.

Let $\cH$ be a multi-sorted relational structure with the base set $H$. A \emph{primitive positive} (pp-) formula in $\cH$ is a first-order formula 
\[
\exists \vc ys\Phi(\vc xk,\vc ys),
\]
where $\Phi$ is a conjunction of atomic formulas of the form $z_1=_Hz_2$ or $\rel(\vc z\ell)$, $\vc z\ell$\lb$\in\{\vc xk,\vc ys\}$, and $\rel$ is a predicate of $\cH$. Every variable in $\Phi$ is assigned a type $\tau(x_i),\tau(y_j)$ in such a way that for every atomic formula $z_1=_Hz_2$ it holds that $\tau(z_1)=\tau(z_2)$, and for any atomic formula $\rel(\vc z\ell)$ the sequence $(\tau(z_1)\zd\tau(z_\ell))$ matches the type of $\rel$. We say that $\cH$ \emph{pp-defines} a predicate $\relo$ if there exists a pp-formula such that
\[
\relo(\vc xk)=\exists \vc ys\Phi(\vc xk,\vc ys).
\]
The set of all pp-definable relations in the (multi-sorted) relational structure $\cH$ is denoted by $\langle \cH \rangle$.

For $\ba\in\relo$ by $\ext_\Phi(\ba)$ we denote the number of assignments $\bb\in H_{\tau(y_1)}\tms H_{\tau(y_s)}$ to $\vc ys$ such that $\Phi(\ba,\bb)$ is true. We denote the number of such assignments modulo $p$ by $\extp_\Phi(\ba)$.

As was shown in \cite{Kazeminia25:modular}, pp-definitions are not always compatible with modular counting in the sense that will be made precise later. The concept of pp-definitions adapted to modular counting is that of \emph{modular pp-definitions}, or \emph{$p$-mpp-definitions}, where $p$ is the modulus for counting. For a prime number $p$ the \emph{$p$-modular} quantifier $\existspr$ is interpreted as follows
\begin{align}
\rel(\vc xk)=&\existspr\vc ys \Phi(\vc xk,\vc ys) \text{ is true } \nonumber \\ 
&\Leftrightarrow  \ext_\Phi(\vc xk) \not \equiv 0 \pmod{p}. \label{equ:p-mpp} 
\end{align}
A primitive positive formula that uses $p$-modular quantifiers instead of regular ones is called a \emph{$p$-modular primitive positive} (or \emph{$p$-mpp} for short). The relation it defines is said to be \emph{$p$-mpp-definable}, and the definition itself is called a \emph{$p$-mpp-definition}. 
The set of all $p$-mpp-definable relations in $\cH$ is denoted by $\langle \cH \rangle _p$. Note that modular quantifiers are not as robust as regular ones. In particular, $\exists^{\equiv p}x,y$ and $\exists^{\equiv p}x\ \exists^{\equiv p}y$ may result in a different predicate. The same is of course true for more complicated applications of modular quantifiers, so, one needs to be extra careful with $p$-mpp-definitions.

If for the $p$-mpp-definition~\eqref{equ:p-mpp} $\extp_\Phi(\ba)=1$ for all $\ba\in\rel$, the $p$-mpp definition is said to be \emph{strict}. By Proposition~5.6 from \cite{DBLP:conf/stoc/BulatovK22} if $\rel$ has a $p$-mpp-definition, it has a strict $p$-mpp definition.

\begin{proposition}[\cite{Kazeminia25:modular}]\label{pro:GadgetExists}
Let $\cH$ be a be a $\sg$-structure (single- or multi-sorted), and $p$ a prime. Let $\rel$ be a relation that is $p$-mpp-definable in $\cH$,
then $\#_p\CSP(\cH+\rel)$ is polynomial time reducible to $\#_p\CSP(\cH)$.
In particular, if $\rel$ is conjunctive definable in $\cH$ (that is, $s=0$ in~\eqref{equ:p-mpp}), $\#_p\CSP(\cH+\rel)$ is polynomial time reducible to $\#_p\CSP(\cH)$.
\end{proposition}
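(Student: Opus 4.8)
The plan is to prove the statement by a ``gadget'' substitution that preserves the number of solutions modulo $p$. The first step is to invoke Proposition~5.6 of \cite{DBLP:conf/stoc/BulatovK22} to replace the given $p$-mpp-definition of $\rel$ by a \emph{strict} one,
\[
\rel(\vc xk)=\existspr\vc ys\,\Phi(\vc xk,\vc ys),
\]
where $\Phi$ is a conjunction of atoms over $\cH$ and equalities with $\extp_\Phi(\ba)=1$ for all $\ba\in\rel$. It is convenient to extend the notation $\ext_\Phi(\cdot)$ to every tuple $\ba$ of the appropriate sorts, letting $\ext_\Phi(\ba)$ be the number of assignments $\bb$ with $\Phi(\ba,\bb)$ true; then by~\eqref{equ:p-mpp} one has $\ext_\Phi(\ba)\equiv 1\pmod p$ if $\ba\in\rel$ and $\ext_\Phi(\ba)\equiv 0\pmod p$ otherwise.

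Next, given an instance $\cP=(V,\tau,\cC)$ of $\#_p\CSP(\cH+\rel)$, I would construct an instance $\cP'$ of $\#_p\CSP(\cH^=)$ as follows: keep every constraint of $\cC$ that does not use $\rel$, and replace each constraint $\langle\bs_j,\rel\rangle$, $j=1\zd m$, by a gadget --- introduce a fresh tuple of variables $\bw_j$ (one per variable of $\vc ys$, with the prescribed sorts) and add, for each atom of $\Phi$, the constraint obtained by substituting $\vc xk$ with the variables of $\bs_j$ and $\vc ys$ with $\bw_j$. Each such atom is a predicate of $\cH$ or an equality, so $\cP'$ is an instance of $\#_p\CSP(\cH^=)$, produced in polynomial time. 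Since the tuples $\bw_1\zd\bw_m$ are pairwise disjoint and disjoint from $V$, once a type-respecting $\vf\colon V\to H$ satisfying the non-$\rel$ constraints of $\cP$ is fixed, its extensions to the fresh variables factor over the gadgets; writing $\#(\cQ)$ for the number of solutions of an instance $\cQ$,
\[
\#(\cP')=\sum_{\vf}\ \prod_{j=1}^m\ext_\Phi(\vf(\bs_j)),
\]
the sum ranging over all type-respecting $\vf\colon V\to H$ compatible with the non-$\rel$ constraints of $\cP$, and $\vf(\bs_j)$ denoting the tuple of $\vf$-values on $\bs_j$.

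The crux is reading this identity modulo $p$. If $\vf$ also satisfies every $\langle\bs_j,\rel\rangle$, then $\vf(\bs_j)\in\rel$ for all $j$, so every factor is $\equiv 1$ and the product is $\equiv 1\pmod p$; if $\vf$ violates some $\langle\bs_{j_0},\rel\rangle$, then $\vf(\bs_{j_0})\notin\rel$, so $\ext_\Phi(\vf(\bs_{j_0}))\equiv 0\pmod p$ and the whole product is $\equiv 0\pmod p$. Hence $\#(\cP')\equiv\#(\cP)\pmod p$, i.e.\ $\cP\mapsto\cP'$ is a polynomial-time parsimonious reduction modulo $p$ from $\#_p\CSP(\cH+\rel)$ to $\#_p\CSP(\cH^=)$; composing with Proposition~\ref{pro:ConstantCSP-main}(1) gives the required polynomial-time reduction to $\#_p\CSP(\cH)$. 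For the ``in particular'' case $s=0$ one has $\ext_\Phi(\ba)\in\{0,1\}$ for every $\ba$, so the definition is automatically strict, the gadget adds no fresh variables and is simply the conjunction $\Phi$ evaluated on the variables of $\bs_j$, and the same computation applies.

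I expect the only genuinely delicate point to be the passage to a \emph{strict} $p$-mpp-definition, which I take from the cited result: without strictness the factors $\ext_\Phi(\vf(\bs_j))$ are only known to be nonzero, not equal to $1$, modulo $p$, so the product over several gadgets need not collapse to $\#(\cP)$ --- precisely the kind of cancellation that makes modular counting subtle. The remaining ingredients --- the factorization of extensions across disjoint gadgets, the bookkeeping of sorts, and the handling of equality atoms via $\cH^=$ --- are routine.
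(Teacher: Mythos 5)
Your proof is correct and follows the intended route: pass to a \emph{strict} $p$-mpp-definition via Proposition~5.6 of \cite{DBLP:conf/stoc/BulatovK22}, replace each $\rel$-constraint by a fresh disjoint copy of the gadget $\Phi$, observe that the number of extensions over the gadget variables is $\equiv 1\pmod p$ for satisfying assignments and $\equiv 0\pmod p$ for violating ones (so the product collapses to the indicator of satisfying all $\rel$-constraints), and dispatch the equality atoms through Proposition~\ref{pro:ConstantCSP-main}(1). This is essentially the same argument as in the source the paper cites for this proposition; you correctly identified that strictness is the key ingredient that prevents unwanted cancellation across multiple gadgets.
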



Let $\cH = (\{H_i\}_{i \in [k]}; \rel_1, \dots, \rel_m)$ be a multi-sorted relational structure. A subset $A \subseteq H_i$ is called a \emph{subalgebra} (of sort $i$) if it is pp-definable in $\cH$ as a unary relation. If $A$ is $p$-mpp-definable in $\cH$, it is said to be \emph{$p$-subalgebra}. In other words, $A$ is a $p$-subalgebra if there exists a relation  $\relo(x)$ $p$-mpp-definable in  $\cH$, such that $a\in A$ if and only if $\relo(a)$.

\subsection{Polymorphisms and Automorphisms of Products}

We will also use polymorphisms of relational structures and constraint languages. Let $\cH=(H,\vc\rel\ell)$, $H=\{H_i\}_{i\in[k]}$ be a multi-sorted relational structure. An $r$-ary \emph{polymorphism} $f$ of $\cH$ is a homomorphism $f:\cH^r\to\cH$. In other words, a collection of mappings $f=\{f_i\}_{i\in[k]}$, $f_i:H_i^r\to H_i$, such that for every $\rel_j$, $j\in[\ell]$, with type $(\vc is)$, for any $\vc\ba r\in\rel_j$ 
\[
(f_{i_1}(\ba_1[1],\ba_2[1],\dots,\ba_r[1])\zd f_{i_s}(\ba_1[s],\ba_2[s],\dots,\ba_r[s]))\in\rel_j.
\]
If this condition for $\rel_j$ holds, $f$ is also said to be a polymorphism of $\rel_j$. For a constraint language $\Gm$ over $H$ the definition of a polymorphism is essentially the same: $f$ is a polymorphism of $\Gm$ if it is a polymorphism of every relation from $\Gm$.

Let $H=\colect{H_i}{i\in[k]}$ be a multi-sorted set. A collection of mappings $\{e^{s,r}_i\}_{i\in[k]}$, $s\in[r]$, is said to be the \emph{$s$'th $r$-ary projection} if $e^{s,r}_i(\vc ar)=a_s$ for all $\vc ar\in H_i$ and $i\in[k]$. Projections are polymorphisms of any relation, structure, or constraint language. 

The set of polymorphisms of a relational structure $\cH$ or a constraint language $\Gm$ is closed under compositions. Let $g$ be an $r$-ary polymorphism of $\cH$ and $f^1\zd f^r$ $q$-ary polymorphisms of $\cH$. The \emph{composition} of $g$ and $f^1\zd f^r$ is the $q$-ary operation $h$ given by
\[
h_i(\vc xq)=g_i(f^1_i(\vc xq)\zd f^r_i(\vc xq)), \qquad \text{for all $i\in[k]$.}
\]
It is well known \cite{Bulatov05:classifying,ref:POlymorphismAndUsethem_barto2017polymorphisms} that the composition of polymorphisms is again a polymorphism.

A particular type of polymorphisms plays an outsized role in counting CSP research. A ternary polymorphism $m$ of a multi-sorted relational structure $\cH=(H,\vc\rel\ell)$, $H=\{H_i\}_{i\in[k]}$, (a constraint language $\Gm$ over $H$) is said to be \emph{Mal'tsev} if for every $i\in[k]$ the mapping $m_i$ satisfies the equations
$
m_i(x,y,y)=m_i(y,y,x)=x.
$
If $\cH$ has a Mal'tsev polymorphism, the set of solutions of any instance $\cG$ of $\CSP(\cH)$ admits a compact representation by a set of solutions whose number is linear in the size of $G$, see, \cite{Bulatov06:simple,effective-Dyer-doi:10.1137/100811258}. Under certain conditions this allows to solve $\#\CSP(\cH)$ in polynomial time. 

One other concept that we will refer to in this paper is the rectangularity. A binary relation $\rel\subseteq H_1 \times H_2$ is called \emph{rectangular} if $(a, c),(a, d),(b, c) \in\rel$ implies $(b, d) \in\rel$ for any $a, b \in H_1, c, d \in H_2$. A relation $\rel\subseteq H_{i_1} \tms H_{i_n}$ for $n \geq 2$ is rectangular if for every $I\subsetneq[n]$, the relation $R$ is rectangular when viewed as a binary relation, a subset of $\pr_I\rel\tm\pr_{[n]-I}\rel$. Note that rectangularity is a structural property, and it has nothing to do directly with the size of the relation or its parts. A relational structure $\cH$ is \emph{strongly rectangular} if every relation $\rel\in \langle \cH \rangle$ of arity at least 2 is rectangular. 
The following lemma provides a connection between strong rectangularity and Mal'tsev polymorphisms and was first observed in \cite{Hagemann73:permutable} although in a different language.

\begin{lemma}[\cite{Hagemann73:permutable}, see also \cite{effective-Dyer-doi:10.1137/100811258}]
A relational structure is strongly rectangular if and only if it has a Mal’tsev polymorphism.
\end{lemma}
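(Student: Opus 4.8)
The plan is to prove the two implications separately. The direction ``Mal'tsev $\Rightarrow$ strongly rectangular'' is essentially immediate, while ``strongly rectangular $\Rightarrow$ Mal'tsev'' is obtained through a standard free-structure construction followed by a single application of rectangularity.

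For the easy direction, let $m=\{m_i\}_{i\in[k]}$ be a Mal'tsev polymorphism of $\cH$. Every polymorphism of $\cH$ is also a polymorphism of every relation in $\langle\cH\rangle$ (pp-definitions preserve invariance under operations; see~\cite{ref:POlymorphismAndUsethem_barto2017polymorphisms}). Fix $\rel\in\langle\cH\rangle$ of arity $n\ge2$, a partition $[n]=I\cup([n]\setminus I)$ with both blocks nonempty, and tuples, written with the $I$-coordinates first, satisfying $(\bb,\bc),(\ba,\bc),(\ba,\bd)\in\rel$. Applying $m$ coordinatewise to these three tuples in this order, and using $m_i(u,v,v)=m_i(v,v,u)=u$ on each coordinate, the $I$-block becomes $\bb$ and the complementary block becomes $\bd$; since $m$ is a polymorphism of $\rel$, we get $(\bb,\bd)\in\rel$. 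Hence every relation in $\langle\cH\rangle$ of arity at least $2$ is rectangular.

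For the converse, assume $\cH$ is strongly rectangular. Consider first the single-sorted case with base set $H$, and let $\mathbb{F}=\Sgg{\cH^{H^3}}{\{e^1,e^2,e^3\}}$, where $e^s$ is the tuple indexed by $H^3$ with $e^s[(x_1,x_2,x_3)]=x_s$; its elements are exactly the ternary polymorphisms of $\cH$ written as value tables. As a subalgebra of a power of $\cH$, $\mathbb{F}$ is invariant under $\Pol(\cH)$, hence so is every projection of it, and therefore by the Galois correspondence on finite domains every projection of $\mathbb{F}$ lies in $\langle\cH\rangle$ and, by strong rectangularity, is rectangular. Now restrict to $S=S_1\cup S_2$ with $S_1=\{(x,y,y):x,y\in H\}$, $S_2=\{(x,x,y):x,y\in H\}$, and let $t$ be the $S$-tuple with $t[(x,y,y)]=x$ and $t[(x,x,y)]=y$; this is well defined since $S_1$ and $S_2$ meet only in $(x,x,x)$, where both clauses give $x$. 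Writing each tuple as a pair of patterns on $S_1\mid S_2$ (``left'' or ``right'' parameter), we have $\pr_S e^1=(\textrm{left},\textrm{left})$, $\pr_S e^2=(\textrm{right},\textrm{left})$, $\pr_S e^3=(\textrm{right},\textrm{right})$, and $t=(\textrm{left},\textrm{right})$. Applying rectangularity of $\pr_S\mathbb{F}$, viewed as a binary relation under the partition $S_1\mid S_2$, to $\pr_S e^2,\pr_S e^3,\pr_S e^1$ in the roles of $(a,c),(a,d),(b,c)$ yields $t\in\pr_S\mathbb{F}$. Any $m\in\mathbb{F}$ with $\pr_S m=t$ is then a ternary polymorphism of $\cH$ with $m(x,y,y)=x$ and $m(x,x,y)=y$, i.e.\ a Mal'tsev polymorphism.

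The multi-sorted case is handled by introducing the three generators simultaneously in every sort: $\mathbb{F}$ becomes the substructure of a suitable power of $\cH$ generated by $e^1,e^2,e^3$ whose sort-$i$ part consists of the three projections on $H_i^3$, its elements encode ternary polymorphisms $m=\{m_i\}$ of $\cH$, and $\pr_S\mathbb{F}\in\langle\cH\rangle$ for $S=\bigcup_{i\in[k]}(S_1^i\cup S_2^i)$ by the multi-sorted Galois correspondence. One application of rectangularity of $\pr_S\mathbb{F}$ with respect to the partition $\bigcup_iS_1^i\mid\bigcup_iS_2^i$, exactly as above, produces $m=\{m_i\}$ with every $m_i$ Mal'tsev. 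I expect the only delicate points to be invoking the multi-sorted form of the polymorphism-versus-$\langle\cdot\rangle$ correspondence correctly (so that $\pr_S\mathbb{F}$ is genuinely rectangular) and checking that, with the sorts pooled into the two blocks, a single rectangularity step recovers the target tuple in all sorts at once; both are routine but worth spelling out.
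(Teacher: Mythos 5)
The paper does not prove this lemma; it is cited verbatim from Hagemann--Mitschke \cite{Hagemann73:permutable} (see also \cite{effective-Dyer-doi:10.1137/100811258}), so there is no internal proof to compare against. Your proposal is a correct and entirely standard rendition of the classical argument: the easy direction via preservation of pp-definable relations and a single application of $m$, and the hard direction via the free structure (indicator) $\mathbb{F}=\Sgg{\cH^{H^3}}{\{e^1,e^2,e^3\}}$, the Galois connection $\Inv(\Pol(\cH))=\langle\cH\rangle$ over finite domains, and one rectangularity step to recover the Mal'tsev tuple $t$. The translation of $t[(x,x,y)]=y$ into $m(y,y,x)=x$ is correct, so $t$ does encode the Mal'tsev identities.

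One imprecision worth cleaning up: $S_1=\{(x,y,y)\}$ and $S_2=\{(x,x,y)\}$ are not disjoint (they share the diagonal $\{(x,x,x)\}$), so ``the partition $S_1\mid S_2$'' is not literally a partition, which is what the definition of rectangularity requires ($I\subsetneq[n]$ with both $I$ and $[n]\setminus I$ nonempty). The fix is trivial and you have already noticed the key fact (both clauses agree on the diagonal): take $I=S_1$ and $[n]\setminus I = S_2\setminus S_1$ (nonempty as soon as $|H|\ge 2$; the case $|H|=1$ is vacuous). On $S_2\setminus S_1$ the left/right patterns of $e^1,e^2,e^3,t$ are unchanged, and the rectangularity step produces a tuple that agrees with $t$ on $S_1$ and on $S_2\setminus S_1$, hence on all of $S$ since the diagonal sits in $S_1$. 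With that one-line repair, the argument is complete, and the multi-sorted extension you sketch (pooling $\bigcup_i S_1^i$ against $\bigcup_i S_2^i$) goes through unchanged.
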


We introduce a modular version of this concept, strongly $p$-rectangular structures.
A relational structure $\cH$ is said to be strongly \emph{$p$-rectangular}, if every $\rel \in \ang \cH _p$ is rectangular. It is shown in \cite{Kazeminia25:modular} that a relational structure can be strongly rectangular, but not strongly $p$-rectangular.

Also, it was shown in~\cite{Kazeminia25:modular} that if a  relational structure is strongly 2-rectangular and admits a Mal'tsev polymorphism, then the corresponding modular counting CSP problem modulo 2 is tractable. This result highlights the algorithmic significance of Mal'tsev polymorphisms: their presence enables efficient algorithms, while their absence plays a central role in establishing hardness. 

\begin{theorem}[\cite{Kazeminia25:modular}]\label{the:parity-algorithm} 
Let $\cH$ be a 2-rigid, strongly 2-rectangular multi-sorted relational structure, and $\ang \cH _2$ has a Mal'tsev polymorphism. Then $\#_2\CSP(\cH)$ can be solved in time $O(n^5)$. 
\end{theorem}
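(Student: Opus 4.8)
The plan is to combine the compact-representation machinery for Mal'tsev constraint languages with the modular projection operator $\prp$, so that the whole algorithm works modulo $2$. First I would recall the standard setup: given an instance $\cP=(V,\cC)$ of $\#_2\CSP(\cH)$ with $|V|=n$, process the variables in a fixed order $v_1,\dots,v_n$ and maintain, after step $j$, a \emph{witness set} (a compact representation in the sense of \cite{Bulatov06:simple,effective-Dyer-doi:10.1137/100811258}) for the relation $\rel_j=\prp_{\{1,\dots,j\}}\cR_\cP$, where $\cR_\cP$ is the solution relation of $\cP$. The key point is that each $\rel_j$ is $2$-mpp-definable in $\cH$: it is obtained from a conjunction of constraints by applying the $2$-modular existential quantifier to the variables $v_{j+1},\dots,v_n$, hence $\rel_j\in\ang\cH_2$. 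Since $\cH$ is strongly $2$-rectangular, every such $\rel_j$ is rectangular, and since $\ang\cH_2$ has a Mal'tsev polymorphism $m$, each $\rel_j$ is preserved by $m$ and therefore admits a compact representation of size $O(n|H|)$ that can be manipulated with the usual Mal'tsev primitives (\textsf{Nonempty}, \textsf{Fix-values}, computing a representation of a relation defined by adding one more constraint, etc.).

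Second, I would carry out the inductive step. Passing from a representation of $\prp_{\{1,\dots,j\}}(\text{conjunction of the first } t \text{ constraints})$ to one more constraint is the classical Mal'tsev "next-constraint" routine. The new ingredient is the \emph{modular projection}: when we existentially quantify a variable, we must keep in the representation only those tuples $\ba$ with $\extp_{\rel}(\ba)\ne 0$, i.e. an odd number of extensions. Here is where working modulo $2$ is essential. Using the compact representation and the Mal'tsev polymorphism one can, for a rectangular relation, compute for each partial tuple the parity of the number of its extensions: rectangularity makes the extension count multiplicative/additive along the coordinate blocks of the witnesses, so the parity of $\ext_\rel(\ba)$ can be read off as a product of parities of the local fibres in the representation. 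Concretely, one computes, for each witness, the size modulo $2$ of the corresponding "coset part" and discards a witness exactly when that size is even; what survives is a compact representation of $\prp$ of the relation. Iterating over all $n$ variables and all $|\cC|$ constraints, and finally reading off $\hom(\cP,\cH)\bmod 2 = |\rel_n|\bmod 2 = \ext_{\cR_\cP}(())\bmod 2$ from the last representation, gives the answer. A careful accounting of the Mal'tsev operations (each representation has $O(n)$ witnesses, each update touches $O(n)$ of them, each touch costs $O(|H|^2)$ or so, over $O(n)$ variables and $O(n)$ constraints) yields the claimed $O(n^5)$ bound.

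The main obstacle, and the step I would spend the most care on, is exactly the parity-of-extensions computation: showing that for a rectangular, $m$-preserved relation given by a witness set, one can compute $\extp_\rel(\ba)$ for a partial assignment $\ba$ (and, more importantly, produce a witness set for $\prp_I\rel$) in polynomial time. Rectangularity gives that the fibres over a partial tuple form cosets of a fixed "difference" relation, so their sizes are all equal; but modulo $2$ we only need the parity of that common size times the parity of the number of occupied fibres, and both can be extracted from the representation. One must check that this parity information is itself maintained compactly through the induction and that the Mal'tsev primitives used (in particular the routine that computes a representation of $\rel\cap(\{a\}\times\text{rest})$ and the one that re-indexes after dropping a coordinate) interact correctly with the "odd-fibre" filtering. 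Once this lemma is in place the rest is bookkeeping, and the running-time analysis is a routine adaptation of the one in \cite{effective-Dyer-doi:10.1137/100811258}.
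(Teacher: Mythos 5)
Note first that the paper does not actually prove this theorem; it is quoted verbatim from~\cite{Kazeminia25:modular}, so there is no in-paper proof to compare against. Assessed on its own, your plan does identify the two right ingredients: that the relations arising during the algorithm are $2$-mpp-definable in $\cH$ and hence, by strong $2$-rectangularity and the Mal'tsev polymorphism of $\ang\cH_2$, rectangular and Mal'tsev-preserved, so compact representations in the Bulatov--Dalmau / Dyer--Richerby sense are available; and that the crux is computing a compact representation of $\prp_{I}\rel$ from one of $\rel$.

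However, the argument you give to justify the modular projection step is not correct, and it hides where the real work lies. You write that ``rectangularity gives that the fibres over a partial tuple form cosets of a fixed `difference' relation, so their sizes are all equal.'' Rectangularity of $\rel\sse A\tm B$ only forces two fibres $\Ext_\rel(a)$, $\Ext_\rel(a')$ to be \emph{either equal or disjoint}; it does not make all fibres equicardinal. Even preservation by a Mal'tsev operation does not save this: the fibres are congruence classes, and congruences of Mal'tsev algebras are not uniform in general --- this is precisely why exact counting needs the strong-balance condition on top of Mal'tsev in Dyer--Richerby. Consequently your proposed factorization ``parity of the common fibre size times parity of the number of occupied fibres'' does not equal $|\rel|\bmod 2$. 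The identity that actually drives the mod-$2$ algorithm is the unconditional observation that, when a single variable is quantified out,
\[
|\rel|\;=\;\sum_{\ba\in\pr_{[j-1]}\rel}|\Ext_\rel(\ba)|\;\equiv\;|\prp_{[j-1]}\rel|\pmod 2,
\]
which needs no uniformity of fibres. What the Mal'tsev and strong $2$-rectangularity hypotheses buy is that $\prp_{[j-1]}\rel$ is again $2$-mpp-definable, hence rectangular and Mal'tsev-preserved, so the compact-representation machinery can continue to be applied. Your sketch leaves the crucial lemma --- that a compact representation of $\prp_{[j-1]}\rel$ can be produced in polynomial time from one of $\rel$, by detecting which rectangularity blocks have odd fibre-parity and retaining only those --- entirely to intuition, and the $O(n^5)$ bound is asserted rather than derived. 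Finally, the direction of your iteration is backwards: you maintain $\prp_{\{1,\dots,j\}}\cR_\cP$ with $j$ increasing to $n$, at which point you are holding a representation of $\cR_\cP$ itself and have made no progress toward the count; the projection chain has to descend from $j=n$ to $j=0$, reading off one fibre-parity per step.
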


Another part polymorphisms will be playing in this paper is their connection to automorphisms of powers of structures. Let $\cH$ be a multi-sorted structure with the base set $H=\colect{H_i}{i\in[k]}$ and $\cH^r$ its direct power. A mapping $f:\cH^r\to\cH^r$ of $\cH^r$ to itself is a collection of mappings $\colect{f_i}{i\in[k]}:H_i^r\to H_i^r$ each of which can also be represented as $f_i=(f^1_i\zd f^r_i)$, where $f_i^j:H_i^r\to H_i$. Rearranging these mapping we obtain collections $f^j=\colect{f^j_i}{i\in[k]}$, $j\in[r]$, where each $f^j$ is a mapping of $H^r$ to $H$. The following lemma is straightforward from the definitions.

\begin{lemma}\label{lem:poly-auto}
In the notation above $f$ is a homomorphism of $\cH^r$ to itself if and only if $f^j$ is a polymorphism of $\cH$ for every $j\in[r]$. 
\end{lemma}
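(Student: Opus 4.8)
The plan is to simply unwind the definitions of homomorphism, direct power, and polymorphism and observe that the single homomorphism condition for $f$ on $\cH^r$ decomposes, coordinate by coordinate in $[r]$, into the $r$ separate polymorphism conditions for $f^1\zd f^r$.

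First I would fix an arbitrary relational symbol $\rel\in\sg$ of arity $s$ and type $(\vc is)$. By the definition of the direct power, for any tuples $\mathbf a_1\in H_{i_1}^r\zd\mathbf a_s\in H_{i_s}^r$ we have $(\mathbf a_1\zd\mathbf a_s)\in\rel^{\cH^r}$ if and only if $(\mathbf a_1[j]\zd\mathbf a_s[j])\in\rel^{\cH}$ for every $j\in[r]$. Thus a member of $\rel^{\cH^r}$ can be read either ``horizontally'' as an $s$-tuple of $r$-tuples, or ``vertically'' as the list of its $r$ coordinate slices, each of which is a member of $\rel^{\cH}$; this gives a bijection between $\rel^{\cH^r}$ and the set of $r$-tuples $(\ba_1\zd\ba_r)$ of members of $\rel^{\cH}$, via $\ba_m=(\mathbf a_1[m]\zd\mathbf a_s[m])$ and conversely $\mathbf a_t[m]=\ba_m[t]$. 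Next I would note that the $j$-th coordinate of $f_{i_t}(\mathbf a_t)\in H_{i_t}^r$ is exactly $f_{i_t}^j(\mathbf a_t)$, so $(f_{i_1}(\mathbf a_1)\zd f_{i_s}(\mathbf a_s))\in\rel^{\cH^r}$ holds if and only if $(f_{i_1}^j(\mathbf a_1)\zd f_{i_s}^j(\mathbf a_s))\in\rel^{\cH}$ for every $j\in[r]$. Reading the right-hand condition (for fixed $j$) through the bijection above, it says precisely that $f^j$ preserves $\rel$. Hence $f$ preserves $\rel^{\cH^r}$ if and only if $f^j$ preserves $\rel$ for every $j\in[r]$, and quantifying over all $\rel\in\sg$ and interchanging the (trivially commuting) universal quantifiers over $\sg$ and over $[r]$ yields the claim.

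The argument presents no genuine obstacle; as remarked, it is immediate from the definitions. The only thing that requires care is the bookkeeping of the three independent index ranges — the sorts $i\in[k]$, the coordinate positions of a relation of arity $s$, and the power coordinate $j\in[r]$ — together with the observation that applying $f$ to a tuple over $\cH^r$ operates on the ``$j$'' index while a polymorphism operates on the ``$m$'' index of an $r$-tuple of tuples; identifying these two via the horizontal/vertical reading of $\rel^{\cH^r}$ is essentially the whole content of the proof.
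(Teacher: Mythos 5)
Your proof is correct; since the paper offers no argument (it just labels the lemma as ``straightforward from the definitions''), unwinding the definitions as you do is exactly the intended approach, and your horizontal/vertical bookkeeping of $\rel^{\cH^r}$ is the content the paper implicitly leaves to the reader.
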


Clearly, $f$ is an automorphism of $\cH^r$ if and only if the $f^j$ are such that $f$ is bijective.

\section{Automorphic Polynomials}\label{sec:automorphic}

In this section we introduce a new automorphism-like construction that allows one to reduce $\#_p\CSP(\cH)$ to a CSP over a smaller structure that we will extensively use in the future. A binary polymorphism $f$ of a (multi-sorted) relational structure $\cH=(\{H_i\}_{i\in[k]};\vc\rel\ell)$ is said to be an \emph{automorphic polynomial} if 
for any domain $H_i$ of $\cH$ and any $a\in H_i$, $f_i(a,x)$ viewed as a mapping from $H_i$ to $H_i$ is a permutation of $H_i$. If, for a prime $p$, for any domain $H_i$ and any $a\in H_i$, $f_i(a,x)$ is the identity mapping or a permutation of order $p$, and $f_i(a,x)$ is a permutation of order $p$ for at least one domain $H_i$ and at least one $a\in H_i$, $f$ is said to be a \emph{$p$-automorphic polynomial}.

\begin{proposition}\label{pro:p-auto-poly}
Suppose that a relational structure $\cH$ has a $p$-automorphic polynomial $f$. Then there is a structure $\cH^f$ such that $\Sp(\cH^f)\prec\Sp(\cH)$ and $\#_p\CSP(\cH)$ is polynomial time reducible to $\#_p\CSP(\cH^f)$.
\end{proposition}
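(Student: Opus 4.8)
The plan is to mimic the classical $p$-automorphism reduction, but with the permutations supplied by the automorphic polynomial $f$ rather than by a global automorphism. Recall that in the $p$-automorphism case one restricts to $\Fix(\pi)$; here the permutation $f_i(a,x)$ may depend on the base point $a$, so no single fixed-point set works. Instead, given an instance $\cP = (V,\tau,\cC)$ of $\#_p\CSP(\cH)$, I would look for a family of permutations $\pi_v$, one per variable $v\in V$, each of order $1$ or $p$, such that the joint map $\Psi = (\pi_v)_{v\in V}$ carries solutions to solutions — i.e.\ for every constraint $\langle\bs,\rel\rangle$ and every tuple $\ba$ in $\rel$, applying the corresponding $\pi_{v}$ coordinatewise yields another tuple of $\rel$ (these are the \emph{consistent permutations} alluded to in the introduction). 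If such a family exists with at least one $\pi_v$ of order $p$, then $\Psi$ generates a $\zZ_p$-action on the solution set $\Hom(\cP)$ whose orbits have size $1$ or $p$; hence $|\Hom(\cP)| \equiv |\Fix(\Psi)| \pmod p$, and $\Fix(\Psi)$ is exactly the solution set of the instance obtained by replacing, for each $v$ with $\pi_v \ne \id$, the domain $H_{\tau(v)}$ by the proper subset $\Fix(\pi_v) \subsetneq H_{\tau(v)}$. Unioning these subdomains over all variables one gets a structure $\cH^f$ (more precisely, a finite set of smaller-domain structures, one per choice of which coordinates are shrunk, but their disjoint union can be packaged as a single $\cH^f$) with $\Sp(\cH^f) \prec \Sp(\cH)$, and the reduction sends $\cP$ to the corresponding instance over $\cH^f$.

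The substantive question is how to find the family $(\pi_v)$, or to detect when none exists. The idea is to set up a \emph{derivative CSP} whose variable set is again $V$, whose domain is the symmetric group $\Sym(H_i)$ on the appropriate sort, and whose constraints encode precisely the requirement that $(\pi_v)$ be consistent: for each constraint $\langle(v_1,\dots,v_m),\rel\rangle$ of $\cP$ we impose the relation $\{(\sigma_1,\dots,\sigma_m) : \sigma_j \in \Sym(H_{\tau(v_j)}),\ (\sigma_1,\dots,\sigma_m)(\rel) = \rel\ \text{as sets, and each }\sigma_j\in\{\id\}\cup(\text{order-}p\text{ perms})\}$. One checks that, because $f$ is a polymorphism, the maps $x\mapsto f_i(a,x)$ furnish a Mal'tsev-like structure on this derivative problem: the group operation (or the term $\sigma\tau^{-1}\rho$) is a Mal'tsev operation compatible with all these constraints, so the derivative CSP is an instance of a Mal'tsev CSP over a group, which is solvable in polynomial time and, crucially, one can decide whether it has a solution in which not every coordinate is the identity. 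The seed for nontriviality is the hypothesis: there is a domain $H_i$ and $a\in H_i$ with $f_i(a,x)$ of order $p$; after a routine argument (enforcing one variable of sort $i$ to take value $a$, or rather forcing $\sigma_v = f_i(a,\cdot)$ at a chosen variable and propagating) either a nontrivial consistent family is produced — giving the domain reduction — or the instance has no solution placing $a$ on that variable at all, in which case $a$ can be deleted from $H_i$, again shrinking the spectrum. Iterating over all variables and base points yields the reduction.

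The main obstacle I expect is making the "package everything into a single $\cH^f$ with $\Sp(\cH^f)\prec\Sp(\cH)$" step clean and uniform: different instances (and different variables within an instance) call for shrinking different sorts to different fixed-point subsets, so $\cH^f$ must be assembled so as to simultaneously serve all of them — presumably as a multi-sorted structure whose sorts are the various subsets $\Fix(\pi)$ for $\pi$ ranging over the order-$p$ permutations of the form $f_i(a,\cdot)$ and their compositions, equipped with the induced relations plus enough "gluing" relations (equality-type predicates between old and new sorts) to make the reduction faithful modulo $p$. Verifying that the induced relations on $\cH^f$ are still obtained by restriction, that consistency of permutations is exactly captured by the derivative CSP, and that the count is preserved modulo $p$ under the orbit argument are each routine but must be done carefully because modular quantifiers and multi-sorting interact delicately (as the paper repeatedly warns). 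A secondary technical point is confirming that the derivative CSP is genuinely a Mal'tsev CSP over a group — i.e.\ that $(\sigma,\tau,\rho)\mapsto \sigma\tau^{-1}\rho$ preserves the consistency constraints — which follows from $f$ being a binary polymorphism but should be stated as a lemma.
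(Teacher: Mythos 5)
Your overall plan is in the same spirit as the paper's, but two concrete steps would fail as written.

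First, the derivative CSP you set up is \emph{not} a Mal'tsev CSP. You impose, at each coordinate, the unary constraint ``$\sigma_j\in\{\id\}\cup(\text{order-}p\ \text{permutations})$''. This set is not a subgroup or coset of $\Sym(H_i)$ (the product of two order-$p$ permutations generally has order $\ne 1,p$), so it is not preserved by the Mal'tsev term $(\sigma,\tau,\rho)\mapsto\sigma\tau^{-1}\rho$. Your subsequent appeal to polynomial-time solvability of a Mal'tsev CSP over a group is therefore unsupported. The paper avoids this cleanly: it does not restrict orders at all. Its derivative CSP $s(\cP)$ ranges over the \emph{full} symmetric groups (and, via the binarization $b(\cP)$, over $\Sym(\rel_C)$ for each constraint $C$ rather than $\Sym(H_{\tau(v)})$ per variable, so all constraints in $s(\cP)$ are binary). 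It then \emph{seeds} the search with the specific permutation $\vf_0(\bx)=f(\ba,\bx)$, which has order $p$ by hypothesis. The key orbit count needs only that $\vf_0$ have order $p$, not the whole consistent collection: for a solution $\psi$ with $\psi(v)\in B$ (a nontrivial orbit of $f_i(a,x)$), one has $\vf^j(\psi)(v)=f_i(a,\cdot)^j(\psi(v))$, so the stabilizer of $\psi$ under $\langle\vf\rangle$ sits inside $p\zZ$ and every orbit has size a multiple of $p$, regardless of the orders of the other $\vf_C$.

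Second, your reduced target is not a fixed structure. You propose to shrink $H_{\tau(v)}$ to $\Fix(\pi_v)$, where $\pi_v$ is found \emph{per instance}; this makes $\cH^f$ instance-dependent, so you don't actually obtain a single target $\cH^f$ with $\Sp(\cH^f)\prec\Sp(\cH)$ for the reduction. You flagged this as an obstacle yourself, but it is more than a packaging inconvenience. The paper fixes it by never reducing to a variable fixed-point set: the new structure $\cH^f$ has $H_i$ replaced by exactly two new domains, $H_i-\{a\}$ and $\Fix(f_i(a,x))$, both determined by $f$ and the chosen $a$ alone. The per-instance check then decides, tuple by tuple (for each $\ba$ with $\ba[s]=a$ in a given constraint), whether to delete that tuple (no consistent collection seeded at $\vf_0=f(\ba,\cdot)$ exists, hence no solution sends $\bs_0$ to $\ba$ --- this is exactly where $f$ being a polymorphism is used, since a solution $\psi$ would yield the collection $\vf_C(\bx)=f(\psi(\bs_C),\bx)$) or to discard nontrivial orbits; either way the variable's domain lands in one of the two fixed smaller domains. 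Your phrasing ``no solution placing $a$ on that variable'' overshoots: the correct conclusion is the more local ``no solution extending the specific tuple $\ba$''.

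Apart from these two points your blueprint matches the paper's: consistent families of permutations, a group-valued CSP with the group Mal'tsev term, seeding by $f(a,\cdot)$, and an orbit-counting reduction of the domain. If you drop the order restriction, seed with $\vf_0$, run the orbit argument as above, and fix $\cH^f$ to have precisely the two new sorts $H_i-\{a\}$ and $\Fix(f_i(a,x))$, your proof would coincide with the paper's modulo the cosmetic choice of $\Sym(H_i)$ per variable versus $\Sym(\rel_C)$ per constraint (the latter is what makes the paper's $s(\cP)$ binary and is a convenience, not a necessity).
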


In order to prove Proposition~\ref{pro:p-auto-poly} we start with an auxiliary construction introduced in \cite{Bulatov12:weighted} and used in the context of modular counting in \cite{Kazeminia25:modular}. 

Let $\cH=(\{H_i\}_{i\in[k]};\vc\rel\ell)$ be a multi-sorted relational structure. We assume that every $H_i$ is among $\vc\rel\ell$ as a unary relation. The structure $b(\cH)$ is constructed as follows:
\begin{itemize}
\item The domains are $\vc\rel\ell$, the relations from $\cH$.
\item For $i \le j$, $s \in [\ar(\rel_i)]$, and $t \in [\ar(\rel_j)]$, define the binary relation $\relo^{ij}_{st} \subseteq \rel_i \times \rel_j$ by
\[
\relo^{ij}_{st} = \{ (\ba, \bb) \mid \ba \in \rel_i,\, \bb \in \rel_j,\, \ba[s] = \bb[t] \}.
\]
This relation expresses coordinate-wise agreement between tuples in $\rel_i$ and $\rel_j$, and is added to $b(\cH)$ to track such identifications across the structure.

\end{itemize}

First, the $\CSP$ over $b(\cH)$ is interreducible with that over $\cH$ and $b(\cH)$ shares several important properties with $\cH$.

\begin{proposition}[\cite{Kazeminia25:modular}]\label{pro:binarization-main-body-appendix}
For any prime $p$ and any structure $\cH$\\[2mm]
(1) $\#_p\CSP(\cH)$ is parsimoniously interreducible with  $\#_p\CSP(b(\cH))$;\\
(2) $\cH$ has a Mal'tsev polymorphism if and only if $b(\cH)$ does;\\[1mm]
(3) $\cH$ is $p$-rigid if and only if $b(\cH)$ is;\\[1mm]
(4) If $\cH$ is strongly $p$-rectangular then so is $b(\cH)$. 
\end{proposition}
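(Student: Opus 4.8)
The plan is to prove each of the four statements by unwinding the explicit construction of $b(\cH)$, since all of them are essentially structural translations. Throughout I will use the fact that, by the standing assumption, every domain $H_i$ appears among the relations $\rel_1,\dots,\rel_\ell$ as a unary relation, so that the "elements of sort $i$" of $b(\cH)$ corresponding to the unary relation $H_i$ are in bijection with $H_i$ itself, and the agreement relations $\relo^{ij}_{st}$ with $\rel_i$ or $\rel_j$ equal to such a unary $H_i$ act as the identification/projection links that let one recover $\cH$ inside $b(\cH)$.

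First I would handle (1). Given an instance $\cG$ of $\#_p\CSP(\cH)$ (equivalently, of $\#_p\CSP(\Gm_\cH)$), I build an instance of $\#_p\CSP(b(\cH))$ whose variables are the constraints of $\cG$ together with, for each original variable $v$, a variable over the appropriate unary domain $H_{\tau(v)}$; for every pair of a constraint $C=\ang{\bs,\rel_i}$ and a coordinate position $s$ with $\bs[s]=v$, impose the link $\relo$ matching $\rel_i[s]$ with $H_{\tau(v)}[1]$. A solution of the new instance assigns to each original constraint a tuple in the corresponding relation and to each original variable an element of its domain, with the links forcing exactly the compatibility conditions defining a homomorphism $\cG\to\cH$; this gives a bijection between the solution sets, hence a parsimonious reduction. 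The reverse direction is the symmetric construction: treat each domain of $b(\cH)$ (i.e.\ each relation $\rel_i$ of $\cH$) via its membership predicate and each link relation via its conjunctive definition in $\cH^=$, and invoke Proposition~\ref{pro:GadgetExists} (conjunctive definability) together with Proposition~\ref{pro:ConstantCSP-main}(1) for the equality predicates. For (2), (3), (4) the cleanest route is to observe that $\ang{b(\cH)}$ and $\ang{\cH}$ (and likewise $\ang{b(\cH)}_p$ and $\ang{\cH}_p$) are "the same" up to the coordinatization bijection: every relation pp-definable (resp.\ $p$-mpp-definable) in $b(\cH)$ corresponds, via the links, to one pp- (resp.\ $p$-mpp-) definable in $\cH^=$ and conversely. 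Then (2) follows from the Hagemann lemma (strong rectangularity $\Leftrightarrow$ Mal'tsev), (4) follows directly from the definition of strong $p$-rectangularity, and (3) follows because a $p$-automorphism of either structure restricts to / extends to a $p$-automorphism of the other through the coordinatization.

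I expect the main obstacle to be (2) and (3), the polymorphism and rigidity transfer, rather than the reduction (1). A Mal'tsev polymorphism $m$ of $\cH$ induces a candidate polymorphism of $b(\cH)$ by acting coordinatewise on tuples — i.e.\ on $\ba,\bb,\bc\in\rel_i$ set $m(\ba,\bb,\bc)[s]=m_{\,\rel_i[s]}(\ba[s],\bb[s],\bc[s])$ — but one must check this lands back in $\rel_i$ (it does, since $m$ is a polymorphism of $\rel_i$) and that it preserves every link $\relo^{ij}_{st}$ (it does, by the coordinatewise definition and the Mal'tsev identities on each sort). Conversely, a Mal'tsev polymorphism of $b(\cH)$ must be shown to restrict to one of $\cH$ on the unary domains $H_i$ and to be compatible with the original relations through the links; here the subtle point is that a priori a polymorphism of $b(\cH)$ need not act coordinatewise on the "tuple domains" $\rel_i$, so one argues that the links $\relo^{ij}_{st}$ with one side a unary $H_{i_s}$ force the value of the polymorphism on $\rel_i$ at coordinate $s$ to be determined by, and equal to, its value on the $s$-th coordinates, after which the original-relation constraints are recovered. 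The rigidity equivalence (3) is handled analogously: an automorphism of $b(\cH)$ permutes the unary domains as an automorphism of $\cH$, and this assignment is a bijection between $\Aut(\cH)$ and $\Aut(b(\cH))$ that preserves order, so one structure has a $p$-automorphism iff the other does. All of this is routine diagram-chasing once the coordinatization bijection is set up carefully, which is why I would invest the first part of the proof in making that correspondence precise.
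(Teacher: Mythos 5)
Your proposal is essentially correct, and your key insight — that the link relations $\relo^{ij}_{st}$ with one side a unary domain $H_{i_s}$ force any polymorphism (or automorphism) of $b(\cH)$ to act coordinatewise on the tuple domains, which then sets up a clean coordinatization bijection transporting polymorphisms, automorphisms, and $p$-mpp-definable relations between $\cH$ and $b(\cH)$ — is exactly the right one for (2), (3), (4). Note, however, that this proposition is cited from \cite{Kazeminia25:modular} and not proved in the present paper; the only thing the paper reproduces is the forward construction of $b(\cP)$ from $\cP$, which agrees with your forward direction of (1), and the paper explicitly defers the reverse conversion (``Conversion of solutions of $\CSP(b(\cH))$ into solutions of $\CSP(\cH)$ is more complicated and we do not need it here''), so there is no paper-internal proof to compare against.

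One small remark on your reverse direction of (1): invoking Proposition~\ref{pro:GadgetExists} and Proposition~\ref{pro:ConstantCSP-main}(1) is heavier than necessary and only yields a polynomial-time reduction on its face, while the statement asserts \emph{parsimonious} interreducibility. The cleaner and genuinely parsimonious route is the direct unwinding you hint at but do not quite state: given an instance of $\#_p\CSP(b(\cH))$, replace each variable $v$ of sort $\rel_j$ by fresh $\cH$-variables $v_1,\dots,v_{\ar(\rel_j)}$, add the constraint $\rel_j(v_1,\dots,v_{\ar(\rel_j)})$, and translate each link constraint $\relo^{ij}_{st}(v,w)$ into the equality $v_s=w_t$; this is a solution-preserving bijection into $\#_p\CSP(\cH^=)$, and equality elimination by variable contraction is itself parsimonious. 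With that small adjustment the argument is complete and correct throughout.
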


The reduction from $\#_p\CSP(\cH)$ to $\#_p\CSP(b(\cH))$ works as follows. We use the alternative definition of the CSP here. Let $\cP=(V,\cC)$ be an instance of $\#_p\CSP(\cH)$, with variable set $V$ and constraint set $\cC$. We assume that for every $v\in V$ with domain $H_i\in H$, $\cP$ contains a constraint $\ang{(v),H_i}$. We construct an instance $b(\cP)$ of $\NCSP(b(\cH))$, which has variable set $V'$ and constraint set $\cC'$, as follows.
\begin{itemize}
    \item 
    For each $C\in\cC$, we introduce a variable $v_C\in V'$. Thus, essentially, $V'=\cC$.
    \item For all $C_1,C_2\in\cC$, $C_1=\ang{(v^1_1\zd v^1_m), \rel_i}, C_2=\ang{(v^2_1\zd v^2_n), \rel_j}$, such that $v^1_s=v^2_t$ for some $s\in[m],t\in[n]$ we introduce a constraint $\ang{(v_{C_1},v_{C_2}),\relo^{ij}_{st}}\in\cC'$. 
\end{itemize}
As is easily seen, for any solution $\vf$ of $\cP$, the mapping $\psi:V'\to \bigcup_{j\in[\ell]}\rel_j$ is a solution of $b(\cP)$, where $\psi$ is defined as follows. For every $v\in V'$ that corresponds to a constraint $\ang{\bs,\rel_j}$, set $\psi(v)=\vf(\bs)$. Conversion of solutions of $\CSP(b(\cH)$ into solutions of $\CSP(\cH)$ is more complicated and we do not need it here.

We are now in a position to prove Proposition~\ref{pro:p-auto-poly}. 

\begin{proof}[Proof of Proposition~\ref{pro:p-auto-poly}.]
We will construct another auxiliary problem from an instance $\cP$ of $\#_p\CSP(\cH)$. Let $\cP=(V,\cC)$ be any instance of $\#_p\CSP(\cH)$ and $b(\cP)=(V',\cC')$ the corresponding instance of $b(\cH)$. Let $\rel_v$ be the domain of $v\in V'$ and $\Sym(\rel_v)$ the symmetric group of $\rel_v$. We introduce an instance $s(\cP)$ with domains $\Sym(\rel_v)$, constraints that are derivative from the constraints of $b(\cP)$ and will allow us to conclude that either a $p$-automorphic polynomial can be used to reduce $\cP$ to a problem with smaller domains, in the sense that there exists at least one variable whose domain becomes smaller, or that $\cP$ does not have solutions that are equal to certain elements of the domains and therefore, again, can be reduced to a problem with smaller domains.

More precisely, $s(\cP)=(V'',\cC'')$ is constructed as follows.
\begin{itemize}
    \item 
    $V''=V'$ and the domain of $v\in V''$ is $\Sym(\rel_v)$;
    \item 
    for every constraint $\ang{(v_{C_1},v_{C_2}),\relo^{ij}_{st}}\in\cC'$, introduce the constraint \lb$\ang{(v_{C_1},v_{C_2}),\rela^{ij}_{st}}\in\cC''$, where
    \begin{align*}
    \rela^{ij}_{st} &=\{(\vf_1,\vf_2)\mid \vf_i\in\Sym(\rel_{v_i}), i\in \{ 1,2\}, \text{ and } (\vf_1(\ba_1),\vf_2(\ba_2))\in \relo^{ij}_{st}\\ 
    & \hspace*{2cm}\text{ for any } (\ba_1,\ba_2)\in \relo^{ij}_{st}\}.
    \end{align*}
\end{itemize}

\noindent
Let $\cP=(V,\cC)$ be a an instance of $\CSP(\cH)$, for each $C\in\cC$, $C=\ang{\bs_C,\rel_C}$, and $H_{\tau(v)}$ the domain of the variable $v\in V$. A collection $\{\vf_C\mid C\in \cC\}$ of permutations $\vf_C:\rel_C\to\rel_C$ is said to be a \emph{consistent collection of permutations} if for any $C_1,C_2$ and any $\ba_1\in\rel_1,\ba_2\in\rel_2$ such that $\pr_{\bs_1\cap\bs_2}\ba_1=\pr_{\bs_1\cap\bs_2}\ba_2$ it holds that $\pr_{\bs_1\cap\bs_2}\vf_{C_1}(\ba_1)=\pr_{\bs_1\cap\bs_2}\vf_{C_2}(\ba_2)$. 

\medskip

{\sc Claim 1.} Each solution of $s(\cP)$ provides a collection $\vf=\{\vf_C\}_{C\in\cC}$ of consistent permutations of tuples from solutions of $\cP$.
\smallskip
\renewcommand{\qedsymbol}{$\blacksquare$}
\begin{proof}[Proof of Claim 1]
    
Let $\psi:V'\to\bigcup_{j\in[\ell]}\Sym(\rel_j)$ be a solution of $s(\cP)$. Then set $\vf_C=\psi(v_C)$. The collection $\vf=\{\vf_C\}_{C\in\cC}$ is consistent. Indeed, let $C_1=\ang{\bs_1,\rel_i},C_2=\ang{\bs_2,\rel_j}\in\cC$ be two constraints and $\ba_1\in\rel_i,\ba_2\in\rel_j$ such that $\pr_{\bs_1\cap\bs_2}\ba_1=\pr_{\bs_1\cap\bs_2}\ba_2$. Then for any $s\in[\ar(\rel_i)],t\in[\ar(\rel_j)]$ such that $\bs_1[s]=\bs_2[t]$, it holds that $(\vf_{C_1},\vf_{C_2})\in\rela^{ij}_{st}$. Therefore, as $(\ba_1,\ba_2)\in\relo^{ij}_{st}$, we have $(\vf_{C_1}(\ba_1),\vf_{C_2}(\ba_2))\in\relo^{ij}_{st}$, that is, $\vf_{C_1}(\ba_1)[s]=\vf_{C_2}(\ba_2)[t]$, as required.
\end{proof}

Next, we propose a way to solve $s(\cP)$.

As every $\Sym(\rel_v)$ is a group, we can introduce a Mal'tsev operation $m=\{m_i\}_{i\in[k]}$ on the collection of $\Sym(\rel_i)$, $i\in[k]$, as follows: $m_i(x,y,z)=xy^{-1}z$, where multiplication and inverse are as in the group $\Sym(\rel_i)$.
\medskip

{\sc Claim 2.} For any $i,j\in[k]$ and any $s\in[\ar(\rel_i)], t\in[\ar(\rel_j)]$, the mapping $m$ is a polymorphism of $\rela^{ij}_{st}$.

\smallskip
\renewcommand{\qedsymbol}{$\blacksquare$}
\begin{proof}[Proof of Claim 2]
Recall that $(\ba_1,\ba_2)\in\relo^{ij}_{st}$ if and only if $\ba_1[s]=\ba_2[t]$. Therefore if $(\vf_1,\vf_2)\in\rela^{ij}_{st}$, for any $\ba_1\in\rel_i,\ba_2\in\rel_j$ such that $\ba_1[s]=\ba_2[t]$ it holds that $\vf_1(\ba_1)[s]=\vf_2(\ba_2)[t]$. Thus, for any $(\vf_{1q},\vf_{2q})\in\rela^{ij}_{st}$, $q=1,2,3$, and any $(\ba_1,\ba_2)\in\rel^{ij}_{st}$ we obtain
\begin{align*}
m(\vf_{11},\vf_{12},\vf_{13})(\ba_1)[s] &=\vf_{13}(\vf^{-1}_{12}(\vf_{11}(\ba_1)))[s]=\vf_{23}(\vf^{-1}_{22}(\vf_{21}(\ba_2)))[t]\\
&=m(\vf_{21},\vf_{22},\vf_{23})(\ba_2)[t],    
\end{align*}
implying $(m(\vf_{11},\vf_{12},\vf_{13}),m(\vf_{21},\vf_{22},\vf_{23}))\in\rela^{ij}_{st}$.
\end{proof}

\renewcommand{\qedsymbol}{$\Box$}
By Claim~2 and the results of \cite{Bulatov06:simple} for any instance $\cP$ of $\#_p\CSP(\cH)$ the problem $s(\cP)$ can be solved in polynomial time. Moreover, as $m$ is idempotent, for any $C_0=\ang{\bs_0,\rel_i}\in\cC$ and $\vf_0\in\Sym(\rel_i)$, the problem $s(\cP)+\{v_{C_0}=\vf_0\}$ can be solved in polynomial time, that is, we can verify in polynomial time whether or not there is a consistent collection of mappings $\vf=\{\vf_C\}_{C\in\cC}$ such that $\vf_{C_0}=\vf_0$. With that in mind the following transformation of $\cP$ can be performed in polynomial time.

Let $i\in[k]$ and $a\in H_i$ be such that $f_i(a,x)$ is a permutation of $H_i$ of order $p$. Then a structure $\cH'$ is constructed as follows:
\begin{itemize}
\item 
The domains are the same as those of $\cH$, except $H_i$. The domain $H_i$ is replaced with two domains: $H'_i=H_i-\{a\}$ and $H''_i=H_i-B$, where $B$ is the union of all nontrivial orbits of $f_i(a,x)$.
\item
Every relation $\rel_j$ containing $i$ in its type is replaced with $\rel'_j$ and $\rel''_j$ that are obtained from $\rel_j$ by replacing $H_i$ with $H'_i$ (respectively $H''_i$) restricting to tuples that do not contain $a$ (respectively, do not contain elements from $B$). 
\end{itemize}
Let $C_0=\ang{\bs_0,\rel_j}\in\cC$ be such that for some $v\in\bs_0$ the type of $v$ is $H_i$, say, $\bs_0[s]=v$, and there is $\ba\in\rel_j$ with $\ba[s]=a$. If there are no such $C_0$ and $\ba$, then $H_i$ can be either completely eliminated of it can be reduced by removing $a$ from it. Thus, $\cP$ itself can be considered as an instance of $\#_p\CSP(\cH')$. Otherwise let $\vf_0\in\Sym(\rel_j)$ be given by $\vf_0(\bx)=f(\ba,\bx)$. By Claim~2, it is possible to verify whether or not there exists a solution $\vf$ of $s(\cP)$ such that $\vf_{C_0}=\vf_0$. If such a solution exists, $H_i$ can be  reduced by eliminating all the nontrivial orbits of $f_i(a,x)$, since for any solution $\psi$ of $\cP$ with $\psi(v)$ belonging to such an orbit, the mappings $\vf(\psi),\vf^2(\psi)\zd\vf^{p-1}(\psi)$ are also different solutions of $\cP$ and together they contribute 0 modulo $p$ into the number of solutions of $\cP$. Suppose such a solution does not exists. As is easily seen, if there is a solution $\psi$ of $\cP$ such that $\psi(\bs_0)=\ba$, then the mappings $\vf_C\in\Sym(\rel)$, $C=\ang{\bs,\rel}\in\cC$, given by $\vf_C(\bx)=f(\psi(\bs),\bx)$, form a solution of $s(\cP)$ with $\vf_{C_0}=\vf_0$. Therefore, there is no such solution $\psi$ and the tuple $\ba$ can be removed from $\rel_j$. Repeating this procedure for every $\ba\in\rel_j$ with $\ba[s]=a$ we can either reduce the domain of $v$ by eliminating nontrivial orbits of $f_i(a,x)$, or by eliminating $a$. In both cases after applying this transformation to all constraints of $\cP$ we obtain an instance of $\#_p\CSP(\cH')$.
\end{proof}

Note that in general Proposition~\ref{pro:p-auto-poly} is not very useful, because $\cH^f$ can be any structure with $\Sp(\cH^f)\prec\Sp(\cH)$ for which $\#_p\CSP(\cH^f)$ is hard. What we are planning to use is the construction from the proof of Proposition~\ref{pro:p-auto-poly} and the following corollary.

\begin{corollary}\label{cor:3-element-auto-poly}
Let $\cH$ be a 3-element single-sorted structure such that there is a 2-automorphic polynomial $f$ of $\cH$ and for $a\in H$ the mapping $f(a,x)$ is a permutation of $H$ of order 2. Then, if $\{a\}$ and $H-\{a\}$ are 2-subalgebras of $\cH$, the structure $\cH^f$ from Proposition~\ref{pro:p-auto-poly} can be chosen such that $\#_p\CSP(\cH)$ and $\#_p\CSP(\cH^f)$ are polynomial time interreducible.
\end{corollary}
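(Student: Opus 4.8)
The plan is to observe that $\cH^f$ is nothing but $\cH$ cut into its two pieces $B=H-\{a\}$ and $\{a\}$, and then to run the obvious parsimonious reduction in the reverse direction, using the subalgebra hypothesis to reinstate $B$ as a relation. First I would pin down the shape of $\cH^f$. By our standing assumption $\cH=\cH^\const$, so $f$ is idempotent and $f(a,a)=a$; hence the permutation $f(a,x)$ of the three-element set $H$, being of order $2$, fixes $a$ and transposes the remaining two elements, so its unique nontrivial orbit is $B=H-\{a\}$. Reading this off the construction in the proof of Proposition~\ref{pro:p-auto-poly}, $\cH^f$ has exactly two domains, $H'=H-\{a\}=B$ and $H''=H-B=\{a\}$, and for every relation $\rel_j$ of $\cH$ it carries $\rel'_j=\rel_j\cap B^{\ar(\rel_j)}$ over $H'$ and $\rel''_j=\rel_j\cap\{a\}^{\ar(\rel_j)}$ over $H''$, up to the renaming of elements that makes the two sorts disjoint. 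Since Proposition~\ref{pro:p-auto-poly} already provides a polynomial time reduction of $\#_2\CSP(\cH)$ to $\#_2\CSP(\cH^f)$, only the reduction in the opposite direction needs to be supplied.

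For that direction I would first reduce $\#_2\CSP(\cH^f)$ to $\#_2\CSP(\cH+B)$ parsimoniously, and then reduce $\#_2\CSP(\cH+B)$ to $\#_2\CSP(\cH)$. The second step is immediate: $B=H-\{a\}$ is a $2$-subalgebra of $\cH$ by hypothesis, so Proposition~\ref{pro:GadgetExists} applies, and the unary relation $\{a\}$ is already present in $\cH$ since $\cH=\cH^\const$. For the first step, from an instance $\cP^f=(V^f,\tau^f,\cC^f)$ of $\#_2\CSP(\cH^f)$ I would build an instance $\cP=(V,\cC)$ of $\#_2\CSP(\cH+B)$ with $V=V^f$, all variables now of the single sort of $\cH$, by adding the unary constraint $\ang{(v),B}$ for every $v$ of sort $H'$ and $\ang{(v),\{a\}}$ for every $v$ of sort $H''$, and by replacing each constraint $\ang{\bs,\rel'_j}$ or $\ang{\bs,\rel''_j}$ of $\cP^f$ with $\ang{\bs,\rel_j}$ (any remaining constraints of $\cP^f$, which can only involve the unary relations $B$ and $\{a\}$ of $\cH^f$, are copied verbatim and are redundant given the forcing constraints just added). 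In any solution of $\cP$ the unary constraints force the former $H'$-variables into $B$ and the former $H''$-variables to $a$, and under those restrictions a $\rel_j$-constraint holds precisely when the corresponding $\rel'_j$- or $\rel''_j$-constraint does; conversely every solution of $\cP^f$ extends uniquely to a solution of $\cP$. Hence $\hom(\cP,\cH+B)=\hom(\cP^f,\cH^f)$, so in particular the two counts agree modulo $2$. Composing with the second step and with Proposition~\ref{pro:p-auto-poly} yields the claimed polynomial time interreducibility of $\#_2\CSP(\cH)$ and $\#_2\CSP(\cH^f)$.

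I do not expect a genuine obstacle here. The only point requiring care is the first step above, namely reading off from the construction inside the proof of Proposition~\ref{pro:p-auto-poly} that idempotency of $f$ forces $a\notin B$, so that the two domains of $\cH^f$ really are $H-\{a\}$ and $\{a\}$ and its relations are exactly the restrictions of the relations of $\cH$ to these two sets; once that is settled, the remainder is a routine verification that the displayed map is a parsimonious reduction together with a single application of Proposition~\ref{pro:GadgetExists}.
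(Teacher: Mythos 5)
Your proof is correct and takes essentially the same route as the paper's. The paper's own argument for the missing direction (that $\#_2\CSP(\cH^f)$ reduces to $\#_2\CSP(\cH)$) observes that the relations $\rel'_j=\rel_j\cap\{a\}^r$ and $\rel''_j=\rel_j\cap(H-\{a\})^r$ of $\cH^f$ are conjunctive-definable from $\rel_j$ together with the unary relations $\{a\}$ and $H-\{a\}$, hence $2$-mpp-definable in $\cH$ since both of those are $2$-subalgebras, and then invokes Proposition~\ref{pro:GadgetExists}. You do the same, only more explicitly: you first make the multi-sorted-to-single-sorted conversion concrete by adding the forcing constraints $B(v)$ and $\{a\}(v)$ and replacing each $\rel'_j$- or $\rel''_j$-constraint by the corresponding $\rel_j$-constraint (a parsimonious reduction to $\#_2\CSP(\cH+B)$), and then apply Proposition~\ref{pro:GadgetExists} to discharge $B$. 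The only presentational difference is the intermediate structure ($\cH+B$ in your argument versus $\cH+\{\rel'_j,\rel''_j\}_j$ implicit in the paper's); the content is the same. Your preliminary check that idempotency of $f$ forces the nontrivial orbit of $f(a,\cdot)$ to be exactly $H-\{a\}$, so that $\cH^f$ has domains $H-\{a\}$ and $\{a\}$, is a useful clarification the paper leaves implicit.
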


\begin{proof}
In order to prove Corollary~\ref{cor:3-element-auto-poly} we only need to show that the problem $\#_p\CSP(\cH^f)$ is polynomial time reducible to $\#_p\CSP(\cH)$. Let $\rel_j$ be a relation of $\cH$ of arity $r$. Then 
\[\rel'_j=\rel_j\cap(\{a\}\tm\dots\tm\{a\})\quad \text{and}\quad \rel''_j=\rel_j\cap((H-\{a\})\tm\dots\tm(H-\{a\})). 
\]
Since $\{a\}$ and $H-\{a\}$ are 2-subalgebras of $\cH$, the relation $\rel'_j,\rel''_j$ are conjunctive definable in $\cH$. The result follows from Proposition~\ref{pro:GadgetExists}.
\end{proof}

\section{Tools for Reduction}\label{sec:tools}

This section introduces the main technical tools used in our reductions. First, 
we revisit the framework of automorphism-stable sets and Möbius inversion, originally developed by Bulatov and Kazeminia~\cite{DBLP:conf/stoc/BulatovK22}. We extend their approach to handle more general cases required in our setting.
Then, we define indicator problems and explain how the existence of a Mal'tsev polymorphism relates to the presence of a specific predicate in the set of all (multi-sorted) pp-definable relations.
Then, we apply all the constructions and tools that we developed in this section to two special classes of relational structures: $p$-conservative and three-element structures.

\subsection{Automorphism-stable Sets}\label{sec:auto-stable}
Let $\cH$ be a (multi-sorted) relational structure and $n$ a natural number. We call a subset $A \subseteq \cH^n$ \emph{automorphism-stable} if there exists $\ba \in A$ such that the set
\[
\Stab(\ba, A) = \{ \pi \in \Aut(\cH^n) \mid \pi(\ba) \in A \}
\]
is a subgroup of $\Aut(\cH^n)$.  Note that $\Stab(\ba, A)$ is always nonempty since it contains the identity automorphism.

Given a collection $\vc\ba s$ of tuples of elements from $\cH^n$, we define \lb $\Stab(\vc\ba s)$ to be the set of automorphisms $\pi \in \Aut(\cH^n)$ such that each $\ba_i$ is a fixed point of $\pi$, i.e.,
\[
\Stab(\vc\ba s) = \{ \pi \in \Aut(\cH^n) \mid \pi(\ba_i) = \ba_i \text{ for all } i = 1, \ldots, s \}.
\]

We also define a combination of these notions: For $\vc\ba s \in \cH^n$ and $A \subseteq \cH^n$, the set
\[
\Stab(\vc\ba s, A) = \{ \pi \in \Aut(\cH^n) \mid \pi(\ba_i) = \ba_i \text{ for all } i, \text{ and } \pi(A) \subseteq A \}
\]
consists of automorphisms that fix each element of $\vc\ba s$ and map $A$ into itself.
We start with a generalization of Lemma~3.8 from \cite{DBLP:conf/stoc/BulatovK22}. First, we introduce a specific type of homomorphism. For (multi-sorted) relational structures $\cG$ and $\cH$, and distinguished elements $x_1,\ldots,x_m \in \cG$ and $a_1,\ldots,a_m \in \cH$, we define 
\[
\Hom\bigl((\cG,x_1,\ldots,x_m),(\cH,a_1,\ldots,a_m)\bigr)
\]
to be the set of all homomorphisms $\varphi : \cG \to \cH$ such that $\varphi$ is a homomorphism from $\cG$ to $\cH$ and, in addition, $\varphi(x_i) = a_i$ for all $i \in [m]$.  
Furthermore, for $A \subseteq \cH$, we set
\[
\Hom\bigl((\cG,x),(\cH,A)\bigr) \;=\; \bigcup_{a \in A} \Hom\bigl((\cG,x),(\cH,a)\bigr).
\]
The counting versions of these definitions are defined analogously.

\begin{lemma}[see Lemma~3.8, \cite{DBLP:conf/stoc/BulatovK22}]\label{lem:mobius-point-appendix}
Let $p$ be prime, $\cH$ a multi-sorted relational structure. 
\begin{itemize}
    \item[(1)]
    Let $A\sse\cH^n$ be an automorphism-stable set. If for every $\cG$ and $x\in\cG$, $\hom((\cG,x),(\cH^n,A))\equiv c\pmod p$, where $c$ does not depend on $\cG$ and $\bx$, then the structure $\cH^n$ has a $p$-automorphism $\pi\in\Stab(\ba,A)$ for some $\ba\in A$.
    \item[(2)]
    Let $\vc\ba s\in\cH$. If for every $\cG$ and $\vc xs\in\cG$, 
    \[
    \hom((\cG,\vc xs),(\cH^n,\vc\ba s))\equiv c\pmod p,
    \]
    where $c$ does not depend on $\cG$ and $\vc xs$, then the structure $\cH^n$ has a $p$-automorphism $\pi\in\Stab(\vc\ba k)$.
    \item[(3)]
    Let $\vc\ba s\in\cH$ and $A\sse\cH^n$. If for every $\cG$ and $\vc xs,x\in\cG$, 
    \[
    \hom((\cG,\vc xs,x),(\cH,\vc\ba s,A))\equiv c\pmod p,
    \]
    where $c$ does not depend on $\cG$ and $\vc xs$, then the structure $\cH^n$ has a $p$-automorphism $\pi\in\Stab(\vc\ba s,A)$.
\end{itemize}
\end{lemma}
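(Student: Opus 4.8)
The three items are structurally identical — a statement about homomorphism counts to $\cH^n$ from pointed structures being congruent to a constant implies the existence of a $p$-automorphism of $\cH^n$ fixing prescribed data — so the plan is to reduce (2) and (3) to the mechanism of (1), which is in turn the multi-sorted/automorphism-stable-set version of Lemma~3.8 of \cite{DBLP:conf/stoc/BulatovK22}. The core of that mechanism is a M\"obius inversion argument over the subgroup lattice of $\Aut(\cH^n)$. First I would set up the M\"obius machinery: for the automorphism-stable set $A$, fix the witness $\ba\in A$ with $\Stab(\ba,A)$ a subgroup, and for each subgroup $K\le\Aut(\cH^n)$ consider the count of homomorphisms $\varphi:(\cG,x)\to(\cH^n,\cdot)$ whose ``type'' (the stabilizer of $\varphi(x)$ together with the condition $\varphi(x)\in A$) is exactly $K$. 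Summing these counts with M\"obius coefficients $\mu(K, \Aut(\cH^n))$ over the lattice isolates, modulo $p$, the contribution of orbits on which the action is free; the hypothesis that $\hom((\cG,x),(\cH^n,A))\equiv c\pmod p$ for \emph{all} $\cG,x$ forces, by choosing $\cG$ cleverly (e.g. taking $\cG=\cH^n$ itself, or a free structure, with $x$ ranging over $A$), a congruence that can only hold if some nontrivial $p$-element of $\Aut(\cH^n)$ lies in $\Stab(\ba,A)$. The key point to verify is that the orbit-counting identities behave well modulo $p$: orbits of size divisible by $p$ vanish, and the surviving terms are exactly those fixed by a $p$-subgroup.

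For part (2), I would take $A = \{\ba_1\}\times\cdots\times\{\ba_s\}$ viewed inside $(\cH^n)^s \cong \cH^{ns}$, or more directly observe that fixing the $s$ distinguished points is the same as working in the structure $\cH^n$ with $s$ added constants $\const_{\ba_1},\ldots,\const_{\ba_s}$; the pointed homomorphism count $\hom((\cG,\vc xs),(\cH^n,\vc\ba s))$ is precisely $\hom(\cG',\cH^n{}')$ for the corresponding constant-expanded structures, and $\Stab(\vc\ba s)$ is the automorphism group of that expansion. One can then either apply (1) with $A$ a single orbit of the trivial-type, or re-run the M\"obius argument over $\Aut(\cH^n)$ restricted to automorphisms fixing $\vc\ba s$. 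For part (3) I would combine both: work with the constant-expanded structure $\cH^n{}'$ (fixing $\vc\ba s$) \emph{and} the automorphism-stable set $A$ for the extra point $x$, so that (3) is literally (1) applied to $\cH^n{}'$ in place of $\cH^n$, with the same set $A$. The only thing to check is that $A$ remains automorphism-stable for the smaller group $\Stab(\vc\ba s)\le\Aut(\cH^n)$ — but $\Stab(\ba, A)\cap\Stab(\vc\ba s)$ is again a subgroup (intersection of subgroups), so the witness $\ba$ still works, and a $p$-automorphism produced for $\cH^n{}'$ is exactly an element of $\Stab(\vc\ba s, A)$.

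I expect the main obstacle to be the bookkeeping needed to make the M\"obius inversion over $\Aut(\cH^n)$ genuinely work in the multi-sorted setting with an automorphism-stable (rather than orbit) set $A$: one must argue that the relevant ``type'' function on homomorphisms takes values in a sublattice of subgroups, that the counts of fixed homomorphisms satisfy the inclusion–exclusion identity $\hom((\cG,x),(\cH^n,A)) = \sum_{K} (\text{count of type }\supseteq K)$, and that after M\"obius inversion the term indexed by $\Aut(\cH^n)$ itself — the free part — is what is forced to vanish modulo $p$. This is precisely where Lemma~3.8 of \cite{DBLP:conf/stoc/BulatovK22} does the work in the single-sorted orbit case, so the task is to check that nothing in that argument used single-sortedness in an essential way and that ``automorphism-stable'' is exactly the hypothesis that lets the subgroup $\Stab(\ba,A)$ play the role that a point-stabilizer plays there. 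Once that is in place, parts (2) and (3) follow by the reductions above with no further difficulty.
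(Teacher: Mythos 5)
Two points. First, the paper does \emph{not} reprove parts (1) and (2): it cites them from \cite{DBLP:conf/stoc/BulatovK22} and proves only (3), and it does so directly, by M\"obius inversion over the partition lattice $\Part(H)$ of the universe of $\cH$, not over the subgroup lattice of $\Aut(\cH^n)$. Concretely, for each equivalence relation $\theta$ it instantiates the hypothesis with $\cG=\cH/_\theta$ and the images of $\ba_1,\ldots,\ba_s,\ba$ as the distinguished points, obtaining $M(\theta)\equiv c\pmod p$ for every $\theta$; the M\"obius weights on $\Part(H)$ then extract $N(=_H)$, the number of \emph{injective} pointed endomorphisms (which for a finite structure are automorphisms), and since $\sum_\theta w(\theta)=0$ this count is a nonzero multiple of $p$, so Cauchy's theorem supplies the $p$-automorphism. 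Your description of the mechanism --- inversion over the subgroup lattice of $\Aut(\cH^n)$ by a ``stabilizer type'', with $\cG$ ``chosen cleverly'' --- is not this argument, and it is not clear it yields a well-posed inclusion--exclusion at all: there is no lattice of types in the needed sense, and nothing in your sketch plays the role of the quotients $\cH/_\theta$. If your plan is to re-derive (1) rather than cite it, this is a genuine gap.

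Second, the constant-expansion reduction of (2) and (3) to (1) is a clean alternative to the paper's direct proof of (3): for (2), apply (1) to the constant-expanded structure with $A$ the whole universe (then $\Stab(\ba,A)=\Aut$ trivially, so automorphism-stability is free); for (3), apply it with the given $A$. But be explicit that this injects an extra hypothesis: the reduction requires $A$ to be automorphism-stable relative to $\Aut$ of the expanded structure, and your check --- that $\Stab(\ba,A)\cap\Stab(\vc\ba s)$ is a subgroup --- already assumes $\Stab(\ba,A)$ is a subgroup, i.e.\ that $A$ is automorphism-stable for the \emph{original} group, which is not stated in item (3). Also, the alternative encoding $A=\{\ba_1\}\times\cdots\times\{\ba_s\}\sse(\cH^n)^s$ you float for (2) does not recover the pointed count: homomorphisms into a direct power decompose coordinatewise, so $\hom((\cG,x),((\cH^n)^s,A))$ is a product of $s$ single-point counts, not $\hom((\cG,\vc xs),(\cH^n,\vc\ba s))$; you would have to use the constant-expansion route, not this one.
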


\begin{proof}
Parts (1) and (2) follow from Lemma~3.8 of~\cite{DBLP:conf/stoc/BulatovK22}, so we only need to prove item (3). For completeness, we briefly recall the definition of the \emph{M\"obius inversion formula} in the form we require. Let $H$ be a finite set, $\Part(H)$ denote the set of equivalence relations on $H$ ordered with respect to inclusion, and let $M, N : \Part(H) \to \mathbb{Z}$ be functions on $\Part(H)$ such that
\[
M(\theta) = \sum_{\eta \geq \theta} N(\eta).
\]
Then
\[
N(\zeta) = \sum_{\theta \in \Part(H)} w(\theta) \cdot M(\theta),
\]
where the function $w : \Part(H) \to \mathbb{Z}$ is defined recursively as follows:
\begin{itemize}
  \item $w(\zeta) = 1$, where $\zeta$ is the top equivalence relation corresponding to the partition with only one class,
  \item for any partition $\theta < \zeta$, 
  \[
  w(\theta) = -\sum_{\gamma > \theta} w(\gamma).
  \]
\end{itemize}

(3) We use the M\"obius inversion formula on $\Part(H)$. For the multisorted set $H=\{H_i\}_{i\in[k]}$, an equivalence relation is a collection $\{\th_i\}_{i\in[k]}$ of equivalence relations on the $H_i$, and the relation $\gamma\le\th$ is defined by $\gm_i\le\th_i$ for $i\in[k]$.  Let $\ba\in A$ and set 
\begin{align*}
M(\th) &=\hom((\cH/_\th,\ba_1/_\th\zd\ba_k/_\th,\ba/_\th),(\cH,\vc\ba k, A)),\\ 
N(\th) &=\inj((\cH/_\th,\ba_1/_\th\zd\ba_k/_\th,\ba/_\th),(\cH,\vc\ba k,A)).
\end{align*}
Then $N(=_H)\ne0$, as it includes the identity mapping, and we have
\begin{align*}
N(=_H) &=\sum_{\th\in\Part(H)}w(\th) M(\th)\\ &=\sum_{\th\in\Part(H)}w(\th)\hom((\cH/_\th,\ba_1/_\th\zd\ba_k/_\th,\ba/_\th),(\cH,\vc\ba k, A))\\
&\equiv c\cdot\sum_{\th\in\Part(H)}w(\th)\equiv0\pmod p.
\end{align*}
Note that $N(=_H)=\Stab(\vc\ba k,A)$ and therefore is a subgroup of $\Aut(\cH)$. As it has order that is a multiple of $p$, it contains a $p$-automorphism.
\end{proof}

\subsection{Indicator Problem and Indicator Obstruction}\label{sec:indicator}

\newcommand{\vecthree}[3]{\begin{pmatrix} #1 \\ #2 \\ #3 \end{pmatrix}}

In this section, we show that in many cases, if a relational structure $\cH$ is not strongly $p$-rectangular, then there is a rectangularity obstruction of a very specific kind. Recall that for a (multi-sorted) relational structure $\cH = (\{H_i\}_{i\in[k]}; \vc\rel \ell)$, the \emph{ternary indicator problem} $\cI_3(\cH)$ is an instance of $\CSP(\cH)$ that characterizes the ternary polymorphisms of $\cH$.

\begin{itemize}
    \item 
    The set of variables $V$ is the set $H^{(3)} = H_1^3 \cup \dots \cup H_k^3$ of all triples over the domains of $\cH$, i.e., for each triple $\ba = (\ba[1], \ba[2], \ba[3]) \in H_i^3$, the set $V$ contains a variable $x_\ba$ with domain $H_i$. 
    \item 
    For every $s$-ary relation $\rel_j$ of type $(\vc is)$ from $\cH$ and for any tuples $\ba_1, \dots, \ba_s \in H^{(3)}$, where each $\ba_q = (\ba_q[1], \ba_q[2], \ba_q[3]) \in H_{i_q}^3$, we include the constraint
    $
    \left\langle (x_{\ba_1}, \dots, x_{\ba_s}), \rel_j \right\rangle
    $
    in the instance $\cI_3(\cH)$ if for every $t \in [3]$, the tuple $(\ba_1[t], \dots, \ba_s[t])$ belongs to $\rel_j$. 
    That is, the projection of the sequence $(\ba_1, \dots, \ba_s)$ onto each coordinate $t \in [3]$ is a tuple in $\rel_j$.
\end{itemize}

As is easily seen, every solution of $\cI_3(\cH)$ defines a ternary multi-sorted operation $\{f_i\}_{i \in [k]}$ with $f_i: H_i^3 \to H_i$.

\begin{example}\label{exa:indicator1}
(a) Let $\cH_1$ be a relational structure with only one domain $H=\{0,1\}$ and one (binary) relation $\rel=\left(\begin{array}{cc} 0&1\\ 1&0\end{array}\right)$, where columns represent the tuples from $\rel$. Then $\cI_3(\cH_1)$ contains 8 variables $x_{000},x_{001},x_{010},x_{011},x_{100},x_{101}, x_{110},x_{111}$, that is, all possible triples of elements of $\cH$, and the constraints are imposed on every pair $(x_{abc},x_{def})$ such that $(a,d),(b,e),(c,f)\in\rel$. In this case this means all pairs of `dual' variables of the form $(x_{abc},x_{\neg a\neg b\neg c})$. Thus, the constraints are 
\[
\ang{(x_{000},x_{111}),\rel}, \ang{(x_{001},x_{110}),\rel}, \ang{(x_{010},x_{101}),\rel}, \ang{(x_{011},x_{100}),\rel}.
\]

(b) For a structure $\cH_2$ let the domain $H$ be the same set $\{0,1\}$, but the only binary relation is given by $\relo=\left(\begin{array}{ccc} 0&0&1\\ 0&1&0\end{array}\right)$. The set of variables of $\cI_3(\cH_2)$ is the same as before, but the constraints are very different. As is easily seen, the constraints in this case are 
\[
\begin{array}{llll}
\ang{(x_{000},x_{000}),\relo} & \ang{(x_{000},x_{001}),\relo} & \ang{(x_{000},x_{010}),\relo} & \ang{(x_{000},x_{011}),\relo} \\
\ang{(x_{000},x_{100}),\relo} & \ang{(x_{000},x_{101}),\relo} & \ang{(x_{000},x_{110}),\relo} & \ang{(x_{000},x_{111}),\relo} \\
\ang{(x_{001},x_{000}),\relo} & \ang{(x_{001},x_{010}),\relo} & \ang{(x_{001},x_{100}),\relo} & \ang{(x_{001},x_{110}),\relo} \\
\ang{(x_{010},x_{000}),\relo} & \ang{(x_{010},x_{001}),\relo} & \ang{(x_{010},x_{100}),\relo} & \ang{(x_{010},x_{101}),\relo} \\
\ang{(x_{100},x_{000}),\relo} & \ang{(x_{100},x_{001}),\relo} & \ang{(x_{100},x_{010}),\relo} & \ang{(x_{100},x_{011}),\relo} \\
\ang{(x_{011},x_{000}),\relo} & \ang{(x_{011},x_{100}),\relo} & \ang{(x_{101},x_{000}),\relo} & \ang{(x_{101},x_{010}),\relo} \\
\ang{(x_{110},x_{000}),\relo} & \ang{(x_{110},x_{001}),\relo} & \ang{(x_{111},x_{000}),\relo} &
\end{array}
\]
\end{example}

\begin{lemma}[\cite{Jeavons99:expressive}]\label{lem:multisorted-indicator}
For a (multi-sorted) relational structure $\cH$, a ternary (multi-sorted) operation on the domains of $\cH$ is a solution of $\cI_3(\cH)$ if and only if it is a ternary polymorphism of $\cH$. 
\end{lemma}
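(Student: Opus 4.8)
The plan is to unwind both directions of the equivalence to the same elementary condition on triples of tuples drawn from the relations of $\cH$; the statement is the standard "indicator problem" characterization of polymorphisms, and the proof is pure bookkeeping once the correspondences are set up.

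First I would identify solutions of $\cI_3(\cH)$ with ternary multi-sorted operations on the domains of $\cH$. By construction the variable set of $\cI_3(\cH)$ is $H^{(3)}=\bigcup_{i\in[k]}H_i^3$, and the variable $x_\ba$ with $\ba\in H_i^3$ has domain $H_i$. Hence an assignment $\varphi$ respecting the type function is exactly the same data as a family $f=\{f_i\}_{i\in[k]}$ of maps $f_i:H_i^3\to H_i$, via $f_i(\ba)=\varphi(x_\ba)$, and this passage is a bijection between type-respecting assignments and ternary (multi-sorted) operations. So it remains to show that $\varphi$ satisfies every constraint of $\cI_3(\cH)$ if and only if $f$ is a polymorphism of $\cH$.

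Next I would analyze the constraints. By the definition of $\cI_3(\cH)$, for a relation $\rel_j$ of arity $s$ and type $(i_1,\dots,i_s)$, a tuple of variables $(x_{\ba_1},\dots,x_{\ba_s})$ with $\ba_q\in H_{i_q}^3$ gives a constraint $\ang{(x_{\ba_1},\dots,x_{\ba_s}),\rel_j}$ precisely when for each $t\in[3]$ the "column" $(\ba_1[t],\dots,\ba_s[t])$ lies in $\rel_j$. Setting $\bc^t=(\ba_1[t],\dots,\ba_s[t])$, this says exactly $\bc^1,\bc^2,\bc^3\in\rel_j$; conversely any triple $(\bc^1,\bc^2,\bc^3)\in\rel_j^3$ determines $\ba_q=(\bc^1[q],\bc^2[q],\bc^3[q])$ and hence the constraint. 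Thus the constraints of $\cI_3(\cH)$ coming from $\rel_j$ are in bijection with triples $(\bc^1,\bc^2,\bc^3)\in\rel_j^3$.

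Finally I would match the two conditions. The assignment $\varphi$ satisfies the constraint corresponding to $(\bc^1,\bc^2,\bc^3)\in\rel_j^3$ if and only if $(\varphi(x_{\ba_1}),\dots,\varphi(x_{\ba_s}))\in\rel_j$, i.e.
\[
\bigl(f_{i_1}(\bc^1[1],\bc^2[1],\bc^3[1]),\dots,f_{i_s}(\bc^1[s],\bc^2[s],\bc^3[s])\bigr)\in\rel_j,
\]
which is exactly the assertion that $f$ preserves $\rel_j$ when applied to the three tuples $\bc^1,\bc^2,\bc^3$. Quantifying over all $j\in[\ell]$ and all triples in $\rel_j^3$ shows that $\varphi$ is a solution of $\cI_3(\cH)$ iff $f$ is a polymorphism of every relation of $\cH$, i.e.\ a polymorphism of $\cH$. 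There is no genuine obstacle here beyond keeping the multi-sorted indices straight: the only content is recognizing that "the $t$-th column of the chosen tuple of triples lies in $\rel_j$" is the same condition as "the $t$-th argument tuple fed to $f$ lies in $\rel_j$."
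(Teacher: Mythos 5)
Your proof is correct. The paper states this lemma with a citation to Jeavons~et~al.\ and gives no proof of its own, so there is no in-paper argument to compare against; your unwinding of the two definitions --- matching constraints of $\cI_3(\cH)$ with triples of tuples in each $\rel_j$ and then matching constraint satisfaction with preservation of $\rel_j$ --- is exactly the standard argument for the indicator-problem characterization of polymorphisms, correctly adapted to the multi-sorted setting.
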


We will call the set of all solutions of $\cI_3(\cH)$ the \emph{indicator predicate} $\Upsilon_3(\cH)$. It is not difficult to see that $\Upsilon_3(\cH)$ is conjunctive-definable in $\cH$. Indeed, let $\cC$ be the set of all constraints as defined above in $\cI_3(\cH)$. Then we define the indicator predicate as the conjunction of all those constraints
\[
\Upsilon_3(\cH)(\bx) = \bigwedge_{\ang{\bs,\rel_j} \in \cC} \rel_j(\bs),
\]
where $\bx$ is the tuple consisting of all variables $x_{(a,b,c)}$ indexed by triples $(a,b,c) \in H^{(3)}$.

We now define certain coordinate sets used to analyze $\Upsilon_3(\cH)$. Let
\[
I_\cH = \{(a, b, b) \mid a, b \in H_i,\ i \in [k]\}, \quad
J_\cH = \{(b, b, a) \mid a, b \in H_i,\ i \in [k],\ a \ne b\}.
\]
Note that the diagonal triples $(a, a, a)$ are included in $I_\cH$. Enumerate these sets as\lb 
$
I_\cH = \{(a_1, b_1, b_1), \dots, (a_N, b_N, b_N)\}$,
$J_\cH = \{(c_1, c_1, d_1), \dots, (c_M, c_M, d_M)\},
$
and define tuples
\[
\ba_\cH = (a_1, \dots, a_N),\ 
\bb_\cH = (b_1, \dots, b_N),\ 
\bc_\cH = (c_1, \dots, c_M),\ 
\bd_\cH = (d_1, \dots, d_M).
\]
Then 
$\ba_\cH, \bb_\cH \in \pr_{I_\cH}\Upsilon_3(\cH)$, and $\bc_\cH, \bd_\cH \in \pr_{J_\cH} \Upsilon_3(\cH)$. Also, 
\[
(\ba_\cH,\bc_\cH),(\bb_\cH,\bc_\cH),(\bb_\cH,\bd_\cH)\in\pr_{I_\cH\cup J_\cH}\Upsilon_3(\cH).
\]
Moreover, $\cH$ admits a Mal'tsev polymorphism if and only if\lb $(\ba_\cH, \bd_\cH) \in \pr_{I_\cH \cup J_\cH} \Upsilon_3(\cH)$. 

Indeed, this condition tests whether the tuple $(\ba_\cH, \bd_\cH)$, which represents applying the Mal'tsev term coordinate-wise to the elements of each triple from $I_\cH\cup J_\cH$, belongs to the projection of the relation $\Upsilon_3(\cH)$. That is, since 
\[
(\ba_\cH, \bc_\cH),\ (\bb_\cH, \bc_\cH),\ (\bb_\cH, \bd_\cH) \in \pr_{I_\cH \cup J_\cH} \Upsilon_3(\cH),
\]
the closure under a Mal'tsev polymorphism would imply that the corresponding output tuple, $(\ba_\cH, \bd_\cH)$, must also be in the relation. So, its membership characterizes whether such a polymorphism exists.

\begin{example}\label{exa:indicator2}
Let us reconsider the structures $\cH_1,\cH_2$ from Example~\ref{exa:indicator1} and their indicator problems. The following tables contain all the solutions of $\cI_3(\cH_1)$ and $\cI_3(\cH_2)$, that is, the tuples of $\Upsilon_3(\cH_1),\Upsilon_3(\cH_2)$. The rows are labeled with the variables of $\cI_3(\cH_1),\cI_3(\cH_2)$.
\begin{align*}
& \Upsilon_3(\cH_1):\\ 
& \begin{array}{|c||c|c|c|c|c|c|c|c|c|c|c|c|c|c|c|c|}
     x_{000} & 0 & 0 & 0 & 0 & 0 & 0 & 0 & 0 & 1 & 1 & 1 & 1 & 1 & 1 & 1 & 1 \\
     x_{001} & 0 & 0 & 1 & 1 & 0 & 0 & 1 & 1 & 0 & 0 & 1 & 1 & 0 & 0 & 1 & 1 \\
     x_{010} & 0 & 1 & 0 & 1 & 0 & 1 & 0 & 1 & 0 & 1 & 0 & 1 & 0 & 1 & 0 & 1 \\
     x_{011} & 0 & 1 & 1 & 0 & 1 & 0 & 0 & 1 & 0 & 0 & 0 & 1 & 1 & 1 & 1 & 0 \\
     x_{100} & 1 & 0 & 0 & 1 & 0 & 1 & 1 & 0 & 1 & 1 & 1 & 0 & 0 & 0 & 0 & 1 \\
     x_{101} & 1 & 0 & 1 & 0 & 1 & 0 & 1 & 0 & 1 & 0 & 1 & 0 & 1 & 0 & 1 & 0 \\
     x_{110} & 1 & 1 & 0 & 0 & 1 & 1 & 0 & 0 & 1 & 1 & 0 & 0 & 1 & 1 & 0 & 0 \\
     x_{111} & 1 & 1 & 1 & 1 & 1 & 1 & 1 & 1 & 0 & 0 & 0 & 0 & 0 & 0 & 0 & 0 \\
\end{array} \\ 
& \Upsilon_3(\cH_2):\\
& \begin{array}{|c||c|c|c|c|c|c|c|c|c|}
     x_{000} & 0 & 0 & 0 & 0 & 0 & 0 & 0 & 0 & 0 \\
     x_{001} & 0 & 0 & 1 & 0 & 0 & 1 & 0 & 0 & 0 \\
     x_{010} & 0 & 1 & 0 & 0 & 0 & 0 & 1 & 0 & 0 \\
     x_{011} & 0 & 1 & 1 & 0 & 0 & * & * & 0 & * \\
     x_{100} & 1 & 0 & 0 & 0 & 0 & 0 & 0 & 1 & 0 \\
     x_{101} & 1 & 0 & 1 & 0 & 0 & * & 0 & * & * \\
     x_{110} & 1 & 1 & 0 & 0 & 0 & 0 & * & * & * \\
     x_{111} & 1 & 1 & 1 & 0 & 1 & * & * & * & * 
\end{array}
\end{align*}

Every column in these tables represents a ternary polymorphism of $\cH_1,\cH_2$, respectively. Note that the first three polymorphisms in both cases are the projections, their values equal the first, second, and third component of the triple representing the label of variables of the indicator problem. Projections are polymorphisms of every structure, and therefore are always solutions of the indicator problem. 

The relation $\Upsilon_3(\cH_1)$ contains every polymorphism $f(x,y,z)$ satisfying the condition $f(\neg x,\neg y,\neg z)=\neg f(x,y,z)$, also known as the condition of \emph{self-duality}. The relation $\Upsilon_3(\cH_2)$ cannot be described by such a simple condition. The stars in the table for $\Upsilon_3(\cH_2)$ mean that no matter what values are in these positions, the tuple belongs to $\Upsilon_3(\cH_2)$.

The sets $I_{\cH_1}=I_{\cH_2}=\{000,011,100,111\}$ are equal in this case and so are $J_{\cH_1}=J_{\cH_2}=\{001,100\}$. The special tuples $\ba_{\cH_1},\bb_{\cH_1},\bc_{\cH_1},\bd_{\cH_1}$ and $\ba_{\cH_2},\bb_{\cH_2},\lb \bc_{\cH_2},\bd_{\cH_2}$ are also equal, as they are restrictions of the tuples corresponding to the same projections. Specifically,
\begin{align*}
& \ba_{\cH_1}=\ba_{\cH_2}=(0,0,1,1),\ \bb_{\cH_1}=\bb_{\cH_2}=(0,1,0,1),\\ 
& \bc_{\cH_1}=\bc_{\cH_2}=(0,1),\ \bd_{\cH_1}=\bd_{\cH_2}=(1,0). 
\end{align*}
As was observed, the presence of a Mal'tsev polymorphism is equivalent to the conditions 
\[
(\ba_{\cH_1},\bd_{\cH_1})\in\pr_{I_{\cH_1}\cup J_{\cH_1}}\Upsilon_3(\cH_1),\quad
(\ba_{\cH_2},\bd_{\cH_2})\in\pr_{I_{\cH_2}\cup J_{\cH_2}}\Upsilon_3(\cH_2),
\]
that is 
$
(0,0,1,1,1,0)\in \pr_{I_{\cH_1}\cup J_{\cH_1}}\Upsilon_3(\cH_1), \pr_{I_{\cH_2}\cup J_{\cH_2}}\Upsilon_3(\cH_2).
$
It is not difficult to see that such a tuple exists in $\pr_{I_{\cH_1}\cup J_{\cH_1}}\Upsilon_3(\cH_1)$, but not in $\pr_{I_{\cH_2}\cup J_{\cH_2}}\Upsilon_3(\cH_2)$.
\hbox{ }\hfill$\Box$
\end{example}

While $\Upsilon_3(\cH)$ characterizes the ternary polymorphisms of $\cH$, we now define a construction that characterizes the ternary polymorphisms of $\ang{\cH}_p$.

Note that every polymorphism of $\ang{\cH}_p$ is also a polymorphism of $\cH$. For any ternary polymorphism $f$ of $\cH$ that is not a polymorphism of $\ang{\cH}_p$, there exists a relation $\relo_f \in \ang{\cH}_p$ such that $f$ fails to preserve $\relo_f$. Let $\cH^{\dg p}$ be the expansion of $\cH$ by these relations 
\[
\cH^{\dg p}=\cH+\{\relo_f \mid f \in \Pol_3(\cH) - \Pol_3(\ang{\cH}_p)\}.
\]
The following lemma is straightforward.
\begin{lemma}\label{lem:indicator-definitions}
Let $\cH$ be a (multi-sorted) relational structure. Then
\begin{enumerate}
    \item $\Upsilon_3(\cH)$ is conjunctive-definable in $\cH$,
    \item $\Upsilon_3(\cH^{\dg p})$ is $p$-mpp-definable in $\cH$.
\end{enumerate}
\end{lemma}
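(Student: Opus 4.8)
The plan is to prove the two items of Lemma~\ref{lem:indicator-definitions} separately, since they unpack two different definitions already set up in the text.

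For item (1), I would simply make precise the remark that was already made informally right after Lemma~\ref{lem:multisorted-indicator}. Recall that $\Upsilon_3(\cH)$ is, by definition, the set of solutions of the instance $\cI_3(\cH)$ of $\CSP(\cH)$, whose variables are the triples in $H^{(3)}$ and whose constraints form the finite set $\cC$ described above. A tuple $\bx$ indexed by $H^{(3)}$ is a solution exactly when it satisfies every constraint $\ang{\bs,\rel_j}\in\cC$ simultaneously, i.e.\ when $\bigwedge_{\ang{\bs,\rel_j}\in\cC}\rel_j(\bs)$ holds. Since $\cC$ is finite and each conjunct is an atomic formula in a predicate of $\cH$ applied to some of the variables, this is a conjunctive (existential-quantifier-free, equality-free) definition of $\Upsilon_3(\cH)$ in $\cH$; by Proposition~\ref{pro:GadgetExists} this is the $s=0$ case of a $p$-mpp-definition as well, but the statement here only needs the bare conjunctive definability, which is immediate.

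For item (2), I want to show $\Upsilon_3(\cH^{\dg p})$ is $p$-mpp-definable in $\cH$. By item (1) applied to $\cH^{\dg p}$ in place of $\cH$, the relation $\Upsilon_3(\cH^{\dg p})$ is conjunctive-definable in $\cH^{\dg p}$. Now $\cH^{\dg p}$ is the expansion $\cH+\{\relo_f\mid f\in\Pol_3(\cH)-\Pol_3(\ang{\cH}_p)\}$, and each such $\relo_f$ was chosen to lie in $\ang{\cH}_p$, i.e.\ each $\relo_f$ is itself $p$-mpp-definable in $\cH$. The task is therefore to combine a conjunctive definition over the larger signature with the $p$-mpp-definitions of the new predicates and obtain a single $p$-mpp-definition over $\cH$. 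Since there are only finitely many ternary operations on a finite multi-sorted domain, the list $\{\relo_f\}$ is finite, so only finitely many $p$-mpp-formulas are involved. Using Proposition~5.6 of \cite{DBLP:conf/stoc/BulatovK22} we may take each $\relo_f(\bz_f)=\existspr\bw_f\ \Psi_f(\bz_f,\bw_f)$ to be a \emph{strict} $p$-mpp-definition, so that every satisfying assignment to $\bz_f$ has exactly one (mod $p$) extension to $\bw_f$. Substituting these formulas for the atomic occurrences of the $\relo_f$ in the conjunctive definition of $\Upsilon_3(\cH^{\dg p})$, using fresh copies of the $\bw_f$ variables for each occurrence, and putting a single block of $\existspr$ quantifiers over all the introduced $\bw$-variables in front of the resulting conjunction, yields a $p$-mpp-formula over $\cH$; strictness guarantees that the total number of extensions is, modulo $p$, the product of $1$'s over all substituted occurrences, hence $\equiv1$ precisely on the tuples of $\Upsilon_3(\cH^{\dg p})$, so the defined relation is exactly $\Upsilon_3(\cH^{\dg p})$.

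The main obstacle is the subtlety, flagged in the text, that modular quantifiers do not compose as freely as ordinary ones: $\existspr$ over a conjunction is not in general obtained by nesting $\existspr$'s. This is exactly why strictness is essential: when every inner $p$-mpp gadget has a unique (mod $p$) witness for each accepted tuple, the outer count over the whole conjunction multiplies these $1$'s and stays $1$, avoiding any cancellation. I would phrase this carefully as a small lemma (substitution of strict $p$-mpp-definitions into a conjunctive formula yields a strict $p$-mpp-definition), essentially re-deriving the composition half of Proposition~\ref{pro:GadgetExists} at the level of definitions rather than reductions, and then item (2) is the direct application of that lemma to the conjunctive definition from item (1).
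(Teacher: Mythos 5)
Your proof is correct and fleshes out what the paper dismisses as ``straightforward.'' Item (1) is exactly the remark the paper makes after Lemma~\ref{lem:multisorted-indicator}, and for item (2) you correctly identify the one nontrivial point---that modular quantifiers do not blindly compose---and resolve it by substituting \emph{strict} $p$-mpp-definitions (via Proposition~5.6 of \cite{DBLP:conf/stoc/BulatovK22}) with fresh witness blocks so that the extension counts multiply to $1 \pmod p$ on accepted tuples and to $0 \pmod p$ otherwise.
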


If $\ang{\cH}_p$ does not have a Mal'tsev polymorphism, then $\cH^{\dg p}$ also does not have a Mal'tsev polymorphism. Therefore, 
\[
(\ba_\cH,\bc_\cH),(\bb_\cH,\bc_\cH),(\bb_\cH,\bd_\cH)\in\pr_{I_\cH\cup J_\cH}\Upsilon_3(\cH^{\dg p})
\] 
and $(\ba_\cH,\bd_\cH)\not\in\pr_{I_\cH\cup J_\cH}\Upsilon_3(\cH^{\dg p})$. 

A relation $\rel\sse\pr_{I_\cH\cup J_\cH}\Upsilon_3(\cH^{\dg p})$ such that $(\ba_\cH,\bc_\cH),(\bb_\cH,\bc_\cH),(\bb_\cH,\bd_\cH)\in\rel$ and $(\ba_\cH,\bd_\cH)\not\in\rel$ will be called a \emph{$p$-indicator rectangularity obstruction}. Clearly, if $\ang{\cH}_p$ has no Mal'tsev polymorphism, $\pr_{I_\cH\cup J_\cH}\Upsilon_3(\cH^{\dg p})$ itself is a $p$-indicator rectangularity obstruction. However, it may not be $p$-mpp-definable in $\cH$. A relational structure $\cH$ is said to \emph{admit a $p$-indicator rectangularity obstruction} if there exists a $p$-indicator rectangularity obstruction $\rel$ that is $p$-mpp-definable in~$\cH$. 

An important ingredient in our construction is the study of automorphisms of $\cH^{(3)}$ that fix certain structured tuples. Specifically, those of the form $(a,b,b)$ and $(b,b,a)$. We refer to such automorphisms as \emph{M-automorphisms}, with ``M'' standing for Mal'tsev. 

The motivation for working with $\cH^{(3)}$ comes from the need to apply our reduction tools in a higher-dimensional setting. While every automorphism of $\cH^{(3)}$ induces a polymorphism of $\cH$ via its coordinate projections, the converse does not hold: not every collection of polymorphisms arises as projections of a single automorphism of $\cH^{(3)}$. The notion of M-automorphisms captures precisely those symmetries of $\cH^{(3)}$ that preserve the key tuples required for our connection to Mal'tsev polymorphisms. This connection will become essential later.

\def\bxt{\widetilde{\bx}}
Let $\bx = (x_{i})_{i \in \cH^{(3)}}$ be the tuple of variables indexed by the elements of $H^{(3)}$. For a subset $K \subseteq \cH^{(3)}$, we use the notation $\bx_{\mid K}$ to denote the subtuple $(x_{i})_{i \in K}$, that is, the variables whose indices belong to $K$. The next lemma identifies conditions sufficient for $\cH$ to admit a $p$-indicator rectangularity obstruction. Define $E_\cH= H^{(3)}-(I_\cH\cup J_\cH)$ and $m=\ar(\Upsilon_3(\cH))$.


\begin{lemma}\label{lem:M-definitions}
Let $\rel(\bx_{|H^{(3)}-K})=\exists \bx_{|K}\,\, \Upsilon_3(\cH^{\dg p}) (\bx) \meet \Phi(\bx|_{K})$, where $K\sse E_\cH$, and $\Phi$ is a conjunction of unary predicates on variables from $K$ such that 
\[
(\ba_\cH,\bc_\cH),(\bb_\cH,\bc_\cH),(\bb_\cH,\bd_\cH)\in\pr_{I_\cH\cup J_\cH}\rel.
\]
Let also $y\in E_\cH - K$ and let 
\[
B_1 = \pr_y \Ext_\rel (\ba_\cH, \bc_\cH), \quad
B_2 = \pr_y \Ext_\rel (\bb_\cH, \bc_\cH), \quad
B_3 = \pr_y \Ext_\rel (\bc_\cH, \bd_\cH).
\]
If $B_1 \tm B_2 \tm B_3$ contains a triple that is a fixed point of every M-automorphism of $(\cH^{\dg p})^3$ then there is a $p$-mpp-definable relation $\relo\sse\pr_{H^{(3)}-(K\cup\{y\})}$ in $\ang{\cH+\{\rel\}}_p$ such that $(\ba_\cH,\bc_\cH),(\bb_\cH,\bc_\cH),(\bb_\cH,\bd_\cH)\in\pr_{I_\cH\cup J_\cH}\relo$, and 
\[
    |\pr_y(\Ext_{\relo}((\ba_\cH, \bc_\cH)))|, 
    |\pr_y(\Ext_{\relo}((\bb_\cH, \bc_\cH)))|, 
    |\pr_y(\Ext_{\relo}((\bb_\cH, \bd_\cH)))| \not \equiv 0 \pmod p.  \\
\]
\end{lemma}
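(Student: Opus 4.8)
The strategy is to sharpen the relation $\rel$ into $\relo$ by eliminating the variable $y$ with a $p$-modular quantifier, and to guarantee that this quantifier behaves well at the three distinguished tuples $(\ba_\cH,\bc_\cH)$, $(\bb_\cH,\bc_\cH)$, $(\bb_\cH,\bd_\cH)$ by exploiting the fixed-point triple $(u_1,u_2,u_3)\in B_1\tm B_2\tm B_3$ supplied by the hypothesis. First I would set $\relo(\bx_{|H^{(3)}-(K\cup\{y\})})=\exists^{\equiv p} x_y\; \rel(\bx_{|H^{(3)}-K})$, so that $\relo$ is by construction $p$-mpp-definable in $\cH+\{\rel\}$; since $\rel$ itself is $\exists\bx_{|K}(\Upsilon_3(\cH^{\dg p})\meet\Phi)$ and $\Upsilon_3(\cH^{\dg p})$ is $p$-mpp-definable in $\cH$ by Lemma~\ref{lem:indicator-definitions}(2), a routine bookkeeping (being careful, as the text warns, about nesting modular and ordinary quantifiers — here the ordinary quantifiers over $K$ can be folded in because $\rel$ is already defined) gives $\relo\in\ang{\cH+\{\rel\}}_p$, or even $\ang\cH_p$ if $K$ can be absorbed. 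The arity bookkeeping is the easy part.

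The crux is showing the three non-divisibility conditions. Fix one of the three tuples, say $(\ba_\cH,\bc_\cH)$; I want $|\pr_y(\Ext_\rel(\ba_\cH,\bc_\cH))|\not\equiv 0\pmod p$, i.e. $|B_1|\not\equiv 0$, and similarly $|B_2|,|B_3|\not\equiv 0$. The mechanism is the $\Stab$-action from Section~\ref{sec:auto-stable}: the M-automorphisms of $(\cH^{\dg p})^3$ act on the set $B_1$ (this uses that $(\ba_\cH,\bc_\cH)$ is built entirely from triples of the form $(a,b,b)$ and $(b,b,a)$, hence is fixed by every M-automorphism, so the group genuinely permutes the fiber over it), and the orbits under this action partition $B_1$ into blocks of size a power of $p$ (for the $p$-part) together with the fixed points. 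If the only fixed point available were forced to lie outside $B_1$ we could not conclude, but the hypothesis is exactly that $B_1\tm B_2\tm B_3$ contains a simultaneous fixed point $(u_1,u_2,u_3)$, so $u_i\in B_i$ is an M-automorphism-fixed element of each $B_i$. One then needs a counting argument of Möbius/Lemma~\ref{lem:mobius-point-appendix} flavour: either every $B_i$ already has size $\not\equiv 0\pmod p$ and we are done, or some $B_i$ has size $\equiv 0\pmod p$, in which case — because all non-fixed orbits have size divisible by $p$ — the number of M-automorphism-fixed points of $B_i$ is itself $\equiv 0\pmod p$; combined with $u_i$ being a fixed point this does not immediately contradict anything, so the real work is to pass to $(\cH^{\dg p})^3$ and apply Lemma~\ref{lem:mobius-point-appendix}(2)/(3) to manufacture an M-automorphism of order $p$, then use $p$-rigidity (which we assume throughout after Proposition~\ref{pro:mult-uniqueness-main}, inherited by powers) to rule it out.

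More concretely, I expect the argument to run: suppose for contradiction $|B_i|\equiv 0\pmod p$ for some $i$. Consider the structure $(\cH^{\dg p})^3$ with the tuples $\ba_\cH,\bb_\cH,\bc_\cH,\bd_\cH$ (reinterpreted as elements/tuples of the cube) marked, together with the set $B_i$ marked; the hypotheses of Lemma~\ref{lem:mobius-point-appendix}(3) — constancy of the relevant $\hom$-counts modulo $p$ — should follow from $|B_i|\equiv 0$ and the fact that the $\hom$-counts into a fiber of a conjunctive/$p$-mpp-definable relation are governed by the extension counts, so that the constant $c$ is forced to be $0$; Lemma~\ref{lem:mobius-point-appendix}(3) then yields an M-automorphism $\pi$ of $(\cH^{\dg p})^3$ of order $p$ fixing $\ba_\cH,\bc_\cH$ (say) and stabilizing $B_i$. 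But $\pi$ decomposes by Lemma~\ref{lem:poly-auto} into three polymorphisms of $\cH^{\dg p}$, and being M (fixing all $(a,b,b)$ and $(b,b,a)$) forces the associated ternary operation to be a Mal'tsev-like operation — at which point either we contradict that $\ang\cH_p$, hence $\cH^{\dg p}$, has no Mal'tsev polymorphism, or we contradict $p$-rigidity of $(\cH^{\dg p})^3$. The fixed triple $(u_1,u_2,u_3)$ is what prevents the degenerate case where $B_i$ is empty or consists solely of points moved by $\pi$, ensuring the stabilizer subgroup $\Stab(\ba_\cH,\bc_\cH;B_i)$ has order divisible by $p$ and therefore contains such a $\pi$. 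The main obstacle, and where I would spend the most care, is verifying the hypotheses of Lemma~\ref{lem:mobius-point-appendix}(3) in this cube setting — precisely identifying which structured tuples must be fixed, checking that the quotient constructions $(\cH^{\dg p})^3/_\th$ interact correctly with the marked set $B_i$, and confirming that the constant $c$ is genuinely $0$ rather than some other residue — since the Möbius inversion is delicate and the multi-sorted, cubed, $p$-mpp-definable context stacks several subtleties at once.
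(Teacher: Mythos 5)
Your central move — defining $\relo$ simply as $\exists^{\equiv p}x_y\,\rel$ and then trying to prove $|B_i|\not\equiv 0\pmod p$ for each $i$ — cannot work, because the lemma's hypothesis does not imply $|B_i|\not\equiv 0$. The hypothesis only says $B_1\times B_2\times B_3$ contains one triple fixed by every M-automorphism of $(\cH^{\dg p})^3$; each $B_i$ may perfectly well have size divisible by $p$ (a fixed element plus a $p$-orbit of other elements, say), in which case naive $\exists^{\equiv p}$-quantification of $y$ from $\rel$ loses the distinguished tuples. Note also that the stated containment $\relo\subseteq\pr_{H^{(3)}-(K\cup\{y\})}$ in the lemma is a typo: the conclusion speaks of $\pr_y\Ext_\relo$, so $\relo$ must retain the $y$-coordinate (compare with the paper's explicit formula for $\relo$). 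The paper keeps $y$ and instead conjoins $\rel$ with a gadget $\cG$: it applies Lemma~\ref{lem:mobius-point-appendix}(3) to $(\cH^{\dg p})^3$ with the structured triples $(a_i,b_i,b_i)$, $(c_j,c_j,d_j)$ marked and the set $A=B_1\times B_2\times B_3$ distinguished, and the fixed-point triple is exactly what makes $A$ automorphism-stable so that the gadget exists. Identifying the $y$-coordinate with the $x$-variable of $\cG$ then shrinks the fiber over $y$ to a set of size $\not\equiv 0\pmod p$ simultaneously at all three distinguished tuples, which is the content of the conclusion; the modular quantification of $y$ is deferred to the next step.

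Your contradiction argument also has several concrete errors. First, $p$-rigidity of $\cH^{\dg p}$ does not pass to its cube: $\cH^3$ generally has many $p$-automorphisms (coordinate permutations, among others), and M-automorphisms are precisely the objects the lemma reasons about, not ones it can rule out. Second, the coordinates $f_1,f_2,f_3$ of an M-automorphism are not Mal'tsev polymorphisms: an M-automorphism fixes every $(a,b,b)$ and $(b,b,a)$, so $f_1(a,b,b)=a$ but $f_1(b,b,a)=b$, which violates the Mal'tsev identity $m(b,b,a)=a$; the same check shows $f_2,f_3$ fail too. Thus no contradiction with the absence of a Mal'tsev polymorphism in $\ang{\cH}_p$ is obtained. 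Third, M-automorphisms act on $H^3$, not on $H$, so they do not directly partition $B_i\subseteq H$ into orbits; one must pass to $B_1\times B_2\times B_3\subseteq H^3$ as the paper does. The Möbius-inversion tool is used constructively to extract the gadget $\cG$, not to manufacture an automorphism from $|B_i|\equiv 0$.
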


\begin{proof}
Our goal is to construct a structure $(\cG, \vc xN, \vc yM, x)$ such that
\begin{align*}
\hom((\cG, \vc xN, \vc yM, x)&, (\cH, a_1, \dots, a_N, c_1 \zd c_M, B_1)) \not\equiv 0 \pmod p, \\
\hom((\cG, \vc xN, \vc yM, x)&, (\cH, b_1, \dots, b_N, c_1 \zd c_M, B_2)) \not\equiv 0 \pmod p, \\
\hom((\cG, \vc xN, \vc yM, x)&, (\cH, b_1, \dots, b_N, d_1 \zd d_M, B_3)) \not\equiv 0 \pmod p.
\end{align*}
To achieve this, we work in the power structure $\cH^3$. Consider the structure
\[
(\cH^3, (a_1,b_1,b_1), \dots, (a_N,b_N,b_N), (c_1,c_1,d_1), \dots, (c_M,c_M,d_M), B_1 \times B_2 \times B_3).
\]
By Lemma~\ref{lem:mobius-point-appendix}(3), there exists a structure $(\cG, \vc xN, \vc yM, x)$ such that
\begin{align}\label{equ:indicator-gadget}
\hom(&(\cG,\vc xN,\vc yM,x), \\
& (\cH^3,(a_1,b_1,b_1)\zd(a_N,b_N,b_N), (c_1,c_1,d_1)\zd(c_M,c_M,d_M), \nonumber\\
& B_1\tm B_2\tm B_3))\not\equiv 0\pmod p, \nonumber
\end{align}
By assumption, $B_1 \times B_2 \times B_3$ contains a triple $(\be_1, \be_2, \be_3)$ fixed by every $M$-automorphism. Thus, $\Stab((\be_1, \be_2, \be_3), B_1 \times B_2 \times B_3)$ forms an automorphism-stable set.

We now align the variables of $\rel$ with those of $\cG$: each variable of the form $(a_i,b_i,b_i)$ is identified with $x_i$, each $(c_i,c_i,d_i)$ with $y_i$, and $y$ with $x$. This alignment ensures that if the inputs to $\cG$ correspond to $(\ba_\cH, \bc_\cH)$, $(\bb_\cH, \bc_\cH)$, or $(\bb_\cH, \bd_\cH)$, then the corresponding extension of $y$ lies in $B_1$, $B_2$, or $B_3$, respectively. The structure $\cG$ guarantees that these extensions contribute non-zero counts modulo $p$.

Let $\{\vc t n\}$ be the auxiliary variables of $\cG$ not among $\vc xN, y_1\zd\lb y_M, x$. Then the relation
\[
\relo(\bx_{|H^{(3)} - K}) = \exists^{\equiv p} \vc t n\,\, \rel(\bx_{|H^{(3)} - K}) \meet \cG(\bx_{|I_\cH \cup J_\cH \cup \{ y \}}, \vc t n)
\]
is $p$-mpp-definable and satisfies the required properties.
\end{proof}

We complete this section with a simple observation.

\begin{lemma}\label{lem:M-auto-2-trivial}
Let $\cH=(\vc Hk;\vc\rel m)$ be a relational structure, $B=\{0,1\}\sse H_i$ its 2-element $p$-subalgebra for some $i\in[k]$, and $(f_1,f_2,f_3)$ an M-automorphism of $(\cH^{\dg p})^3$. Then either $(f_1,f_2,f_3)$ is the identity mapping on $B^3$ or $(f_1,f_2,f_3)(0,1,0)=(1,0,1), (f_1,f_2,f_3)(1,0,1)=(0,1,0)$. In particular, if $(f_1,f_2,f_3)$ has order $p>2$, then it is the identity mapping on $B^3$.
\end{lemma}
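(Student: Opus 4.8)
The plan is to analyze the action of an M-automorphism $(f_1,f_2,f_3)$ of $(\cH^{\dg p})^3$ when restricted to the subset $B^3$, where $B = \{0,1\}$. Since $B$ is a $p$-subalgebra of $\cH$, there is a unary relation $\relo(x)$ that is $p$-mpp-definable in $\cH$ with $\relo(a)$ iff $a \in B$; as noted before Lemma~\ref{lem:indicator-definitions}, every polymorphism of $\ang{\cH}_p$ preserves $\relo$, and the relations added in forming $\cH^{\dg p}$ are chosen precisely so that all ternary polymorphisms of $\cH^{\dg p}$ are polymorphisms of $\ang{\cH}_p$. Hence each $f_j$, being (a coordinate of) an automorphism of $(\cH^{\dg p})^3$, is in particular a permutation of $H_i$ preserving $B$, so $f_j$ restricts to a permutation of the $2$-element set $B$; thus $f_j|_B$ is either the identity or the transposition $(0\ 1)$.

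First I would recall what it means for $(f_1,f_2,f_3)$ to be an M-automorphism: each of the structured tuples $(a,b,b)$ and $(b,b,a)$ over every domain is a fixed point. Applied to $B$, this forces $(f_1,f_2,f_3)(0,1,1) = (0,1,1)$ and $(f_1,f_2,f_3)(1,0,0) = (1,0,0)$ (these are the relevant M-tuples with $a=0,b=1$ and $a=1,b=0$), and similarly $(f_1,f_2,f_3)(1,1,0)=(1,1,0)$, $(f_1,f_2,f_3)(0,0,1)=(0,0,1)$, and the diagonal points $(0,0,0),(1,1,1)$ are fixed. From $(f_1(0),f_2(1),f_3(1)) = (0,1,1)$ we get $f_1(0)=0$, so $f_1|_B$ is the identity; from $(f_1(1),f_2(0),f_3(0))=(1,0,0)$ we get $f_2(0)=0$ and $f_3(0)=0$, so $f_2|_B$ and $f_3|_B$ are also the identity. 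Wait — that would make $(f_1,f_2,f_3)$ trivial on $B^3$ outright, which contradicts the statement's second case; so I must be more careful about which M-tuples actually exist. The set $J_\cH$ only contains triples $(b,b,a)$ with $a \ne b$, and $I_\cH$ contains $(a,b,b)$ for all $a,b$ (including $a=b$), so the genuinely forced fixed points within $B^3$ are $(0,1,1)$, $(1,0,0)$ (from $I$), together with $(0,0,1)$, $(1,1,0)$ (from $J$), and the diagonal. This still pins down $f_1(0)=0$ hence $f_1|_B=\id$, and then $f_2,f_3$ on $B$; so the correct reading is that the only unforced points of $B^3$ are $(0,1,0)$ and $(1,0,1)$, and the constraint is just that $(f_1,f_2,f_3)$ permutes $\{(0,1,0),(1,0,1)\}$. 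Concretely: once $f_1|_B=f_2|_B=f_3|_B$ is determined on the remaining value, the map is either the identity on all of $B^3$, or swaps exactly the pair $(0,1,0)\leftrightarrow(1,0,1)$ while fixing everything else — which is an involution. The key step is therefore the bookkeeping of exactly which elements of $B^3$ lie in $I_\cH \cup J_\cH$ versus $E_\cH$, and verifying that the only degree of freedom left is a single transposition of the two "non-Mal'tsev-structured" points.

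Having established that $(f_1,f_2,f_3)|_{B^3}$ is either the identity or the order-$2$ map swapping $(0,1,0)$ and $(1,0,1)$, the final clause is immediate: an order-$2$ permutation cannot be a nonidentity part of a permutation of order $p > 2$ (the order of $(f_1,f_2,f_3)$ is the lcm of the orders of its restrictions to the various domains and must be a multiple of the order of its restriction to $B^3$, which if nontrivial is $2 \nmid p$), so when $(f_1,f_2,f_3)$ has order $p>2$ its restriction to $B^3$ must be the identity. The main obstacle I anticipate is purely one of careful case-checking: making sure that no additional M-tuple forces $(0,1,0)$ or $(1,0,1)$ to be fixed, and conversely that the swap of that pair genuinely respects all the constraints of $(\cH^{\dg p})^3$ restricted to $B^3$ (which it does, since the swap is exactly the self-dual negation map on $B^3$ in the two free coordinates, and everything outside $B$ is handled by $f_1,f_2,f_3$ being honest automorphisms already). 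This is routine but must be done precisely, and I would present it as a short finite verification rather than invoking any heavier machinery.
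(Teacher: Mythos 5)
Your key insight --- that within $B^3$ the only points \emph{not} of the form $(a,b,b)$ or $(b,b,a)$ are $(0,1,0)$ and $(1,0,1)$, so the M-automorphism fixes the other six points of $B^3$ and can at most swap the remaining two --- is exactly the paper's argument, and your handling of the final clause (the restriction to $B^3$ has order dividing $p$, so for $p>2$ it cannot be the order-$2$ swap) is sound.

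However, your writeup is built on a persistent misreading of what $(f_1,f_2,f_3)$ is, and this is not a small notational slip: it is the source of the ``contradiction'' you flag mid-proof and never correctly diagnose. By Lemma~\ref{lem:poly-auto}, an automorphism of $(\cH^{\dg p})^3$ written as $(f_1,f_2,f_3)$ acts by $(f_1,f_2,f_3)(x,y,z) = (f_1(x,y,z), f_2(x,y,z), f_3(x,y,z))$, where each $f_j$ is a \emph{ternary} polymorphism $H_i^3 \to H_i$, not a unary permutation of $H_i$. Statements such as ``each $f_j$ is a permutation of $H_i$ preserving $B$, so $f_j|_B$ is either the identity or the transposition,'' the computation ``$(f_1(0),f_2(1),f_3(1)) = (0,1,1)$, hence $f_1(0)=0$,'' and the later ``once $f_1|_B = f_2|_B = f_3|_B$ is determined on the remaining value'' all silently assume the automorphism factors coordinatewise as a product of unary maps. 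Under that assumption every M-automorphism is forced to be the identity on $B^3$ --- which is exactly what your first computation incorrectly derives. The fix is not to second-guess which M-tuples exist (your list of six is correct); it is to read the fixed-point condition properly: $(f_1,f_2,f_3)(0,1,1)=(0,1,1)$ constrains only $f_j(0,1,1)$ and says nothing about $f_j(0,1,0)$ or $f_j(1,0,1)$, which is why those two points have the freedom to be swapped. Once the unary reading is dropped, the proof collapses to the paper's two sentences: $B^3$ is preserved because $B$ is a $p$-subalgebra and each $f_j$ preserves $\ang{\cH}_p$; all of $B^3$ except $\{(0,1,0),(1,0,1)\}$ consists of fixed M-tuples; done.
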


\begin{proof}
Since $B$ is a $p$-subalgebra, $(f_1,f_2,f_3)(B^3)=B^3$. Moreover every $(x,y,z)\in B^3-\{(0,1,0),(1,0,1)\}$ is a fixed point of $(f_1,f_2,f_3)$, leaving only the options listed in the lemma.
\end{proof}

\subsection{Conservative  Structures}\label{sec:maltsev-conservative}

In this section we use the results of Sections~\ref{sec:automorphic} and~\ref{sec:indicator} to essentially show that in the case of a $p$-conservative structure $\cH$ (single-sorted in the latter case) either there is a Mal'tsev polymorphism of $\ang{\cH}_p$ or $\cH$ admits a $p$-indicator rectangularity obstruction. The latter will later be used to prove the hardness of the corresponding modular counting problem.

Recall that a multi-sorted relational structure $\cH = (\{H_i\}_{i \in [k]}; \rel_1, \dots, \rel_m)$ is said to be \emph{$p$-conservative} if for every $i \in [k]$ and every subset $A \subseteq H_i$, the set $A$ is $p$-mpp-definable in $\cH$ (as a unary relation).

\begin{theorem}\label{the:conservative-malc}
Let $\cH$ be a $p$-conservative structure. Then either $\ang{\cH}_p$ has a Mal'tsev polymorphism, or $\cH$ admits a $p$-indicator rectangularity obstruction and therefore is not strongly $p$-rectangular.
\end{theorem}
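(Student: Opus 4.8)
The plan is to assume $\ang{\cH}_p$ has no Mal'tsev polymorphism and produce a $p$-indicator rectangularity obstruction that is $p$-mpp-definable in $\cH$. Since $\cH$ is $p$-conservative, every unary predicate on every domain is a $p$-subalgebra, which is exactly the flexibility needed to feed Lemma~\ref{lem:M-definitions} repeatedly. We start from the relation $\pr_{I_\cH\cup J_\cH}\Upsilon_3(\cH^{\dg p})$, which by Lemma~\ref{lem:indicator-definitions}(2) and the remark preceding the definition of a $p$-indicator rectangularity obstruction is a $p$-indicator rectangularity obstruction, but a priori is not $p$-mpp-definable in $\cH$ because of the existential quantification over the coordinates in $E_\cH=H^{(3)}-(I_\cH\cup J_\cH)$. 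The idea is to eliminate the coordinates in $E_\cH$ one at a time, using Lemma~\ref{lem:M-definitions} at each step, while preserving the three witness tuples $(\ba_\cH,\bc_\cH),(\bb_\cH,\bc_\cH),(\bb_\cH,\bd_\cH)$ in the projection to $I_\cH\cup J_\cH$ and the failure of the fourth tuple $(\ba_\cH,\bd_\cH)$ to belong to it.

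First I would set up the induction: maintain a relation $\rel$ obtained from $\Upsilon_3(\cH^{\dg p})$ by existentially quantifying (in the strict/modular sense) some subset $K\sse E_\cH$ of coordinates after conjoining unary predicates $\Phi$ on those coordinates, subject to the invariant that $(\ba_\cH,\bc_\cH),(\bb_\cH,\bc_\cH),(\bb_\cH,\bd_\cH)$ are still in $\pr_{I_\cH\cup J_\cH}\rel$ and that $(\ba_\cH,\bd_\cH)$ is still not. Pick a coordinate $y\in E_\cH-K$. Compute the three sets $B_1=\pr_y\Ext_\rel(\ba_\cH,\bc_\cH)$, $B_2=\pr_y\Ext_\rel(\bb_\cH,\bc_\cH)$, $B_3=\pr_y\Ext_\rel(\bb_\cH,\bd_\cH)$ — these are the possible $y$-values consistent with each witness. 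To apply Lemma~\ref{lem:M-definitions} I need $B_1\tm B_2\tm B_3$ to contain a triple fixed by every M-automorphism of $(\cH^{\dg p})^3$; granting that, the lemma hands back a $p$-mpp-definable $\relo$ in $\ang{\cH+\{\rel\}}_p$ (hence in $\ang{\cH}_p$, since $\rel$ is itself reached by strict modular quantification from $p$-mpp-definable ingredients and $\cH$ is $p$-conservative) with $y$ eliminated, the three witnesses still present in the projection, and the relevant extension counts nonzero mod $p$. The last point is what lets me keep $(\ba_\cH,\bd_\cH)$ out: if it ever entered the projection, rectangularity together with the nonzero-count bookkeeping would force a Mal'tsev-type closure, contradicting the hypothesis; I would phrase this as: the coordinate $y$ being a coordinate of $\cH^{\dg p}$, the value the Mal'tsev term would assign is forced, and it lies outside $B_3$ precisely because $(\ba_\cH,\bd_\cH)\notin\pr\rel$ — this needs a short argument tracking which triple of $E_\cH$-values corresponds to the Mal'tsev output.

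The main obstacle is verifying the M-automorphism hypothesis of Lemma~\ref{lem:M-definitions} at each step, i.e.\ that $B_1\tm B_2\tm B_3$ always contains an M-fixed triple. Here is where $p$-conservativity does the real work, together with Lemma~\ref{lem:M-auto-2-trivial}: because every subset of every domain is a $p$-subalgebra, before eliminating $y$ I can conjoin to $\Phi$ a unary predicate that pins the $y$-coordinate down to a small ($\le 2$-element) subset of the appropriate domain, chosen so that $B_1,B_2,B_3$ become subsets of this small set; Lemma~\ref{lem:M-auto-2-trivial} then says the only nontrivial M-automorphism action on a two-element $p$-subalgebra is the swap of $(0,1,0)$ and $(1,0,1)$, so any triple other than those two is M-fixed, and I just need to check $B_1\tm B_2\tm B_3$ is not contained in $\{(0,1,0),(1,0,1)\}$ — which follows from the three witness tuples being genuinely present (so none of the $B_i$ is empty) and a small case analysis. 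I would handle the diagonal/degenerate cases (where $B_i$ is a singleton, or where some $B_i$ meets the diagonal) separately, as those immediately give an M-fixed triple. Once all of $E_\cH$ is exhausted, $K=E_\cH$, the surviving relation $\relo\sse\pr_{I_\cH\cup J_\cH}\Upsilon_3(\cH^{\dg p})$ is $p$-mpp-definable in $\cH$, contains the three witnesses, and omits $(\ba_\cH,\bd_\cH)$ — i.e.\ it is the desired $p$-mpp-definable $p$-indicator rectangularity obstruction, and its failure of rectangularity (by the very definition of such an obstruction) shows $\cH$ is not strongly $p$-rectangular.
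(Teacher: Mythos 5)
Your overall plan is the same as the paper's: assume no Mal'tsev polymorphism for $\ang{\cH}_p$, start from $\Upsilon_3(\cH^{\dg p})$, and eliminate the coordinates in $E_\cH$ one at a time while keeping the three witness tuples and keeping $(\ba_\cH,\bd_\cH)$ out, with $p$-conservativity supplying the unary restrictions. But two of your intermediate claims are wrong as stated, and both would stop the induction.

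First, you assert that you can always pin the $y$-coordinate to a $\le 2$-element subset so that $B_1,B_2,B_3$ land inside it. If $B_1,B_2,B_3$ are pairwise disjoint, the smallest set meeting all three has size $3$; there is no $2$-element set containing even representatives of all three, let alone the sets themselves. Second, you claim that when the $B_i$ are singletons the resulting triple is ``immediately M-fixed,'' and that $B_1\times B_2\times B_3\not\subseteq\{(0,1,0),(1,0,1)\}$ follows from the $B_i$ being nonempty. Both claims fail on the configuration $B_1=B_3=\{0\}$, $B_2=\{1\}$: then $B_1\times B_2\times B_3=\{(0,1,0)\}$, which is contained in the excluded pair, and $(0,1,0)$ is precisely the triple that Lemma~\ref{lem:M-auto-2-trivial} allows an M-automorphism to move. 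More generally, a triple $(e,e',e)$ with $e\ne e'$ is not of the form $(a,b,b)$ or $(b,b,a)$ and is not automatically M-fixed. So in these two situations Lemma~\ref{lem:M-definitions} is not applicable and your inductive step has no way to proceed.

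The fix is exactly what the paper does, and it is simpler than applying Lemma~\ref{lem:M-definitions}: take a \emph{minimal transversal} $D$ of $B_1,B_2,B_3$ (so $|D|\le 3$), conjoin $D(z_i)$, and observe that whenever all three $B_i\cap D$ are singletons (which covers $|D|=3$, $|D|=1$, and the bad singleton subcase of $|D|=2$) each witness has exactly one $z_i$-extension, so plain $\exists^{\equiv p}z_i$ already preserves the three witnesses without invoking any M-automorphism fact. Lemma~\ref{lem:M-definitions} is needed only when $|D|=2$ and exactly one of the restricted $B_i$ has size $2$, and there the required M-fixed triple is found by the short case analysis (a)/(b)/(c) in the paper, using that one of the two elements of the product is of the form $(a,b,b)$ or $(b,b,a)$. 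A second, smaller point: your discussion of why $(\ba_\cH,\bd_\cH)$ stays out of the projection (via ``rectangularity forces a Mal'tsev-type closure'') is more complicated than needed; it follows immediately from the invariant $\rel'_i\subseteq\pr_{I_\cH\cup J_\cH\cup E_i}\Upsilon_3(\cH^{\dg p})$ together with the fact that $(\ba_\cH,\bd_\cH)\notin\pr_{I_\cH\cup J_\cH}\Upsilon_3(\cH^{\dg p})$ since $\cH^{\dg p}$ has no Mal'tsev polymorphism.
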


\begin{proof}
We assume that $\ang{\cH}_p$ has no Mal'tsev polymorphism. By Lemma~\ref{lem:indicator-definitions}, it suffices to show that $\Upsilon_3(\cH^{\dg p})$ $p$-mpp-defines a $p$-indicator rectangularity obstruction.

Let $\vc zL$ be an ordering of the variables from the set $E_\cH$. We prove by induction that for any $i \in [L]$, there exists a relation $\rel'_i \subseteq \pr_{I_\cH \cup J_\cH \cup E_i} \Upsilon_3(\cH^{\dg p})$, where $E_i = \{z_i, \dots, z_L\}$, such that
\[
(\ba_\cH, \bc_\cH), (\bb_\cH, \bc_\cH), (\bb_\cH, \bd_\cH) \in \pr_{I_\cH \cup J_\cH} \rel'_i.
\]
Note that $\rel'_{L+1}$ is the desired $p$-indicator rectangularity obstruction. For the base case, let $\rel'_1 = \Upsilon_3(\cH^{\dg p})$, which satisfies the required condition, since $E_1 = \{\vc zL\}$.

Now suppose the statement holds for some $i$ and that $\rel'_i$ is the corresponding relation. Consider the variable $z_i \in E_i$, and let its domain be $H_j$. Define the sets $B_1, B_2, B_3 \subseteq H_j$ as the sets of values appearing at coordinate $z_i$ in the extensions of $(\ba_\cH, \bc_\cH)$, $(\bb_\cH, \bc_\cH)$, and $(\bb_\cH, \bd_\cH)$, respectively, within $\rel'_i$, i.e.,
\begin{align*}
B_1 &= \pr_{z_i} \Ext_{\rel'_i} (\ba_\cH, \bc_\cH), \\
B_2 &= \pr_{z_i} \Ext_{\rel'_i} (\bb_\cH, \bc_\cH), \\
B_3 &= \pr_{z_i} \Ext_{\rel'_i} (\bb_\cH, \bd_\cH).
\end{align*}
Let $D$ be a minimal set of representatives for $B_1$, $B_2$, and $B_3$; clearly, $|D| \le 3$. Let $\rel''=\rel'_i(\bx_{\mid I_\cH \cup J_\cH \cup E_i})\wedge D(z_i)$. 

\smallskip

\textsc{Case 1.} $|D| = 3$; that is, $B_1$, $B_2$, and $B_3$ are pairwise disjoint.

\smallskip

In this case, each of the tuples $(\ba_\cH, \bc_\cH)$, $(\bb_\cH, \bc_\cH)$, and $(\bb_\cH, \bd_\cH)$ has a distinct extension in $\pr_{z_i} \rel'_i$. In $\rel''$ every extension of $(\ba_\cH, \bc_\cH)$, $(\bb_\cH, \bc_\cH),(\bb_\cH, \bd_\cH)$ to a tuple from $\pr_{\bx_{\mid I_\cH \cup J_\cH}}\rel''$ has exactly one extension to a tuple from $\rel''$, and therefore
\[
\rel'_{i+1}(\bx_{\mid I_\cH \cup J_\cH\cup E_{i+1}})=\exists^{\equiv p} z_i\ \rel''
\]
satisfies the required conditions.

\smallskip

\textsc{Case 2.} $|D| = 2$.

\smallskip

Assume $D = \{0, 1\}$. Let $B'_1, B'_2, B'_3 \subseteq D$ be the sets of elements from $D$ that extend $(\ba_\cH, \bc_\cH)$, $(\bb_\cH, \bc_\cH)$, and $(\bb_\cH, \bd_\cH)$, respectively, in $\rel''$. If every $M$-automorphism of $\cH^{(3)}$ of order $p$ acts as the identity on $B'_1 \times B'_2 \times B'_3$, then the result follows from Lemma~\ref{lem:M-definitions}. Since $B'_1 \times B'_2 \times B'_3$ is a $p$-subalgebra of $\cH^{(3)}$ and therefore is  closed under automorphisms, this condition is satisfied, in particular, when $|B'_1| = |B'_2| = |B'_3| = 1$. Moreover, because $|D| = 2$, at most one of $B'_1, B'_2, B'_3$ can have size 2.

This leaves only three (non-symmetric) cases (up to swapping 0 and 1):
\begin{itemize}
    \item[(a)] $B'_1 = \{0,1\},\ B'_2 = \{0\},\ B'_3 = \{1\}$,
    \item[(b)] $B'_1 = \{0\},\ B'_2 = \{0,1\},\ B'_3 = \{1\}$,
    \item[(c)] $B'_1 = \{0\},\ B'_2 = \{1\},\ B'_3 = \{0,1\}$.
\end{itemize}

In case (a), since $(0,0,1)$ is a fixed point of any $M$-automorphism, it follows that $(1,0,1)$ is also fixed. Hence, every $M$-automorphism must be the identity on $B'_1 \times B'_2 \times B'_3$. The same argument applies in case (c). In case (b), we have $B'_1 \times B'_2 \times B'_3 = \{(0,0,1), (0,1,1)\}$, and every $M$-automorphism must fix this set pointwise as well.

\smallskip

\textsc{Case 3.} $|D| = 1$.

\smallskip

This means that there exists an element $c \in H_j$ such that $c \in B_1 \cap B_2 \cap B_3$. As in Case~1 every extension of $(\ba_\cH, \bc_\cH)$, $(\bb_\cH, \bc_\cH),(\bb_\cH, \bd_\cH)$ to a tuple from $\pr_{{ I_\cH \cup J_\cH}}\rel''$ has exactly one extension to a tuple from $\rel''$. We again can set $\rel'_{i+1}(\bx_{\mid I_\cH \cup J_\cH\cup E_{i+1}})=\exists^{\equiv p} z_i\ \rel''$.
\end{proof}

\subsection{3-Element Structures}\label{sec:3-elem-maltsev}

Next, we assume $\cH$ to be a 3-element (single-sorted) structure $(H,\vc\rel\ell)$, where $H=\{0,1,2\}$. The main result of this section is the following theorem.

\begin{theorem}\label{the:3-elem-maltsev}
Let $\cH$ be a 3-element structure. Then one of the following three options holds
\begin{itemize}
\item[(1)]
$\ang{\cH}_p$ has a Mal'tsev polymorphism and therefore $\cH$ is strongly $p$-rectangular;
\item[(2)]
$\cH$ admits a $p$-indicator rectangularity obstruction;
\item[(3)]
$p=2$ and $\cH$ has a $p$-automorphic polynomial $f$, there is $a\in H$ such that $f(a,x)$ is a permutation of order~2, and $H-\{a\}$ is a $p$-subalgebra of $\cH$.
\end{itemize}
\end{theorem}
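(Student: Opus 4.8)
The plan is to run an induction over the variables of the indicator problem that closely mirrors the proof of Theorem~\ref{the:conservative-malc}, but carefully tracks the one place where $p$-conservativity was used and replaces it with the structural constraints available on a 3-element domain. Assume that option~(1) fails, so $\ang{\cH}_p$ has no Mal'tsev polymorphism, and that option~(3) fails as well; the goal is to produce a $p$-mpp-definable $p$-indicator rectangularity obstruction, i.e.\ option~(2). As in Theorem~\ref{the:conservative-malc}, enumerate $E_\cH=\{z_1,\dots,z_L\}$ and build relations $\rel'_i\sse\pr_{I_\cH\cup J_\cH\cup E_i}\Upsilon_3(\cH^{\dg p})$ with $(\ba_\cH,\bc_\cH),(\bb_\cH,\bc_\cH),(\bb_\cH,\bd_\cH)\in\pr_{I_\cH\cup J_\cH}\rel'_i$, starting from $\rel'_1=\Upsilon_3(\cH^{\dg p})$ and aiming for $\rel'_{L+1}$. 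At each step, set $B_1,B_2,B_3\sse H$ to be the projections of the extensions of the three special tuples onto coordinate $z_i$, pick a minimal set of representatives, and split into cases on its size.

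The cases $|D|=1$ and $|D|=3$ go through verbatim as in Theorem~\ref{the:conservative-malc}: in both, each extension to $\pr_{I_\cH\cup J_\cH}\rel''$ lifts uniquely to $\rel''$, so $\exists^{\equiv p}z_i\,\rel''$ preserves the required membership. The only issue is that forming $\rel''=\rel'_i\wedge D(z_i)$ requires $D$ to be a $p$-subalgebra of $\cH$; unlike in the conservative setting, this is not automatic, so I would first argue that the relevant one- and two-element subsets of $H$ that arise are in fact $p$-subalgebras, using that $\cH=\cH^\const$ contains all constants (handling $|D|=1$) and, for the two-element case, invoking the failure of option~(3) together with a short argument that a 2-element $p$-subalgebra can be extracted — or, if it cannot, that we are in a situation producing a $p$-automorphic polynomial. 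This is the heart of the matter and where the argument genuinely differs from the conservative case.

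For $|D|=2$, say $D=\{0,1\}$, I would follow the same reduction to the sub-cases (a)--(c) on which of $B'_1,B'_2,B'_3$ equals $\{0,1\}$. Sub-cases (a) and (c) are handled exactly as before, since $(0,0,1)$ (resp.\ $(0,1,1)$) being an $M$-fixed point forces the third coordinate, making the whole triple $M$-fixed, and Lemma~\ref{lem:M-definitions} applies once $\{0,1\}$ is known to be a 2-element $p$-subalgebra (whence by Lemma~\ref{lem:M-auto-2-trivial} every $M$-automorphism of order $p$ acts trivially on $B'_1\tm B'_2\tm B'_3$, and for $p>2$ it is automatically trivial). Sub-case (b), where $B'_1\tm B'_2\tm B'_3=\{(0,0,1),(0,1,1)\}$ and both triples are $M$-fixed, likewise yields to Lemma~\ref{lem:M-definitions}. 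The genuinely new branch is when no such 2-element $p$-subalgebra is available: here I would show that the binary polymorphism witnessing the failure of rectangularity, restricted appropriately, gives a permutation of $H$ of order $2$ fixing one point $a$ with $H-\{a\}$ invariant, i.e.\ exactly the data of option~(3) with $p=2$; in particular $p$ must be $2$, matching the statement.

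The main obstacle I expect is precisely this dichotomy step: showing that whenever the induction cannot be pushed through because the needed two-element subset $\{0,1\}\sse H$ is not a $p$-subalgebra (so $\rel''$ is not $p$-mpp-definable and Lemma~\ref{lem:M-definitions} is unavailable), the obstruction to $p$-subalgebra membership is itself structured enough to manufacture a $p$-automorphic polynomial — and that this can only happen for $p=2$, since a $p$-automorphic polynomial on a 3-element domain forces a nontrivial permutation of order $p$ on a subset of size at most $3$ that properly contains a fixed point, which for a 3-element domain means order exactly $2$. Pinning down the precise combinatorics of which subsets of $\{0,1,2\}$ must be $p$-subalgebras under the standing assumptions, and converting a missing one into the required binary polymorphism, is the step that carries essentially all the weight of the theorem; the rest is bookkeeping inherited from Theorem~\ref{the:conservative-malc}.
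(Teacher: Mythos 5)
The proposal has the right high-level skeleton — run the same coordinate-elimination induction as in Theorem~\ref{the:conservative-malc}, and when the induction gets stuck manufacture a $p$-automorphic polynomial — but it misidentifies where the real difficulty lies, and this is a genuine gap, not just an omission of detail. You locate the obstacle at ``the relevant two-element subset $D$ might fail to be a $p$-subalgebra''; the paper notes this never happens: since $\cH$ contains all constants, each $B_i$ is the extension set of a pinned tuple and is therefore automatically a $p$-subalgebra of $\cH^{\dg p}$. The actual bottleneck is elsewhere. The paper's case split is not on $|D|$ but on whether $B_1\tm B_2\tm B_3$ contains a triple of the form $(a,b,b)$ or $(b,b,a)$ (which is automatically fixed by every M-automorphism, so Lemma~\ref{lem:M-definitions} applies). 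When no such triple is present, you are left with exactly three configurations of $(B_1,B_2,B_3)$ over a 3-element domain, and in the two nontrivial ones you have a 2-element $p$-subalgebra and an M-automorphism $(g_1,g_2,g_3)$ of $\cH^3$ acting nontrivially on $B_1\tm B_2\tm B_3$. Converting \emph{that} datum into a 2-automorphic polynomial of $\cH$ is where essentially all the work goes.

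Your proposal waves at this conversion (``show that the binary polymorphism witnessing the failure ... gives a permutation of $H$ of order 2''), but does not supply it, and it is not a routine step. In the paper it requires an exhaustive enumeration of the possible actions of $g_1,g_2,g_3$ on triples like $020,121,202,212,012,102,201,210$ and, for each, an explicit term in $f_i(x,y)=g_i(x,y,x)$ (or in the $g_i$'s themselves, in Subcase 3.4) producing the target automorphic polynomial $f$ with $f(2,\cdot)$ the transposition $(0\,1)$. These are Tables~1--6, produced with computer assistance; there is no short structural argument that replaces them. You also do not address the separate treatment of $p=3$ in the third configuration ($|B_1|=|B_3|=2$, $|B_2|=1$), where the paper observes each extension count is $1$ or $2$ so $\exists^{\equiv3}z_i$ suffices and no M-automorphism argument is needed, nor do you explain how the specific conclusion of option~(3) — that $H-\{a\}$ is a $p$-subalgebra and $p=2$ — falls out (it comes from $B_2$ or $B_1=B_3$ being the witnessing 2-element $p$-subalgebra and from the M-automorphism forced to have order $2$ by Lemma~\ref{lem:M-auto-2-trivial}). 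So the skeleton matches the paper's, but the weight-bearing part of the proof — the M-automorphism-to-automorphic-polynomial conversion and its casework — is both misdiagnosed and missing.
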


\begin{proof}
We follow the same approach as in the proof of Theorem~\ref{the:conservative-malc} and assume that $\ang{\cH}_p$ has no Mal'tsev polymorphism. Under this assumption we either show how a $p$-indicator rectangularity obstruction can be $p$-mpp-defined using $\Upsilon_3(\cH^{\dg p})$ or construct a 2-automorphic polynomial of $\cH$. 
We reuse the notation for $E_\cH=\{\vc zL\}$ and argue by induction on $i\in[L+1]$ that there exists a relation 
$\rel'_i\sse\pr_{I_\cH\cup J_\cH\cup E_i}\Upsilon_3(\cH^{\dg p})$, where $E_i=\{z_i\zd z_L\}$, such that $\rel'_i$ is $p$-mpp-definable in $\cH$ and $(\ba_\cH,\bc_\cH),(\bb_\cH,\bc_\cH),(\bb_\cH,\bd_\cH)\in\pr_{I_\cH\cup J_\cH}\rel'_i$. Again,  $\rel'_{L+1}$ is the required $p$-indicator rectangularity obstruction and $\rel'_1=\Upsilon_3(\cH^{\dg p})$ provides  the base case of induction. 

Suppose $\rel'_i$ is constructed. Consider $z_i\in E_\cH$ and let $B_1,B_2,B_3$ be the extensions of $(\ba_\cH,\bc_\cH)$, $(\bb_\cH,\bc_\cH)$, $(\bb_\cH,\bd_\cH)$, respectively, to the coordinate $z_i$. Since any triple of the form $(a,b,b)$ or $(b,b,a)$ is a fixed point of any M-automorphism of $\cH^{(3)}$, if $B_1\tm B_2\tm B_3$ contains a tuple of this form, $z_i$ can be quantified away by Lemma~\ref{lem:M-definitions}. The only three cases when it does not happen are: 
\begin{enumerate}
    \item $|B_1|=|B_2|=|B_3|=1$, and all $B_1,B_2,B_3$ are distinct, 
    \item $|B_1|=|B_3|=1$, $B_1=B_3$, $|B_2|=2$ and $B_1\cap B_2=B_3\cap B_2=\eps$,
    \item $|B_1|=|B_3|=2$, $B_1=B_3$, $|B_2|=1$ and $B_1\cap B_2=B_3\cap B_2=\eps$.
\end{enumerate}

\smallskip

{\sc Case 1.}
$|B_1|=|B_2|=|B_3|=1$, and all $B_1,B_2,B_3$ are distinct.

\smallskip

In this case we can set 
\[
\rel'_{i+1}(\bx_{\mid I_\cH \cup J_\cH \cup E_{i+1}})=\exists^{\equiv p} z_i\ \rel'_i.
\]
\smallskip

{\sc Case 2.}
$|B_1|=|B_3|=1$, $B_1=B_3$, $|B_2|=2$ and $B_1\cap B_2=B_3\cap B_2=\eps$.

\smallskip

Without loss of generality, $B_1=B_3=\{2\}$ and $B_2=\{0,1\}$. Note that as $B_2$ is the set of all extensions of a certain tuple and constant relations are in $\cH$, it must be a $p$-subalgebra of $\cH^{\dg p}$ preserved by all the polymorphisms. If $B_1\tm B_2\tm B_3=\{202,212\}$ contains a fixed point of all M-automorphisms of $\cH^3$ then $z_i$ can be quantified away by Lemma~\ref{lem:M-definitions}. Otherwise there is an M-automorphism $(g_1,g_2,g_3)$ of $\cH^3$ swapping 202 and 212 (and thus $p$ must be 2). We show that in this case the polymorphisms $g_1,g_2,g_3$ of $\cH$ generate an automorphic polynomial of order 2.  Set $f_1(x,y)=g_1(x,y,x),f_2(x,y)=g_2(x,y,x),f_3(x,y)=g_3(x,y,x)$. The operations $f_1,f_2,f_3$ preserve $\{0,1\}$ and, as we assume that $(g_1,g_2,g_3)$ swap 202 and 212, we obtain that $f_1(2,0)=f_1(2,1)=f_3(2,0)=f_3(2,1)=2$, $f_2(2,0)=1$, and $f_2(2,1)=0$. Thus, they have unknown values only on $(1,0),(0,1),(0,2),(1,2)$. We use the fact that $(g_1,g_2,g_3)$ is an M-automorphism of $\cH^3$. For instance, it means that there are only 2 options on how it acts on $(0,1,0)$ and $(1,0,1)$: they are either fixed points or being swapped.  Tables~1 and~2 below list all the possibilities for the values of $g_1,g_2,g_3$ on the triples $(0,1,0),(1,0,1),(0,2,0),(1,2,1)$ and a term in $f_1,f_2,f_3$ that generates the 2-automorphic polynomial $f$ given by its operation table 
\[
\begin{array}{l|ccc}
f&0&1&2\\
\hline
0&0&1&2\\
1&0&1&2\\
2&1&0&2
\end{array}
\]
The terms are found using the Universal Algebra Calculator \cite{Freese:ucalc}. Arrows show the images of tuples under $(g_1,g_2,g_3)$.

\smallskip

{\sc Subcase 2.1.} $010\leftrightarrow101$, $202\leftrightarrow212$\\[3mm]
We exploit symmetries: in all the terms below, if a term works for a M-automorphism mapping $202\to x_1y_1z_1, 121\to x_2y_2z_2$, then swapping $f_1$ and $f_3$ will work for a M-automorphism mapping $202\to z_1y_1x_1, 121\to z_2y_2x_2$. 
\bgroup\setlength{\belowdisplayskip}{-1ex}
\begin{gather*}
\begin{array}{||l|l||}
\hline
020\to 121\hspace{3mm} 121\to 020  & f_1(y,f_1(x,y))\\
020\to 012\hspace{3mm} 121\to 121  & f_2(f_3(x,y),y)\\
020\to 102\hspace{3mm} 121\to 121  & f_2(f_1(x,y),y)\\
020\to 020\hspace{3mm} 121\to 121  & f_1(f_2(x,y),f_1(x,y))\\
020\to 012\hspace{3mm} 121\to 021  & f_1(y,f_2(y,x))\\
020\to 102\hspace{3mm} 121\to 021  & f_1(y,f_1(x,y))\\
020\to 201\hspace{3mm} 121\to 021  & f_2(f_3(x,y),y)\\
020\to 210\hspace{3mm} 121\to 021  & f_1(y,f_2(y,x))\\
020\to 020\hspace{3mm} 121\to 021  & f_2(f_1(x,y),y)\\
020\to 102\hspace{3mm} 121\to 012  & f_1(y,f_1(x,y))\\
020\to 201\hspace{3mm} 121\to 012  & f_1(y,f_2(f_1(x,y),y))\\
020\to 210\hspace{3mm} 121\to 012  & f_1(y,f_1(f_3(y,x),f_2(y,x)))\\
020\to 020\hspace{3mm} 121\to 012  & f_2(f_1(x,y),y)\\
020\to 012\hspace{3mm} 121\to 102  & f_1(y,f_2(y,x))\\
020\to 201\hspace{3mm} 121\to 102  & f_1(y,f_2(f_1(x,y),y))\\
020\to 210\hspace{3mm} 121\to 102  & f_1(y,f_2(y,x))\\
020\to 020\hspace{3mm} 121\to 102  & f_1(y,f_1(f_2(x,y),f_1(x,y)))\\
\hline
\end{array}
\end{gather*}
\captionof{table}{\hbox{\vbox to 7mm{}}Subcase 2.1. $010\leftrightarrow101$, $202\leftrightarrow212$}
\egroup

\medskip

{\sc Subcase 2.2.}
$010\leftrightarrow010$, $101\leftrightarrow101$, $202\leftrightarrow212$\\[3mm]
We again use symmetries in the same way as in Case~2.1. The stars in the first 3 rows indicate that the term applies to any combination of these values. 
Also, by line 1 of Table~2 one of $g_2(020),g_2(121)$ should not be 2. We assume it is $g_2(020)$, the other case is symmetric by swapping 0 and 1.
\bgroup\setlength{\belowdisplayskip}{-1ex}
\begin{gather*}
\begin{array}{||l|l||}
\hline
020\to *2*\hspace{3mm} 121\to *2*  & f_2\\
020\to 1**\hspace{3mm} 121\to 0** & f_1\\
020\to **1\hspace{3mm} 121\to **0 & f_3\\
020\to 012\hspace{3mm} 121\to 121  & f_2(f_3(x,y),y)\\
020\to 102\hspace{3mm} 121\to 121  & f_2(f_1(x,y),y)\\
020\to 012\hspace{3mm} 121\to 021  & f_2(f_3(x,y),y)\\
020\to 201\hspace{3mm} 121\to 021  & f_1(f_2(x,y),y)\\
020\to 210\hspace{3mm} 121\to 021  & f_2(y,f_1(y,x))\\
020\to 201\hspace{3mm} 121\to 012  & f_1(f_1(f_2(x,y),y),y)\\
020\to 210\hspace{3mm} 121\to 012  & f_2(y,f_1(y,x))\\
020\to 012\hspace{3mm} 121\to 102  & f_2(y,f_3(x,y))\\
020\to 201\hspace{3mm} 121\to 102  & f_1(f_2(x,y),y)\\
020\to 210\hspace{3mm} 121\to 102  & f_2(f_1(y,x),f_3(y,x))\\
\hline
\end{array}
\end{gather*}
\captionof{table}{\hbox{\vbox to 7mm{}}Subcase 2.2. $010\leftrightarrow010$, $101\leftrightarrow101$, $202\leftrightarrow212$}
\egroup

\medskip

{\sc Case 3.}
$|B_1|=|B_3|=2$, $B_1=B_3$, $|B_2|=1$ and $B_1\cap B_2=B_3\cap B_2=\eps$.

\smallskip

Without loss of generality, $B_1=B_3=\{0,1\}$ and $B_2=\{2\}$. Note that $B_1,B_3$ must be $p$-subalgebras of $\cH^{\dg p}$ preserved by all the polymorphisms. If $B_1\tm B_2\tm B_3=\{020,021,120,121\}$ contains a fixed point of all M-automorphisms of $\cH^3$ then $z_i$ can be quantified away by Lemma~\ref{lem:M-definitions}. Since $|B_1\tm B_2\tm B_3|=4$, this is the case when $p>3$. If $p=3$, then observe that every extension of $(\ba_\cH,\bc_\cH),(\bb_\cH,\bc_\cH),(\bb_\cH,\bd_\cH)$ in $\pr_{I_\cH\cup J_\cH\cup E_{i+1}}\rel'_i$ has 1 or 2 extensions to a tuple from $\rel'_i$. Therefore, we can set 
\[
\rel'_{i+1}(\bx_{|I_\cH\cup J_\cH\cup E_{i+1}})=\exists^{\equiv3}z_i\ \rel'_i.
\]

So we assume $p=2$. In this case we show that the polymorphisms $g_1,g_2,g_3$ of $\cH$ generate an automorphic  polynomial of order 2. Set again $f_1(x,y)=g_1(x,y,x),\lb f_2(x,y)=g_2(x,y,x),f_3(x,y)=g_3(x,y,x)$. The operations $f_1,f_2,f_3$ preserve $\{0,1\}$, therefore there are only 2 options for the action of $g_1,g_2,g_3$ on $\{0,1\}^3$: $010\leftrightarrow101$ and the identity mapping. Along with the condition that $g_1,g_2,g_3$ preserves $B_1\tm B_2\tm B_3$ and that it is an M-automorphism of order 2, the only undefined values are on $202,212,012,102,201,210$. We will need to look into the latter 4 in only one case. Tables~3--6 below list all the possibilities for the values of $f_1,f_2,f_3$ (and in one case of $g_1,g_2,g_3$ as well) and a term that generates the automorphic polynomial 
\[
\begin{array}{l|ccc}
f&0&1&2\\
\hline
0&0&1&2\\
1&0&1&2\\
2&1&0&2
\end{array}
\]

\smallskip

{\sc Subcase 3.1.}$010\leftrightarrow101$, $020\leftrightarrow121$\\[3mm]
We exploit symmetries: in all the terms below, if a term works for a M-automorphism mapping $202\to x_1y_1z_1, 121\to x_2y_2z_2$, then swapping $f_1$ and $f_3$ will work for a M-automorphism mapping $202\to z_1y_1x_1, 121\to z_2y_2x_2$. 
\bgroup\setlength{\belowdisplayskip}{-1ex}
\begin{gather*}
\begin{array}{||l|l||}
\hline
202\to 212\hspace{3mm} 212\to 202  & f_1(y,f_1(x,y))\\
202\to 012\hspace{3mm} 212\to 212  & f_1(y,f_2(y,x))\\
202\to 102\hspace{3mm} 212\to 212  & f_2(f_1(x,y),f_3(y,x))\\
202\to 202\hspace{3mm} 212\to 212  & f_1(y,f_1(x,y))\\
202\to 012\hspace{3mm} 212\to 021  & f_1(y,f_2(y,f_2(y,x)))\\
202\to 102\hspace{3mm} 212\to 021  & f_2(f_1(x,y),y)\\
202\to 201\hspace{3mm} 212\to 021  & f_1(y,f_2(y,x))\\
202\to 210\hspace{3mm} 212\to 021  & f_2(f_1(x,y),y)\\
202\to 202\hspace{3mm} 212\to 021  & f_1(y,f_2(y,x))\\
202\to 102\hspace{3mm} 212\to 012  & f_1(y,f_1(y,f_1(x,y)))\\
202\to 201\hspace{3mm} 212\to 012  & f_1(y,f_1(y,f_1(x,y)))\\%
202\to 210\hspace{3mm} 212\to 012  & f_2(f_1(x,y),y)\\
202\to 202\hspace{3mm} 212\to 012  & f_2(f_1(x,y),f_3(y,x))\\
202\to 012\hspace{3mm} 212\to 102  & f_2(f_2(y,x),y)\\
202\to 201\hspace{3mm} 212\to 102  & f_1(y,f_2(y,x))\\
202\to 210\hspace{3mm} 212\to 102  & f_2(f_2(y,x),y)\\
202\to 202\hspace{3mm} 212\to 102  & f_1(y,f_2(y,x))\\
\hline
\end{array}
\end{gather*}
\captionof{table}{\hbox{\vbox to 7mm{}}Subcase 3.1.$010\leftrightarrow101$, $020\leftrightarrow121$}
\egroup

\smallskip

{\sc Subcase 3.2.} $010\leftrightarrow010$, $101\leftrightarrow101$, $020\leftrightarrow121$\\[3mm]
We again use exploit symmetries in the same way as before.
Also, by lines 2,3 of the following table one of $g_1(020),g_1(121)$ and one of $g_3(020),g_3(121)$ should not be 2. This means that none of 202, 212 is mapped to 202 or 212. We assume it is $g_2(020)$, the other case is symmetric by swapping 0 and 1.
\bgroup\setlength{\belowdisplayskip}{-1ex}
\begin{gather*}
\begin{array}{||l|l||}
\hline
202\to *1*\hspace{3mm} 212\to *0*  & f_2\\
202\to 2**\hspace{3mm} 212\to 2** & f_1\\
202\to **2\hspace{3mm} 212\to **2 & f_3\\
202\to 201\hspace{3mm} 212\to 012  & f_1(y,f_1(x,f_1(y,f_3(x,y))))\\
202\to 210\hspace{3mm} 212\to 012  & f_3(y,f_2(y,x))\\
202\to 201\hspace{3mm} 212\to 102  & f_1(y,f_2(x,y))\\
\hline
\end{array}
\end{gather*}
\captionof{table}{\hbox{\vbox to 7mm{}}Subcase 3.2. $010\leftrightarrow010$, $101\leftrightarrow101$, $020\leftrightarrow121$}
\egroup

\smallskip

{\sc Subcase 3.3.} $010\leftrightarrow101$, $020\to 021,\ \ 121\to 120$\\[3mm]
Under the same assumptions

\bgroup\setlength{\belowdisplayskip}{-1ex}
\begin{gather*}
\begin{array}{||l|l||}
\hline
202\to *1*\hspace{3mm} 212\to *0*  & f_2\\
202\to 202\hspace{3mm} 212\to212 & f_1(f_3(y,x),f_1(x,y))\\
202\to 202\hspace{3mm} 212\to201 & f_3(y,f_2(y,x))\\
202\to 202\hspace{3mm} 212\to210 & f_1(y,f_1(f_3(y,x),f_1(x,y)))\\
202\to 202\hspace{3mm} 212\to012 & f_2(f_1(x,y),f_3(y,x))\\
202\to 202\hspace{3mm} 212\to102 & f_1(f_3(y,x),f_2(y,x))\\
202\to 212\hspace{3mm} 212\to012 & f_2(f_1(x,y),y)\\
202\to 212\hspace{3mm} 212\to210 & f_1(f_2(x,y),f_3(x,y))\\
202\to 201\hspace{3mm} 212\to202 & f_1(f_2(x,y),f_3(x,y))\\
202\to 201\hspace{3mm} 212\to212 & f_1(y,f_1(f_3(y,x),f_1(x,y)))\\
202\to 201\hspace{3mm} 212\to210 & f_1(y,f_1(f_3(y,x),f_1(x,y)))\\
202\to 201\hspace{3mm} 212\to012 & f_1(y,f_1(f_3(y,x),f_1(x,y)))\\
202\to 201\hspace{3mm} 212\to102 & f_2(f_3(x,y),y)\\
202\to 210\hspace{3mm} 212\to212 & f_3(y,f_2(y,x))\\
202\to 210\hspace{3mm} 212\to102 & f_2(f_1(x,y),y)\\
202\to 012\hspace{3mm} 212\to212 & f_1(f_3(y,x),f_2(y,x))\\
202\to 012\hspace{3mm} 212\to210 & f_2(f_3(x,y),y)\\
202\to 102\hspace{3mm} 212\to202 & f_2(f_1(x,y),y)\\
202\to 102\hspace{3mm} 212\to212 & f_2(f_1(x,y),f_3(y,x))\\
202\to 102\hspace{3mm} 212\to201 & f_2(f_1(x,y),y)\\
202\to 102\hspace{3mm} 212\to210 & f_1(y,f_1(f_3(y,x),f_1(x,y)))\\
202\to 102\hspace{3mm} 212\to012 & f_2(f_1(x,y),y)\\
\hline
\end{array}
\end{gather*}
\captionof{table}{\hbox{\vbox to 7mm{}}Subcase 3.3. $010\leftrightarrow101$, $020\to 021,\ \ 121\to 120$}
\egroup
\smallskip

{\sc Subcase 3.4.} $010\leftrightarrow010$, $101\leftrightarrow101$, $020\to 021,\ \ 121\to 120$\\[3mm]
This is the only case when we also need the operations $g_1,g_2,g_3$ themselves.
\bgroup\setlength{\belowdisplayskip}{-1ex}
\begin{gather*}
\begin{array}{||l|l||}
\hline
202\to **2\hspace{3mm} 212\to **2 & f_3\\
202\to *1*\hspace{3mm} 212\to *1* & f_2\\
202\to 202\hspace{3mm} 212\to201 & f_3(y,f_2(y,x))\\
202\to 202\hspace{3mm} 212\to210\ \ 201\ \ 012\to 102 & g_1(y,g_3(y,x,y),x)\\
202\to 202\hspace{3mm} 212\to210\ \ 012\ \ 201\to 102 & g_3(g_3(y,x,y),x,y)\\
202\to 202\hspace{3mm} 212\to210\ \ 102\ \ 201\to 012 & g_1(g_3(y,x,y),x,y)\\
202\to 202\hspace{3mm} 212\to210\ \ 102\ \ 201\ \ 012 & g_3(g_3(x,y,x),x,g_3(x,y,y))\\
202\to 201\hspace{3mm} 212\to212\ \ 210\ \ 012\to102 & g_1(y,g_3(y,x,y),x)\\
202\to 201\hspace{3mm} 212\to212\ \ 012\ \ 210\to102 & g_1(g_3(y,x,y),x,y)\\
202\to 201\hspace{3mm} 212\to212\ \ 102\ \ 210\to012 & g_3(g_3(y,x,y),x,y)\\
202\to 201\hspace{3mm} 212\to212\ \ 102\ \ 210\ \ 012 & g_3(g_3(y,x,y),x,y)\\
202\to 201\hspace{3mm} 212\to210\ \  012\to 102 & g_1(y,g_3(x,y,x),x)\\
202\to 201\hspace{3mm} 212\to210\ \  012\ \  102 & g_3(g_3(x,y,x),x,y)\\
202\to 201\hspace{3mm} 212\to012\ \ 210\to 102 & g_2(g_3(x,y,x),y,g_1(x,y,x))\\
202\to 201\hspace{3mm} 212\to012\ \ 210\ \ 102 & g_3(g_3(y,x,y),x,y)\\
202\to 201\hspace{3mm} 212\to102\ \ 210\to 012 & g_1(g_3(y,x,y),x,g_1(y,x,y))\\
202\to 201\hspace{3mm} 212\to102\ \ 210\ \ 012 & g_1(g_3(y,x,y),x,g_1(y,x,y))\\
202\to 210\hspace{3mm} 212\to212 & f_3(y,f_2(y,x))\\
202\to 210\hspace{3mm} 212\to012 & f_3(y,f_2(y,x))\\
202\to 012\hspace{3mm} 212\to210 \ \ 201\to 102 & g_1(g_3(y,x,y),x,g_1(y,x,y))\\
202\to 012\hspace{3mm} 212\to210 \ \ 201\ \ 102 & g_1(g_3(y,x,y),x,g_1(y,x,y))\\
202\to 102\hspace{3mm} 212\to201 & f_3(y,f_2(y,x))\\
202\to 102\hspace{3mm} 212\to210\ \ 201\to 012 & g_2(g_3(x,y,x),y,g_1(x,y,x))\\
202\to 102\hspace{3mm} 212\to210\ \ 201\ \ 012 & g_3(g_3(y,x,y),x,y)\\
\hline
\end{array}
\end{gather*}
\captionof{table}{\hbox{\vbox to 7mm{}}Subcase 3.4. $010\leftrightarrow010$, $101\leftrightarrow101$, $020\to 021,\ \ 121\to 120$}
\egroup
\end{proof}

\section{Complexity Classifications: Conservative Structures}\label{sec:conservative-classification}

In this section we give an (incomplete) complexity classification of $\#_p\CSP(\cH)$, where $\cH$ is a $p$-conservative structure. The missing case is $p>2$ and $\ang{\cH}_p$ has a Mal'tsev polymorphism. The gap is due the lack of understanding of the complexity of partition functions over nontrivial finite fields. It is a highly nontrivial problem that requires a separate study, see \cite{Kazeminia25:thesis} for some partial results.

We start with proving hardness for structures whose domains are 2-element.

\begin{proposition}\label{pro:2-element-obstruction-appendix}
Let $\cH$ be a multi-sorted relational structure, all of whose domains contain at most 2 elements. If a $p$-indicator rectangularity obstruction is $p$-mpp-definable in $\cH$, then $\#_p\CSP(\cH)$ is $\#_pP$-hard.
\end{proposition}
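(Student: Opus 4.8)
The plan is to reduce from a known $\#_pP$-hard problem over a 2-element domain, exploiting the fact that a $p$-indicator rectangularity obstruction is, by definition, a non-rectangular relation that is $p$-mpp-definable in $\cH$. Since all domains of $\cH$ have at most 2 elements, the obstruction $\rel\sse\pr_{I_\cH\cup J_\cH}\Upsilon_3(\cH^{\dg p})$ is a relation over a 2-element domain (after discarding trivial coordinates), and it fails rectangularity at some partition of its coordinates into two nonempty blocks: there are tuples $\ba_\cH,\bb_\cH$ on the first block and $\bc_\cH,\bd_\cH$ on the second with $(\ba_\cH,\bc_\cH),(\bb_\cH,\bc_\cH),(\bb_\cH,\bd_\cH)\in\rel$ but $(\ba_\cH,\bd_\cH)\notin\rel$. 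By Proposition~\ref{pro:GadgetExists}, $\#_p\CSP(\cH+\rel)$ is polynomial time reducible to $\#_p\CSP(\cH)$, so it suffices to show $\#_p\CSP(\cH+\rel)$ is $\#_pP$-hard; and since $\cH$ contains all constant relations (we assumed $\cH=\cH^\const$), we are free to pin individual coordinates to fixed values.

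First I would collapse the non-rectangular relation to a small canonical gadget. Viewing $\rel$ as a binary relation $\rel'\sse B_1\tm B_2$ over the blocks $P_1=\pr_{I_\cH\cup J_\cH}\rel$ and its complement, the four tuples $\ba_\cH,\bb_\cH$ (resp.\ $\bc_\cH,\bd_\cH$) are distinct elements of $B_1$ (resp.\ $B_2$). Using constants to fix all coordinates of $P_1$ except one, and all of $P_2$ except one where $\ba_\cH,\bb_\cH$ (resp.\ $\bc_\cH,\bd_\cH$) differ, I obtain from $\rel$ a pp-definable — hence $p$-mpp-definable — binary relation $\sigma$ on $\{0,1\}$ containing $(0,0),(1,0),(1,1)$ but not $(0,1)$; this is precisely the implication relation $\{(x,y): x\ge y\}$, i.e.\ a 2-element structure that is not rectangular. (If some coordinates of $\cH$ that the obstruction lives over have only 1 element the argument only simplifies.) Then $\#_p\CSP(\cH+\rel)$ $p$-mpp-interprets $\#_p\CSP$ over the Boolean structure whose only relation is $\sigma$, together with the two constants $\{0\}$ and $\{1\}$, which are $p$-subalgebras because constants are in $\cH$.

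It remains to invoke Faben's classification. By Faben~\cite{faben2008complexity}, the complexity of $\#_p\CSP(\cH')$ for 2-element $\cH'$ is fully classified, and the Boolean structure $(\{0,1\};\ \text{IMP},\{0\},\{1\})$ — equivalently, Horn/implicational constraints together with both constants, which has no Mal'tsev polymorphism because IMP is not rectangular — falls on the $\#_pP$-hard side (this is one of Faben's key hardness cases, corresponding to the lack of a Mal'tsev/affine structure). Composing the reductions — from the hard Boolean problem to $\#_p\CSP(\cH+\rel)$ via the constant-fixing pp-interpretation, and from $\#_p\CSP(\cH+\rel)$ to $\#_p\CSP(\cH)$ via Proposition~\ref{pro:GadgetExists} — yields $\#_pP$-hardness of $\#_p\CSP(\cH)$.

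The main obstacle I anticipate is the bookkeeping in the reduction to the canonical Boolean gadget: one must check that substituting constants into the multi-sorted, multi-ary relation $\rel$ to isolate a single non-rectangular pair of coordinates indeed produces exactly the relation IMP (and not, say, a full or empty binary relation on some coordinate), which requires using the precise definition of the obstruction tuples $\ba_\cH,\bb_\cH,\bc_\cH,\bd_\cH$ and the fact that $\ba_\cH\ne\bb_\cH$ and $\bc_\cH\ne\bd_\cH$ in at least one coordinate each. Everything else — the reductions through $p$-mpp-definitions and the citation of Faben's dichotomy — is routine given the results already established in the excerpt.
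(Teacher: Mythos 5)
There is a genuine gap in your reduction to the canonical binary gadget, and it is not just bookkeeping---it is the heart of the argument.

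Your plan is to fix ``all coordinates of $P_1$ except one, and all of $P_2$ except one where $\ba_\cH,\bb_\cH$ (resp.\ $\bc_\cH,\bd_\cH$) differ'' by pinning them to constants. The trouble is that $\ba'$ and $\bb'$ (likewise $\bc'$ and $\bd'$) can disagree on \emph{many} coordinates, not just one; they are the projections of the big tuples $\ba_\cH,\bb_\cH$ (enumerating all pairs $(a,b,b)$), so generically they differ everywhere. If you pin a coordinate $j\in I'$ to $\ba'[j]$ and $\ba'[j]\neq\bb'[j]$, then the tuples $(\bb',\bc')$ and $(\bb',\bd')$ no longer satisfy the pinned relation, and the obstruction collapses; pinning to $\bb'[j]$ has the dual problem for $(\ba',\bc')$. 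So constant-pinning alone does not isolate a single non-rectangular binary slice. This is precisely the obstacle you flagged at the end, but it cannot be waved away as ``precise bookkeeping'': without a mechanism to collapse the disagreeing coordinates one at a time, the reduction never reaches a binary relation.

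The paper resolves this by an \emph{iterative arity reduction} that is quite different from pure pinning. At each step it drops a single coordinate $v$, examines the extension sets $B_1,B_2,B_3$ of the three good tuples on $v$, and---depending on their intersection pattern and on $p$---either $p$-modularly quantifies $v$ away outright, quantifies it after conjoining a constant picked from $B_1\cap B_2\cap B_3$, or, crucially, when $B_1\cap B_2\ne\eps$ and $B_2\ne B_3$, \emph{rebuilds the obstruction on a different coordinate set} $U'=I'\cup\{v\}$, producing a ``swapped'' obstruction $(\ba',0),(\ba',1),(\bb',0)\in\rel'$, $(\bb',1)\notin\rel'$. That case has no analogue in your scheme, and it is where the disagreement between $\bc'$ and $\bd'$ at other coordinates is absorbed. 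The final binary relation has matrix $\left(\begin{smallmatrix}0&1\\1&1\end{smallmatrix}\right)$, and the appeal to Faben's 2-element hardness classification is the same as yours---that last step is fine. But you need to replace the single pinning step with the paper's case-by-case $\exists^{\equiv p}$-quantification argument to make the reduction go through.
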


\begin{proof}
Suppose $\rel \subseteq \pr_U \Upsilon_3(\cH)$ is a $p$-mpp-definable $p$-indicator rectangularity obstruction, where $U \subseteq I_\cH \cup J_\cH$, $|U| \ge 2$, and both $I' = U \cap I_\cH$ and $J' = U \cap J_\cH$ are nonempty. Define the projections
\[
\ba' = \pr_{I'} \ba_\cH,\quad \bb' = \pr_{I'} \bb_\cH,\quad \bc' = \pr_{J'} \bc_\cH,\quad \bd' = \pr_{J'} \bd_\cH,
\]
and assume $(\ba', \bc'), (\bb', \bc'), (\bb', \bd') \in \rel$ while $(\ba', \bd') \notin \rel$.

If $|U| = 2$, then since each domain has at most 2 elements, the relation $\rel$ corresponds to a binary relation over two 2-element domains. In this case, $\#_p\CSP(\rel)$ is equivalent to evaluating the partition function of a matrix
\[
M = \begin{pmatrix} 0 & 1 \\ 1 & 1 \end{pmatrix},
\]
which is known to be $\#_p\mathrm{P}$-complete by~\cite{faben2008complexity}. Thus, the result holds in this case.

We now consider the case $|U| \ge 3$ and aim to reduce the arity of $\rel$ while preserving the rectangularity obstruction. That is, we want to find a subset $U' \subset U$ and a $p$-mpp-definable relation $\rel' \subseteq \pr_{U'} \rel$ that still encodes the obstruction, i.e.,
\[
(\ba'', \bc''),\, (\bb'', \bc''),\, (\bb'', \bd'') \in \rel',\quad (\ba'', \bd'') \notin \rel',
\]
where $\ba'' = \pr_{I''} \ba_\cH$, $\bb'' = \pr_{I''} \bb_\cH$, $\bc'' = \pr_{J''} \bc_\cH$, $\bd'' = \pr_{J''} \bd_\cH$, and $I'' = U' \cap I_\cH$, $J'' = U' \cap J_\cH$ are nonempty.

Since $|U| \ge 3$, at least one of $|I'| \ge 2$ or $|J'| \ge 2$ must hold. Without loss of generality, assume $|J'| \ge 2$ and select a variable $v \in J'$. Write $\bc' = (\bc'', 0)$ and $\bd' = (\bd'', 1)$, arranging the coordinates so that $v$ corresponds to the last position.

Let $B_1, B_2, B_3$ be the sets of possible extensions of $(\ba', \bc''), (\bb', \bc''), (\bb', \bd'')$ in $\rel$, respectively, over the coordinate $v$. Since the domains have at most 2 elements, each $B_i \subseteq \{0,1\}$ has size at most 2.

If $p \ge 3$, then regardless of the structure of the $B_i$, we may define $U' = U - \{v\}$ and let
\[
\rel'(\bx_{|U'}) = \exists^{\equiv p} \bx_{|v}\, \rel(\bx_{|U}) = \pr_{U'} \rel,
\]
which is still $p$-mpp-definable and preserves the obstruction (as the indicator condition is unaffected modulo $p$). This gives the desired result.

Now suppose $p = 2$ and we analyze based on the intersections of the $B_i$. 

{\sc Case 1:} $B_1 \cap B_2 \cap B_3 \neq \emptyset$. Let $c \in B_1 \cap B_2 \cap B_3$. Then we define
\[
\rel'(\bx_{|U'}) = \exists^{\equiv 2} \bx_{|v}\, (\rel(\bx_{|U}) \wedge C_c(\bx_{|v}),
\]
which restricts $v$ to a constant and ensures the projection of $\rel'$ satisfies the required rectangularity obstruction over $U' = U - \{v\}$.

{\sc Case 2:} $B_1 \cap B_2 \ne \emptyset$ and $B_2 \ne B_3$. Without loss of generality we can assume that $B_1 = \{ 1,0\}, B_2 = \{ 0\}$ and $B_3=\{ 1\}$, suppose $1 \in B_1 - B_2$. Let $U'' = I' \cup \{v\}$ and define
\[
\rel'(\bx_{|U''}) = \exists^{\equiv 2} \bx_{|J' - \{v\}} \, (\rel(\bx_{|U}) \wedge \bigwedge_{j \in J' - \{ v \}} C_{\bc''_{|j}}(\bx_{|j}).
\]
Then $\rel'$ contains $(\ba', 0), (\ba', 1), (\bb', 0)$ but not $(\bb', 1)$, that is, it is a rectangularity obstruction. Although this is a “swapped” obstruction (i.e., $\ba'$ and $\bb'$ trade roles), the structure still suffices to encode hardness, after relabeling.

{\sc Case 3:} $B_1 = B_2 \ne B_3 $. Without loss of generality $B_1 = B_2 = \{ 1\}, B_3 = \{0\}$. Then the value of $v$ is fixed for all tuples in $\rel$ that match the patterns, and we may again let
\[
\rel'(\bx_{|U - \{ v \}}) = \exists^{\equiv 2} \bx_{|v} \, \rel(\bx_{|U}) = \pr_{U - \{v\}} \rel,
\]
which satisfies the same type of rectangularity obstruction.
We now argue that the case analysis covers all possible configurations of $B_1, B_2, B_3 \subseteq \{0,1\}$ with $B_i \ne \emptyset$. There are only seven relevant patterns (up to symmetry):

\begin{itemize}
    \item $B_1 = B_2 = B_3$: handled in Case~1.
    \item $B_1 = B_2 \ne B_3$, with $B_1 \cap B_3 \ne \emptyset$: falls under Case~1.
    \item $B_1 = B_2 \ne B_3$, with $B_1 \cap B_3 = \emptyset$: handled in Case~3.
    \item $B_1 = \{0,1\}, B_2 = \{0\}, B_3 = \{1\}$: handled in Case~2.
    \item All $B_i$ different but pairwise intersecting: reduces to Case~1.
    \item Two equal, third intersects both: reduces to Case~1.
    \item Any configuration where $B_1 \cap B_2 \cap B_3 \ne \emptyset$: handled in Case~1.
\end{itemize}

Since $\{0,1\}$ has only two elements, no three nonempty subsets can be pairwise disjoint. Therefore, the three cases exhaust all possibilities.

By iterating this reduction process, we eventually arrive at a relation of arity 2, and as argued above, such a relation yields a $\#_p\mathrm{P}$-complete problem. Therefore, $\#_p\CSP(\cH)$ is $\#_p\mathrm{P}$-hard.
\end{proof}

\begin{proposition}\label{pro:2-element-obstruction}
Let $\cH$ be a multi-sorted relational structure, all of whose domains contain at most 2 elements. 
The problem $\#_p\CSP(\cH)$ is solvable in polynomial time if and only if $\ang{\cH}_p$ has a Mal'tsev polymorphism. Otherwise it is $\#_pP$-hard.
\end{proposition}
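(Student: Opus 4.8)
We prove the two directions of the dichotomy separately. The key preliminary remark is that, under the standing assumption $\cH=\cH^\const$, a structure all of whose domains have at most two elements is already $p$-conservative: each singleton of a domain $H_i$ is one of the constant relations of $\cH^\const$, the full domain $H_i$ is pp-defined by the empty conjunction, and the empty unary relation of $H_i$ is conjunctively defined by $C_{H_i,a}(x)\wedge C_{H_i,b}(x)$ for distinct $a,b\in H_i$ (the case $|H_i|\le1$ being trivial). Hence the machinery of Section~\ref{sec:maltsev-conservative} is available for $\cH$.

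For the hardness direction, assume $\ang{\cH}_p$ has no Mal'tsev polymorphism. By Proposition~\ref{pro:mult-uniqueness-main} we may take $\cH$ to be $p$-rigid, and by Proposition~\ref{pro:ConstantCSP-main}(2) that $\cH=\cH^\const$. Applying Theorem~\ref{the:conservative-malc} to the $p$-conservative structure $\cH$, we obtain a $p$-indicator rectangularity obstruction that is $p$-mpp-definable in $\cH$, i.e.\ a relation $\rel\sse\pr_{I_\cH\cup J_\cH}\Upsilon_3(\cH^{\dg p})$ containing $(\ba_\cH,\bc_\cH),(\bb_\cH,\bc_\cH),(\bb_\cH,\bd_\cH)$ but not $(\ba_\cH,\bd_\cH)$. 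Since every domain has at most two elements, Proposition~\ref{pro:2-element-obstruction-appendix} then yields that $\#_p\CSP(\cH)$ is $\#_pP$-hard. (An alternative, self-contained route repeats the induction of Theorem~\ref{the:conservative-malc} over the set $E_\cH$, which for two-element domains consists of precisely the triples $(0,1,0)$ and $(1,0,1)$ of each domain; at each step the candidate extension sets $B_1,B_2,B_3\sse\{0,1\}$, and Lemma~\ref{lem:M-auto-2-trivial} shows that either the coordinate is eliminated trivially or $B_1\tm B_2\tm B_3$ --- being a $p$-subalgebra, hence closed under automorphisms, and containing one of the ``structured'' triples fixed by every $M$-automorphism --- is pointwise fixed, so Lemma~\ref{lem:M-definitions} applies. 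Unlike the three-element case of Theorem~\ref{the:3-elem-maltsev}, no automorphic-polynomial alternative can arise here.)

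For the tractability direction, assume $m$ is a Mal'tsev polymorphism of $\ang{\cH}_p$. As every relation of $\cH$ lies in $\ang{\cH}_p$, $m$ is in particular a Mal'tsev polymorphism of $\cH$, so $\cH$ is strongly $p$-rectangular; we again take $\cH$ to be $p$-rigid. If $p=2$, Theorem~\ref{the:parity-algorithm} solves $\#_2\CSP(\cH)$ in polynomial time. If $p\ge3$, we invoke the elementary fact that over a two-element set every relation admitting some Mal'tsev polymorphism is affine, i.e.\ a coset of a subgroup of the appropriate power of $\zZ$ modulo $2$; this is checked by inspecting the four Mal'tsev operations on $\{0,1\}$ and verifying that each has only affine invariant relations (cf.\ the Boolean classification \cite{Creignou96:complexity}). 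Applying this to each relation of $\cH$, after identifying each two-element domain with $\{0,1\}$, we conclude that $\cH$ is affine, so for every instance of $\#\CSP(\cH)$ the solution set is, sort by sort, an affine subspace whose coset representative and dimension are computed by Gaussian elimination. Hence $\#\CSP(\cH)$ is in polynomial time, and reducing the exact count modulo $p$ solves $\#_p\CSP(\cH)$ in polynomial time as well.

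The main obstacle is the tractability half for $p\ge3$: this is exactly the case that remains open for general $p$-conservative structures, where large domains can encode genuinely hard modular partition functions. Restricting to two-element domains removes that difficulty --- the Mal'tsev-implies-affine phenomenon collapses the problem to linear algebra --- and this is what lets the classification be completed here. The remaining work is to carry out the Mal'tsev-implies-affine verification carefully and, on the hardness side, the routine bookkeeping ensuring that the hypotheses of Theorem~\ref{the:conservative-malc} and Proposition~\ref{pro:2-element-obstruction-appendix} are met, including the reductions to $\cH=\cH^\const$ and to $p$-rigid $\cH$.
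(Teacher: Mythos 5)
Your hardness direction is exactly the paper's argument: observe $\cH=\cH^\const$ with two-element domains is $p$-conservative, invoke Theorem~\ref{the:conservative-malc} to get a $p$-mpp-definable $p$-indicator rectangularity obstruction, and finish with Proposition~\ref{pro:2-element-obstruction-appendix}. That part is fine.

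The tractability direction has a genuine gap in the case $p\ge 3$ (and, for that matter, the paper's argument is uniform in $p$ and does not need your case split, so the same gap is implicitly present in what you should be proving for $p=2$ as well, even though your alternative route through Theorem~\ref{the:parity-algorithm} happens to rescue that case). You invoke ``the elementary fact that over a two-element set every relation admitting some Mal'tsev polymorphism is affine,'' verified ``by inspecting the four Mal'tsev operations on $\{0,1\}$,'' and then say you apply this to each relation of $\cH$. But $\cH$ is \emph{multi-sorted}: a relation $\rel$ of $\cH$ spans several two-element sorts, and the Mal'tsev polymorphism $m=(m_1,\dots,m_k)$ of $\ang{\cH}_p$ may apply a \emph{different} one of the four Mal'tsev operations on $\{0,1\}$ to each sort. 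Such a relation is preserved by the multi-sorted operation $m$, but it is not a priori preserved by any \emph{single} Boolean Mal'tsev operation applied coordinate-wise, so the single-sorted inspection does not ``apply to each relation of $\cH$.'' Concretely, for $\ba,\bb,\bc\in\rel$ the tuple $m(\ba,\bb,\bc)\in\rel$ need not equal $\ba\oplus\bb\oplus\bc$ (they differ precisely on coordinates $j$ with $\ba[j]=\bc[j]\ne\bb[j]$ when $m$ is not the minority on sort $\sigma(j)$), so the coset argument does not get off the ground. Your phrase ``the solution set is, sort by sort, an affine subspace'' also suggests a sort-by-sort decomposition that does not hold for relations spanning multiple sorts.

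The paper fills exactly this gap with the explicit term construction: classifying the action of $m$ on each sort into four types, it defines $f(x,y,z)=m(m(x,y,z),x,m(x,z,y))$, $g(x,y)=f(x,x,y)$, and $h(x,y,z)=g(m(x,y,z),f(x,y,z))$, and verifies that $h$ acts as the minority on \emph{every} sort regardless of the type of $m$ there. This produces a single multi-sorted polymorphism of $\ang{\cH}_p$ that is minority on all sorts simultaneously, which is what actually licenses ``every relation of $\ang{\cH}_p$ is a system of linear equations modulo $2$'' and hence the Gaussian-elimination algorithm. That step is the heart of the tractability proof and cannot be replaced by the single-sorted observation alone; you need to either reproduce the term construction or supply an argument that a single term interpolates the minority across all four Mal'tsev types.
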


\begin{proof}
Note that a relational structure with constant relations, each of whose domains is 2-element, is conservative. Therefore, by Theorem~\ref{the:conservative-malc} if $\ang{\cH}_p$ does not have a Mal'tsev polymorphism then the $p$-indicator rectangularity obstruction is $p$-mpp-definable in $\cH$. By Proposition~\ref{pro:2-element-obstruction-appendix} $\#_p\CSP(\cH)$ is $\#_pP$-hard.

Suppose that $\ang{\cH}_p$ has a Mal'tsev polymorphism $m$. From the description of clones on a 2-element set \cite{Post41} we can conclude that for every domain $H_i$ of $\cH$ there is a polymorphism $m_i$ of $\ang{\cH}_p$ such that $m_i$ acts as $x\oplus y\oplus z$ (or the minority operation) on $H_i$, where $\oplus$ is addition modulo~2. If we prove that there exists a polymorphism $m'$ of $\ang{\cH}_p$ that acts as a minority operation on every domain of $\cH$ then every relation of $\ang{\cH}_p$ is representable by a system of linear equations modulo~2, and therefore $\#\CSP(\cH)$ is solvable in polynomial time, and so is $\#_p\CSP(\cH)$.

For each $H_i=\{a_i,b_i\}$, there are 4 options of how the operation $m$ can act. Indeed, the values 
\[
m(a_i,a_i,b_i),m(a_i,b_i,b_i),m(b_i,a_i,a_i),m(b_i,b_i,a_i)
\]
are fixed, as $f$ is a Mal'tsev operation that only leaves $m(a_i,b_i,a_i),m(b_i,a_i,b_i)\in\{a_i,b_i\}$. Depending on these values we call $m$ on $H_i$ type~0 if $m(a_i,b_i,a_i)=b_i$ and $m(b_i,a_i,b_i)=a_i$ (in this case $m$ on $H_i$ equals the minority operation $x\oplus y\oplus z$, exactly what we want to obtain); $m$ on $H_i$ is of type~1 if $m(a_i,b_i,a_i)=a_i$ and $m(b_i,a_i,b_i)=b_i$; of type~2 if $m(a_i,b_i,a_i)=m(b_i,a_i,b_i)=a_i$; and of type~3 if $m(a_i,b_i,a_i)=m(b_i,a_i,b_i)=b_i$.

Now, consider the term
\[
f(x,y,z)=m(m(x,y,z),x,m(x,z,y)).
\]
As is easily seen, $f$ acts as a minority operation whenever $m$ on $H_i$ is of type~1,2, or~3, and it is the first projection $f(x,y,z)=x$ if $m$ on $H_i$ is of type~0. This means that
\[
g(x,y)=f(x,x,y)
\]
acts as the first projection if $m$ on $H_i$ is of type~0, and as the second projection otherwise. Finally, consider
\[
h(x,y,z)=g(m(x,y,z),f(x,y,z)).
\]
This operation acts as $m(x,y,z)$ if $m$ on $H_i$ is of type~0, and it acts as $f(x,y,z)$ otherwise. In both cases $h$ is the minority operation.
\end{proof}

\begin{theorem}\label{the:conservative-hard}
Let $\cH$ be a conservative structure and $p$ a prime. If $\ang{\cH}_p$ does not have a Mal'tsev polymorphism then $\#_p\CSP(\cH)$ is $\#_pP$-hard. If $\ang{\cH}_2$ has a Mal'tsev polymorphism then $\#_2\CSP(\cH)$ is solvable in polynomial time.
\end{theorem}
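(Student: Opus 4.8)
The plan is to deduce the theorem by combining three results already in hand: the structural dichotomy for $p$-conservative structures (Theorem~\ref{the:conservative-malc}), the hardness criterion over two-element domains (Proposition~\ref{pro:2-element-obstruction-appendix}), and the parity algorithm (Theorem~\ref{the:parity-algorithm}). Throughout I use the standing assumption $\cH=\cH^\const$, which in particular makes $\cH$ $p$-rigid, and I note that if every domain of $\cH$ is a singleton then $\cH$ is trivial, $\ang{\cH}_p$ has a Mal'tsev polymorphism, and there is nothing to prove; so when proving hardness I may assume some domain has at least two elements, whence $I_\cH$ and $J_\cH$ are both nonempty.

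For the hardness part I would assume $\ang{\cH}_p$ has no Mal'tsev polymorphism and apply Theorem~\ref{the:conservative-malc} to obtain a $p$-indicator rectangularity obstruction: a relation $\rel\in\ang{\cH}_p$ with coordinate set $I_\cH\cup J_\cH$ such that $(\ba_\cH,\bc_\cH),(\bb_\cH,\bc_\cH),(\bb_\cH,\bd_\cH)\in\rel$ while $(\ba_\cH,\bd_\cH)\notin\rel$. The crucial reduction is to collapse every coordinate domain to size at most two: at a coordinate $(a,b,b)\in I_\cH$ only the values $a,b$ occur among the four distinguished tuples, and at a coordinate $(b,b,a)\in J_\cH$ only $a,b$ occur (with $a\ne b$); each such two-element set is a $p$-subalgebra since $\cH$ is $p$-conservative, and conjoining $\rel$ with the corresponding unary predicates keeps us inside $\ang{\cH}_p$ because $\ang{\cH}_p$ is closed under conjunction --- two strict $p$-mpp-definitions (which exist by Proposition~5.6 of \cite{DBLP:conf/stoc/BulatovK22}) combine into one whose extension count is the product of the two. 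This yields $\rel_0\in\ang{\cH}_p$ with the same obstruction pattern but over domains of size at most two. I would then run the arity-reduction argument of Proposition~\ref{pro:2-element-obstruction-appendix} on $\rel_0$: each step there uses only modular existential quantification, conjunction with the constant relations of $\cH$, and projection --- all of which preserve membership in $\ang{\cH}_p$ --- together with the sole structural fact that the three extension sets inspected at each step live in a two-element domain. This terminates at a binary relation in $\ang{\cH}_p$ whose $0/1$ matrix is $\bigl(\begin{smallmatrix}0&1\\1&1\end{smallmatrix}\bigr)$; by Faben~\cite{faben2008complexity} the corresponding partition function modulo $p$ is $\#_pP$-complete, and by Proposition~\ref{pro:GadgetExists} it reduces to $\#_p\CSP(\cH)$, so $\#_p\CSP(\cH)$ is $\#_pP$-hard.

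For the tractability part, suppose $p=2$ and $\ang{\cH}_2$ has a Mal'tsev polymorphism $m$. Then $\cH$ is strongly $2$-rectangular: given any $\rel\in\ang{\cH}_2$ and any splitting of its coordinates into two nonempty blocks with $(\ba,\bc),(\ba,\bd),(\bb,\bc)\in\rel$, applying $m$ coordinatewise to $(\ba,\bd),(\ba,\bc),(\bb,\bc)$ gives $(m(\ba,\ba,\bb),m(\bd,\bc,\bc))=(\bb,\bd)\in\rel$ by the Mal'tsev identities. Since $\cH$ is moreover $2$-rigid, Theorem~\ref{the:parity-algorithm} applies verbatim and solves $\#_2\CSP(\cH)$ in polynomial time.

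The main obstacle I anticipate is not the mathematics --- the hard case analyses are already carried out in Theorem~\ref{the:conservative-malc} and in the proof of Proposition~\ref{pro:2-element-obstruction-appendix} --- but the bookkeeping in the hardness part: verifying that the restriction to two-element subdomains genuinely preserves the rectangularity obstruction, that every relation produced stays $p$-mpp-definable in $\cH$ itself (and not merely in some auxiliary small structure), and that the arity-reduction loop of Proposition~\ref{pro:2-element-obstruction-appendix} still runs correctly when $\cH$ is allowed to have larger domains elsewhere. Once these points are checked, the theorem is essentially an assembly of the earlier results.
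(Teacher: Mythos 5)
Your proof is correct and follows essentially the same route as the paper: obtain a $p$-indicator rectangularity obstruction from Theorem~\ref{the:conservative-malc}, restrict each coordinate to the two-element $p$-subalgebra witnessed by the four distinguished tuples (this is exactly the paper's $\rel'$), and then appeal to the two-element analysis to conclude hardness; for the tractability part, derive strong $2$-rectangularity from the Mal'tsev polymorphism of $\ang{\cH}_2$ and invoke Theorem~\ref{the:parity-algorithm}, using $\cH=\cH^\const$ for $2$-rigidity. The only difference is cosmetic: the paper cites Proposition~\ref{pro:2-element-obstruction} as a black box on $\rel'$ viewed over the two-element subdomains, while you choose to re-run the arity-reduction loop of Proposition~\ref{pro:2-element-obstruction-appendix} inside $\ang{\cH}_p$ and verify at each step that only modular quantification, conjunction with constants, and projection are used. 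Your worry at the end is a reasonable one --- the black-box application glosses over the fact that after collapsing domains the relation $\rel'$ is no longer literally indexed by $I_{\cH'}\cup J_{\cH'}$ of the small structure, so re-running the argument (which only ever uses the obstruction pattern and the two-element bound on the extension sets, never the indicator indexing) is the cleaner way to justify it --- but this does not expose a genuine gap, only a presentation choice, so the proposal stands.
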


\begin{proof}
The second statement of the theorem follows from Theorem~15 of \cite{Kazeminia25:modular}. For the first statement, by Theorem~\ref{the:conservative-malc} a $p$-indicator rectangularity obstruction $\rel$ is $p$-mpp-definable in $\cH$. Recall that 
$
(\ba_\cH,\bc_\cH),(\bb_\cH,\bc_\cH),(\bb_\cH,\bd_\cH)\in\rel,
$
while $(\ba_\cH,\bd_\cH)\not\in\rel$. As $\cH$ is conservative, for every $v\in I_\cH$ ($v\in J_\cH$), $H_{i_v}=\{\ba[v],\bb[v]\}$ ($H_{i_v}=\{\bc[v],\bd[v]\}$) is a $p$-subalgebra of $\cH$. Set
\[
\rel'(\bx_{|I_\cH \cup J_\cH})=\rel(\bx_{|I_\cH \cup J_\cH})\meet\bigwedge_{v\in I_\cH\cup J_\cH} H_{i_v}(\bx_{v}).
\]
The relation $\rel'$ still contains $(\ba_\cH,\bc_\cH),(\bb_\cH,\bc_\cH),(\bb_\cH,\bd_\cH)$ and does not contain  $(\ba_\cH,\bd_\cH)$. By Proposition~\ref{pro:2-element-obstruction} $\#_p\CSP(\rel')$ is $\#_pP$-hard.
\end{proof}

\section{Complexity Classification: 3-Element Structures}\label{sec:3-element-classification}

Next, we consider the case when $\cH$ is a 3-element structure.

\begin{theorem}\label{the:dichotomy-3-element}
Let $\cH$ be a 3-element structure and $p$ a prime. If one of the following conditions holds:
\begin{enumerate}
\item[(a)] 
$\cH$ has a $p$-automorphic polynomial $f$ and $\ang{\cH^f}_p$ has a Mal'tsev polymorphism, or 
\item[(b)] 
$\cH$ does not have a $p$-automorphic polynomial, $\ang{\cH}_p$ has a Mal'tsev polymorphism, and $p=2$,
\end{enumerate}
then $\#_p\CSP(\cH)$ is solvable in polynomial time. Also, if
\begin{enumerate}
\item[(1)] 
$\cH$ has a $p$-automorphic polynomial $f$ and $\ang{\cH^f}_p$ does not have a Mal'tsev polymorphism, or
\item[(2)] 
$\cH$ does not have a $p$-automorphic polynomial and $\ang{\cH}_p$ does not have a Mal'tsev polymorphism, 
\end{enumerate}
then $\#_p\CSP(\cH)$ is $\#_pP$-complete.
\end{theorem}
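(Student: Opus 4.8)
The statement has four cases: (a) and (b) assert polynomial time solvability, (1) and (2) assert $\#_pP$-completeness. Membership of $\#_p\CSP(\cH)$ in $\#_pP$ is immediate (guess a mapping, check it is a homomorphism), so in (1) and (2) only $\#_pP$-hardness must be proved. Throughout I keep the standing assumptions that $\cH$ is $p$-rigid and $\cH=\cH^\const$; in particular every polymorphism of $\cH$ is idempotent. The two tractable cases are short. In case (a) I would invoke Proposition~\ref{pro:p-auto-poly} to reduce $\#_p\CSP(\cH)$ in polynomial time to $\#_p\CSP(\cH^f)$. Since $\cH$ has three elements and, for an element $a$ with $f(a,x)$ of order $p$, idempotency forces $f(a,x)$ to fix $a$, the permutation $f(a,x)$ is a transposition of the two points of $H-\{a\}$ (so in fact $p=2$), and the construction in the proof of Proposition~\ref{pro:p-auto-poly} replaces $H$ by domains of sizes $2$ and $1$. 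Hence $\cH^f$ has all domains of size at most $2$, and since $\ang{\cH^f}_p$ has a Mal'tsev polymorphism, Proposition~\ref{pro:2-element-obstruction} makes $\#_p\CSP(\cH^f)$, hence $\#_p\CSP(\cH)$, polynomial time solvable. Case (b) is Theorem~\ref{the:parity-algorithm}: $\cH$ is $2$-rigid by assumption, a Mal'tsev polymorphism of $\ang{\cH}_2$ preserves and therefore makes rectangular every member of $\ang{\cH}_2$, so $\cH$ is strongly $2$-rectangular, and the theorem gives an $O(n^5)$ algorithm for $\#_2\CSP(\cH)$.

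For the two hard cases the engine is Theorem~\ref{the:3-elem-maltsev}: when $\ang{\cH}_p$ has no Mal'tsev polymorphism, either $\cH$ admits a $p$-indicator rectangularity obstruction that is $p$-mpp-definable in $\cH$, or $p=2$ and $\cH$ has a $2$-automorphic polynomial $f$ with $f(a,x)$ of order $2$ and $H-\{a\}$ a $2$-subalgebra. In case (1) the idempotency argument again forces $p=2$, and $\cH^f$ is, up to an inessential one-element domain, the substructure of $\cH$ induced by $H-\{a\}$. I would first argue that $H-\{a\}$ can be taken to be a $2$-subalgebra: this is where Theorem~\ref{the:3-elem-maltsev} enters, since option~(1) there would, after the appropriate choice of the representation of $\cH^f$, propagate a Mal'tsev polymorphism back to $\ang{\cH^f}_2$, contradicting the hypothesis of (1); option~(3) supplies a suitable element $a$; and option~(2) is subsumed by case~(2) below. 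Granting this, $\{a\}$ is a $2$-subalgebra because $C_a$ is a basic relation of $\cH=\cH^\const$, so Corollary~\ref{cor:3-element-auto-poly} makes $\#_2\CSP(\cH)$ and $\#_2\CSP(\cH^f)$ polynomial time equivalent; since $\ang{\cH^f}_2$ has no Mal'tsev polymorphism, Proposition~\ref{pro:2-element-obstruction} gives $\#_2P$-hardness of $\#_2\CSP(\cH^f)$ and thus of $\#_2\CSP(\cH)$. In case (2), options~(1) and~(3) of Theorem~\ref{the:3-elem-maltsev} are ruled out by hypothesis, so $\cH$ admits a $p$-indicator rectangularity obstruction $\rel$; using Proposition~\ref{pro:GadgetExists} I would work in $\cH+\rel$ and aim to whittle $\rel$ down to a binary relation witnessing the rectangularity failure on a two-element block, which after restricting by constants computes the partition function of $\begin{pmatrix}0&1\\1&1\end{pmatrix}$ and is $\#_pP$-complete by \cite{faben2008complexity}; each intermediate relation is $p$-mpp-definable, so Proposition~\ref{pro:GadgetExists} keeps everything polynomial time reducible to $\#_p\CSP(\cH)$.

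The main obstacle is precisely this last reduction of a three-element $p$-indicator rectangularity obstruction to the two-element hardness of Proposition~\ref{pro:2-element-obstruction-appendix}. Over two-element domains that proposition already performs the arity reduction; over a three-element domain the coordinate-elimination step faces extension sets $B_1,B_2,B_3\subseteq H$ of size up to three, and the only way, in every configuration, to either eliminate a coordinate (via Lemma~\ref{lem:M-definitions} and a triple fixed by all M-automorphisms of $(\cH^{\dg p})^3$) or restrict it to a genuine two-element $p$-subalgebra is to exploit the absence of a $p$-automorphic polynomial: in each configuration that would otherwise block the elimination, the term computations in the proof of Theorem~\ref{the:3-elem-maltsev} turn a blocking M-automorphism into a $2$-automorphic polynomial of $\cH$, contradicting the hypothesis. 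Carrying out this case analysis in full, and verifying that the resulting low-arity relation over a possibly three-element domain genuinely reduces — using constants and the two distinguished values carried by the critical tuples — to the Faben matrix, is where the bulk of the work lies; the $2$-subalgebra verification needed in case~(1) is a special instance of the same analysis.

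Finally, $\#_pP$-completeness in (1) and (2) follows by combining the hardness just sketched with the trivial membership in $\#_pP$ noted at the outset, which completes the four-way split.
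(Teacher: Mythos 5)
Your handling of the two tractable cases is correct and matches the paper: (a) via Proposition~\ref{pro:p-auto-poly} and Proposition~\ref{pro:2-element-obstruction}, and (b) via Theorem~\ref{the:parity-algorithm}. The idempotency argument that forces $p=2$ when a $p$-automorphic polynomial exists on a $3$-element domain is also right. However, the hardness direction — which is where the actual difficulty of the theorem lies — has a genuine gap, and your intended endgame for case~(2) does not match what the paper does and would not work as stated.

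The paper's proof, after ruling out the easy cases, invokes Theorem~\ref{the:3-elem-maltsev} to get a $p$-mpp-definable $p$-indicator rectangularity obstruction, then performs a case analysis on the \emph{number of $2$-element $p$-subalgebras} of $\cH$ (conservative, exactly two, exactly one, or none). In each of Cases 2--4, the obstruction's arity is reduced coordinate by coordinate, but the target remaining at the end is a binary non-rectangular relation \emph{over the $3$-element domain}, after which Lemma~\ref{lem:3-element-eval-appendix} finishes the job by interpreting that relation as a $3\times 3$ matrix $M$ and appealing to the $\EVAL_p(M)$ hardness results of \cite{DBLP:conf/stoc/BulatovK22}, with Lemma~\ref{lem:automorphism-poly-automorphism-appendix} supplying the ``no $p$-automorphism of the bipartite graph'' hypothesis from the ``no automorphic polynomial'' assumption. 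Your proposal instead aims to push all the way down to the $2$-element Faben matrix $\left(\begin{smallmatrix}0&1\\1&1\end{smallmatrix}\right)$. That cannot be done in general: the $2$-element witness of non-rectangularity inside the binary relation (the four-entry submatrix showing the failure) is typically \emph{not} a $p$-subalgebra of $\cH$, so there is no $p$-mpp-definition restricting to it, and no ``restricting by constants'' move recovers a two-element block. This is precisely why the paper needs the three-element $\EVAL_p$ machinery in Lemma~\ref{lem:3-element-eval-appendix} rather than the two-element Proposition~\ref{pro:2-element-obstruction-appendix}. You also explicitly leave the coordinate-elimination case analysis uncarried-out; but that analysis, organized around the number of $2$-element $p$-subalgebras and the three candidate extension sets $B_1,B_2,B_3 \subseteq H$, is the core of the proof and cannot be omitted.

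For case~(1), your observation that one must check $H-\{a\}$ is a $2$-subalgebra before applying Corollary~\ref{cor:3-element-auto-poly} is a real concern, but the argument you offer — routing through the three alternatives of Theorem~\ref{the:3-elem-maltsev} — does not close it: option~(1) of that theorem concerns $\ang{\cH}_p$, not $\ang{\cH^f}_p$, and there is no evident way to ``propagate a Mal'tsev polymorphism back to $\ang{\cH^f}_2$''; option~(2) is dismissed as ``subsumed by case~(2) below'' even though the hypotheses are disjoint ($\cH$ has an automorphic polynomial in case~(1), does not in case~(2)); only option~(3) genuinely delivers the needed $p$-subalgebra. So your sketch of why the Corollary's hypothesis holds is not a proof. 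In short: the scaffolding of your argument is sound and your diagnosis of where the difficulty lives is accurate, but the hardness half of the theorem is left essentially unproved, and the final reduction you envision (to the $2$-element Faben matrix rather than to a $3\times 3$ partition function) is not the route that works.
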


In order to prove Theorem~\ref{the:dichotomy-3-element} we need to prove some auxiliary statements first.



\begin{lemma}\label{lem:automorphism-poly-automorphism-appendix}
Let $\cH$ be a 3-element structure with $H=\{a,b,c\}$ and $\{a,b\}$ its $p$-subalgebra. Then if $\cH^2$ has an automorphism that swaps $(c,a)$ and $(c,b)$, then it also has a 2-automorphic polynomial.
\end{lemma}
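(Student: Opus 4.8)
The idea is to read a $2$-automorphic polynomial off the given automorphism $\sigma$ of $\cH^2$. By Lemma~\ref{lem:poly-auto} we may write $\sigma=(g_1,g_2)$, where $g_1,g_2\colon\cH^2\to\cH$ are binary polymorphisms of $\cH$ and $\sigma(x,y)=(g_1(x,y),g_2(x,y))$ is a bijection of $H^2$. From $\sigma(c,a)=(c,b)$ and $\sigma(c,b)=(c,a)$ we record $g_1(c,a)=g_1(c,b)=c$, $g_2(c,a)=b$ and $g_2(c,b)=a$.

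First I would determine exactly how $\sigma$ permutes the nine points of $H^2$. Under the standing assumption $\cH=\cH^{\const}$, the structure $\cH^2$ contains the diagonal liftings of the constant relations, i.e. the singletons $\{(a,a)\},\{(b,b)\},\{(c,c)\}$, so $\sigma$ fixes all three diagonal points. Since $\{a,b\}$ is a $p$-subalgebra of $\cH$, its defining $p$-mpp-formula, interpreted in $\cH^2$, has a number of witnesses that is the product of the two coordinatewise counts; hence it $p$-mpp-defines $\{a,b\}^2$ in $\cH^2$, so $\{a,b\}^2$ is $\sigma$-invariant. Removing the fixed points $(a,a),(b,b)$ leaves $\{(a,b),(b,a)\}$ $\sigma$-invariant; in the complement $(c,c)$ is fixed and $\{(c,a),(c,b)\}$ is mapped onto itself by hypothesis, so $\{(a,c),(b,c)\}$ is $\sigma$-invariant too. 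Thus $\sigma$ is determined by two bits $t,u\in\{0,1\}$ recording whether it acts as the identity or as the transposition on $\{(a,b),(b,a)\}$ and on $\{(a,c),(b,c)\}$ respectively, while it always acts as the transposition on $\{(c,a),(c,b)\}$. From this the complete value tables of $g_1$ and $g_2$ can be written down; in particular $g_1(c,\cdot)$ is the constant map $c$, $g_2(c,\cdot)=(a\ b)$, $g_2(a,c)=g_2(b,c)=c$, and $g_2(a,\cdot),g_2(b,\cdot)$ take values in $\{a,b\}$.

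Finally I would exhibit the polynomial. Neither $g_1$ nor $g_2$ is itself a $2$-automorphic polynomial ($g_1(c,\cdot)$ is constant, and $g_2(a,\cdot),g_2(b,\cdot)$ may fail to be permutations), so I would take the composition
\[
f(x,y)=g_2\bigl(g_1(x,y),\,y\bigr),
\]
which is a polymorphism of $\cH$ since $g_1$, $g_2$ and the second projection are. A direct check over the at most four possibilities for $(t,u)$ shows that $f(a,\cdot)=f(b,\cdot)=\mathrm{id}_H$ while $f(c,\cdot)=(a\ b)$: feeding the constant $c$-row of $g_1$ into $g_2$ reproduces $g_2(c,\cdot)=(a\ b)$, whereas feeding the $a$- and $b$-rows of $g_1$ (which stay in $\{a,b\}$) into $g_2$ cancels whichever transpositions $t$ and $u$ introduced and straightens both rows to the identity. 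Hence $f$ is a $2$-automorphic polynomial of $\cH$. The only real work is organising this finite case analysis and, conceptually, observing that it is precisely the $p$-subalgebra hypothesis that forces $\{a,b\}^2$ — and thus the entire block decomposition of $H^2$ — to be invariant under $\sigma$; once that decomposition is in hand the verification is mechanical. (Without the standing convention one gets four further cases in which $\sigma$ swaps $(a,a)$ and $(b,b)$; the same term $f$ handles them as well.)
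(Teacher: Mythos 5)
Your proof is correct and follows the paper's overall strategy: write the automorphism $\sigma$ of $\cH^2$ as a pair $(g_1,g_2)$ of binary polymorphisms of $\cH$ (Lemma~\ref{lem:poly-auto}), use the standing convention $\cH=\cH^\const$ together with the $\sigma$-invariance of $\{a,b\}^2$ to conclude that $\sigma$ fixes the diagonal and acts independently as the identity or a transposition on each of the three off-diagonal blocks $\{(a,b),(b,a)\}$, $\{(a,c),(b,c)\}$, $\{(c,a),(c,b)\}$ (always swapping the third), and then read off a $2$-automorphic polynomial as a term in $g_1,g_2$. Where you genuinely improve on the paper is the last step: the paper branches on the two free bits and supplies a different term in each branch, while you give the single term $f(x,y)=g_2(g_1(x,y),y)$ together with a conceptual reason it works uniformly (the constant $c$-row of $g_1$ passes $g_2(c,\cdot)=(a\ b)$ straight through, and on the $a$- and $b$-rows $g_2$ cancels whatever transpositions $g_1$ introduced). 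I checked $f$ in all four $(t,u)$-combinations and it indeed yields $f(a,\cdot)=f(b,\cdot)=\mathrm{id}$, $f(c,\cdot)=(a\ b)$. You also supply the justification for why $\{a,b\}^2$ is $\sigma$-invariant -- the $p$-mpp-formula defining $\{a,b\}$ in $\cH$, evaluated in $\cH^2$, has coordinatewise-product witness count and therefore defines $\{a,b\}^2$, and automorphisms of $\cH^2$ preserve $p$-mpp-definable sets because they permute witnesses bijectively -- which the paper asserts without argument; the justification is needed here, since being a polymorphism of $\cH$ does not by itself preserve a $p$-subalgebra. As a side benefit your uniform term sidesteps two apparent slips in the paper's case analysis: in the case where $(a,b),(b,a)$ are fixed the automorphic polynomial should be $g_2$ rather than $g_1$ (since $g_1(c,\cdot)\equiv c$ is not a permutation), and the term $g_1(x,g_1(y,x))$ listed for the case $ab\leftrightarrow ba$, $ac\leftrightarrow bc$ sends both $(a,a)$ and $(a,b)$ to $a$ and hence is not a permutation of the $a$-row.
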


\begin{proof}
Let $g(x,y)=(g_1(x,y),g_2(x,y))$ be an automorphism specified in the lemma. Then the operations $g_1,g_2$ are defined on $(a,a),(b,b),(c,c),(c,a),(c,b)$ and they map $(a,b),(b,a)$ to $\{a,b\}$, as it is a $p$-subalgebra. Then $g$ either swaps $(a,b),(b,a)$ or leave them in place, and it either swaps $(a,c),(b,c)$ or leaves them in place. If $(a,b),(b,a)$ are fixed points, then $g_1(x,y)$ is an automorphic polynomial. Otherwise an automorphic polynomial can be obtained as is shown in the table below.
\[
\begin{array}{||l|l||}
\hline
ab\leftrightarrow ba\hspace{3mm} ac\to ac\hspace*{3mm} bc\to bc &	g_1(g_2(x,y),g_1(x,y))\\
ab\leftrightarrow ba\hspace{3mm} ac\leftrightarrow bc &  g_1(x,g_1(y,x))\\
\hline
\end{array}
\]
\end{proof}


\begin{lemma}\label{lem:3-element-eval-appendix}
Let $\cH$ be a 3-element structure without automorphic polynomials and $\rel$ a binary non-rectangular relation $p$-mpp-definable in $\cH$. Then $\#_p\CSP(\cH)$ is $\#_pP$-hard.
\end{lemma}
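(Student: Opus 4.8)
The plan is to reduce a known $\#_pP$-hard problem to $\#_p\CSP(\cH)$ by exploiting the non-rectangular binary relation $\rel$. Since $\rel$ is $p$-mpp-definable in $\cH$, by Proposition~\ref{pro:GadgetExists} the problem $\#_p\CSP(\cH+\rel)$ is polynomial time reducible to $\#_p\CSP(\cH)$, so it suffices to show $\#_p\CSP(\cH+\rel)$ (equivalently, $\#_p\CSP$ over the single relation $\rel$ together with the constants of $\cH$) is $\#_pP$-hard. Recall $\cH=\cH^\const$ contains all constant relations, so all unary singletons are available. First I would analyze the structure of a non-rectangular binary $\rel\sse H_1\tm H_2$ where $H_1,H_2\sse\{0,1,2\}$: non-rectangularity means there are $a,b\in H_1$ and $c,d\in H_2$ with $(a,c),(a,d),(b,c)\in\rel$ but $(b,d)\notin\rel$. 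By intersecting with the constants $\{a,b\}$ on the first coordinate and $\{c,d\}$ on the second (these are conjunctive definable, hence harmless by Proposition~\ref{pro:GadgetExists}), I may assume $H_1=\{a,b\}$, $H_2=\{c,d\}$, so $\rel$ is a binary relation on two 2-element sets containing exactly three of the four pairs — that is, $\#_p\CSP$ over $\rel$ is $\EVAL$ of the matrix $\left(\begin{smallmatrix}0&1\\1&1\end{smallmatrix}\right)$ (up to relabeling), which is $\#_pP$-complete by Faben~\cite{faben2008complexity}.

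The only subtlety is whether the two 2-element subsets picked out are genuinely available as $p$-subalgebras (or at least usable), and whether the reduction to the 2-element partition function goes through cleanly; but since $\cH$ contains all constants $C_{H,a}$ for $a\in H$, each singleton is conjunctive definable, hence so is each two-element subset of the coordinate domains (as a disjunction is not available, but we only need the singleton constants to pin down, via conjunctive definitions, the sub-relation $\rel\cap(\{a,b\}\tm\{c,d\})$). By Proposition~\ref{pro:GadgetExists} applied to this conjunctive-definable sub-relation, $\#_p\CSP$ over it reduces to $\#_p\CSP(\cH)$. The core hardness input is exactly the $2\tm2$ case established in the proof of Proposition~\ref{pro:2-element-obstruction-appendix} and originally in \cite{faben2008complexity}.

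Where does the hypothesis that $\cH$ has no automorphic polynomial enter? It is needed to rule out the degenerate situation where the apparent non-rectangularity of $\rel$ is an artifact that collapses after the reduction — for instance, if $\cH$ had an automorphic polynomial, one could potentially first shrink $\cH$ via Proposition~\ref{pro:p-auto-poly} and the structure of $\rel$ could change. More concretely, the assumption guarantees (via the contrapositive direction, using Lemma~\ref{lem:automorphism-poly-automorphism-appendix} and Lemma~\ref{lem:M-auto-2-trivial}) that certain M-automorphisms of $(\cH^{\dg p})^3$ that would otherwise swap pairs like $(c,a)\leftrightarrow(c,b)$ do not exist; equivalently, no $p$-automorphic polynomial can be manufactured from the polymorphisms witnessing symmetries of $\rel$. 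This is what ensures that the $2\tm2$ obstruction we extract is "rigid" and genuinely yields the hard matrix rather than, say, a matrix of the form $\left(\begin{smallmatrix}1&1\\1&1\end{smallmatrix}\right)$ after a permutation of rows that is realized by an endomorphism.

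The main obstacle I anticipate is the bookkeeping in the reduction to two-element domains when $\rel$ already lives on domains of size $2$ or $3$ in a mixed way: one must verify that restricting to $\{a,b\}\tm\{c,d\}$ preserves $p$-mpp-definability (this is fine, since conjunction with constants stays within conjunctive definitions, and the $\extp$ multiplicities are controlled because we only add constant constraints, which have unique extensions), and that after this restriction the resulting binary relation is exactly a three-element subset of a $2\tm2$ square rather than something degenerate — but non-rectangularity is precisely the statement that all four of $(a,c),(a,d),(b,c),(b,d)$ cannot be "rectangularly completed," and with domains of size $2$ this forces exactly three of the four pairs to be present. So the argument is essentially: restrict, invoke Proposition~\ref{pro:GadgetExists}, and cite the $2\tm2$ hardness of \cite{faben2008complexity}. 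The role of "no automorphic polynomial" is a sanity condition ensuring we are in the genuinely hard branch of Theorem~\ref{the:3-elem-maltsev}, and it is used implicitly when this lemma is later combined with that theorem to complete the dichotomy.
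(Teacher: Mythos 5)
The central step of your proposal is flawed: you claim you can restrict $\rel$ to $\{a,b\}\times\{c,d\}$ via conjunctive definitions using constants, but this is not possible. Constants give you unary singletons $C_a(x)$, $C_b(x)$, etc.; to express ``$x\in\{a,b\}$'' you would need a \emph{disjunction}, which is not available in (modular) primitive positive formulas. The 2-element set $\{a,b\}$ would have to be a $p$-subalgebra of $\cH$, and there is no reason it should be — indeed the paper's proof of Theorem~\ref{the:dichotomy-3-element} explicitly separates cases where $\cH$ has zero, one, or two 2-element $p$-subalgebras. So the $2\times 2$ matrix $\left(\begin{smallmatrix}0&1\\1&1\end{smallmatrix}\right)$ is not directly $p$-mpp-definable, and the reduction to Faben's Boolean result does not go through as stated.

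The paper's actual argument keeps $\rel$ as a relation on the full 3-element domain and views $\#_p\CSP(\rel)$ as $\EVAL_p(M)$ for a $3\times 3$ 0-1 matrix $M$ (equivalently, counting homomorphisms to a bipartite graph on $3+3$ vertices), then invokes the dichotomy for graph homomorphisms modulo $p$ from \cite{DBLP:conf/stoc/BulatovK22}. This requires a case split on $p$: for $p>3$ such a graph is automatically $p$-rigid; for $p=3$ one either has rigidity or a specific matrix shape, from which a 2-element $p$-subalgebra is genuinely \emph{derived} (by a $\exists^{\equiv 3}$ quantification) and a $3\times 2$ submatrix is shown hard; for $p=2$ one either has distinct rows/columns (giving 2-rigidity) or extracts a $p$-subalgebra by pinning, and then — this is exactly where the ``no automorphic polynomial'' hypothesis is used — Lemma~\ref{lem:automorphism-poly-automorphism-appendix} rules out the 2-automorphisms of $\cH^2$ that would make Theorem~1.4 of \cite{DBLP:conf/stoc/BulatovK22} inapplicable. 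Your explanation of the role of the no-automorphic-polynomial hypothesis is in the right spirit (it is needed to block a certain swap automorphism of $\cH^2$) but lands in the wrong place: it is not a ``sanity condition'' for being in the hard branch, it is a concrete precondition for the graph-hardness theorem cited in the $p=2$ subcase. You should restructure the proof to work directly with the $3\times 3$ matrix and the graph-homomorphism hardness results rather than trying to reduce to a 2-element subdomain that you cannot define.
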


\begin{proof}
Suppose that $H=\{a,b,c\}=\{a',b',c'\}$ (i.e.\ $a',b',c'$ is a permutation of $a,b,c$), and $(a,b'),(b,b'),(b,a')\in\rel$ and $(a,a')\not\in\rel$. Then $\#_p\CSP(\rel)$ is equivalent to the problem $\EVAL_p(M)$, where $M$ is the matrix
\[
M=\left(\begin{array}{ccc} 0&1&*\\ 1&1&*\\ *&*&* \end{array}\right),
\]
where $*$ stands for an undetermined value. We consider $M$ as the adjacency matrix of a bipartite graph $G$. If $p>3$ then $G$ has no $p$-automorphisms, and therefore $\EVAL_p(M)$ is $\#_pP$-hard by the results of \cite{DBLP:conf/stoc/BulatovK22}. If $p=3$ and $G$ has no 3-automorphism then we again get the result by \cite{DBLP:conf/stoc/BulatovK22}. If $G$ has a 3-automorphism, then the only option for $M$ is 
\[
M=\left(\begin{array}{ccc} 0&1&1\\ 1&1&0\\ 1&0&1 \end{array}\right).
\]
In this case the formula $\exists^{\equiv3} y (\rel(x,y)\wedge C_{b'}(y)$ shows that $B=\{a,b\}$ is a $p$-subalgebra of $\cH$. Therefore the relation $\rel(x,y)\wedge B(x)$ is equivalent to $\EVAL(M')$ with 
\[
M'=\left(\begin{array}{ccc} 0&1\\ 1&1\\ 1&1 \end{array}\right),
\]
which is also $\#_3P$-hard by \cite{DBLP:conf/stoc/BulatovK22}.

Let $p=2$. If the last two columns and the last two rows of $M$ are different, $G$ also has no 2-automorphisms, and therefore  and $\EVAL_2(M)$ is $\#_2P$-hard. If the last two columns of $M$ are equal, then by pinning $\rel(a,x)$ we obtain that $\{b,c\}$ is a $p$-subalgebra of $\cH$. By Lemma~\ref{lem:automorphism-poly-automorphism-appendix}, as $\cH$ has no automorphic polynomials, $\cH^2$ also has no 2-automorphism that swaps $(a,b)$ and $(b,c)$. By the same argument there is no 2-automorphism that swaps $(b',a')$ and $(c',a')$. Then Theorem~1.4 of \cite{DBLP:conf/stoc/BulatovK22} implies that $\EVAL_2(M)$ is $\#_2P$-hard. 
\end{proof}


\begin{proof}[Proof of Theorem~\ref{the:dichotomy-3-element}]
Let $H=\{0,1,2\}$. If $\cH$ has an automorphic polynomial, the result follows from Corollary~\ref{cor:3-element-auto-poly} and Proposition~\ref{pro:2-element-obstruction}. If $\ang{\cH}_p$ has a Mal'tsev polymorphism and $p=2$, the result follows from Theorem~15 of \cite{Kazeminia25:modular}. Therefore, we assume that $\ang{\cH}_p$ has no Mal'tsev polymorphism and $\cH$ does not have an automorphic polynomial. By Theorem~\ref{the:3-elem-maltsev} the standard rectangularity obstruction is $p$-mpp-definable in $\cH$. We prove that it is possible to $p$-mpp-define a binary non-rectangular relation showing that $\#_p\CSP(\cH)$ is $\#_pP$-complete. We consider four cases depending on the number of $p$-subalgebras $\cH$ has.

\medskip

\noindent
{\sc Case 1.}
$\cH$ is conservative.

\medskip

In this case the result follows from Theorem~\ref{the:conservative-hard}.

\medskip

In the remaining cases, let $U\sse I_\cH\cup J_\cH$ such that $|U|\ge3$, $I'=U\cap I_\cH\ne\eps$, $J'=U\cap J_\cH\ne\eps$, and $\rel \sse\pr_U\Upsilon_3(\cH^{\dg p})$ such that $\rel$ is $p$-mpp-definable in $\cH$ and $(\ba',\bc'),(\bb',\bc'),(\bb',\bd')\in\rel$, $(\ba',\bd')\not\in\rel$, where $\ba'=\pr_{I'}\ba_\cH, \bb'=\pr_{I'}\bb_\cH, \bc'=\pr_{J'}\bc_\cH, \bd'=\pr_{J'}\bd_\cH$. We prove that in every case there is a set $U'\subset U$ and a relation $\rel'\sse\pr_{U'}\rel$, such that $I''=U'\cap I_\cH\ne\eps$, $J''=U'\cap J_\cH\ne\eps$, $\rel'$ is $p$-mpp-definable in $\cH$ and $(\ba'',\bc''),(\bb'',\bc''),(\bb'',\bd'')\in\rel'$, $(\ba'',\bd'')\not\in\rel'$, where $\ba''=\pr_{I''}\ba_\cH, \bb''=\pr_{I''}\bb_\cH, \bc''=\pr_{J''}\bc_\cH, \bd''=\pr_{J''}\bd_\cH$. 

\medskip

\noindent
{\sc Case 2.}
$\cH$ has two 2-element $p$-subalgebras.

\medskip

As $|U|\ge3$, either $|I'|\ge2$ or $|J'|\ge2$. Without loss of generality suppose $|J'|\ge2$ and pick $v\in J'$. Let $\bc'=(\bc'',0)$ and $\bd'=(\bd'',1)$, where $v$ corresponds to the last coordinate (rename the elements of $H$ if necessary). Also, let $B_1,B_2,B_3$ be the sets of extensions of $(\ba',\bc''),(\bb',\bc''),(\bb',\bd'')$ in $\rel$. Suppose first $B_1\ne B_2$, say, $a\in B_1-B_2$. Then set $U'=I'\cup\{v\}$ and
\[
\rel'(\bx_{|U'})=\exists^{\equiv p}\bx_{J'-\{v\}}\ \rel(\bx_{|U})\meet \bigwedge_{j \in J' - \{ v \}} C_{\bc''_{|j}}(\bx_{|j}).      
\]
The relation $\rel'$ contains $(\ba',0),(\ba',a),(\bb',0)$, but does not contain $(\bb',a)$. If $B_1=B_2$ and $B_1\cap B_3\ne\eps$, say, $a\in B_1\cap B_3$ then we set $U'=U-\{v\}$ and
\[
\rel'(\bx_{|U'})=\exists^{\equiv p}\bx_{|v}\ \rel(\bx_{|U})\meet C_a(\bx_{|v}).
\]

In the remaining case one option is $|B_1|=|B_2|=|B_3|=1$, in which case we set $U'=U-\{v\}$ and
\[
\rel'(\bx_{|U'})=\exists^{\equiv p}\bx_{|v}\ \rel(\bx_{|U}).
\]
The other option is that there is a 2-element $p$-subalgebra $D$ of $\cH$ such that $|B_1\cap D|=|B_2\cap D|=|B_3\cap D|=1$, in which case we set $U'=U-\{v\}$ and
\[
\rel'(\bx_{|U'})=\exists^{\equiv p}\bx_{v}\ \rel(\bx_{|U}) \meet D(\bx_{|v}).
\]

\medskip

\noindent
{\sc Case 3.}
$\cH$ has one 2-element $p$-subalgebra.

\medskip

Let us assume that $D=\{a,b\}$ is the only 2-element $p$-subalgebra. Since $|U|\ge3$, either $|I'|\ge2$ or $|J'|\ge2$; assume the latter. Pick a coordinate $v\in J'$ as follows: if $J'$ contains a coordinate $v$ such that $\bd_\cH[v]\not\in D,\bc_\cH[v]\in D$, then pick a coordinate like this. Otherwise choose an arbitrary $v\in J'$. As in Case~2 let $\bc'=(\bc'',0)$ and $\bd'=(\bd'',1)$, where $v$ corresponds to the last coordinate of $\bc',\bd'$. Let $B_1,B_2,B_3$ be the sets of extensions of $(\ba',\bc''),(\bb',\bc''),(\bb',\bd'')$ in~$\rel$. 

If $B_1\cap B_2\cap B_3\ne\eps$, say, $b\in B_1\cap B_2\cap B_3$ then we choose $U'=U-\{v\}$ and  
\[
\rel'(\bx_{|U'})=\exists^{\equiv p}\bx_{|v}\ \rel(\bx_{|U})\meet C_b(\bx_{|v}).
\]
Therefore we may assume that $B_3=\{1\}$ or $B_3=C$ and $0\not\in C$.

Suppose now that $B_1\ne B_2$, say, $c\in B_1-B_2$, $c\in\{1,2\}$. Then set $U'=I'\cup\{v\}$ and
\[
\rel'(\bx_{|U'})=\exists^{\equiv p}(\bx_{|J'-\{v\}})\ \rel(\bx_{|U})\meet \bigwedge_{j \in J' - \{ v \}} C_{\bc''_{|j}}(\bx_{|j}).
\]
The relation $\rel'$ contains $(\ba',0),(\ba',c),(\bb',0)$, but does not contain $(\bb',c)$. 

If $B_3=\{1\}$ and $B_1=B_2=\{0\}$ then set $U'=U-\{v\}$ and $\rel'=\exists^{\equiv p}v\ \rel$. If $B_1=B_2=\{0,2\}=C$, then by Lemma~\ref{lem:automorphism-poly-automorphism-appendix} there is no $p$-automorphism of $\cH^2$ that swaps $(1,0)$ and $(1,2)$. By Lemma~3.7 of \cite{DBLP:conf/stoc/BulatovK22} there exists a structure $(\cG,x)$ such that $\hom((\cG,x),(\cH,\{1\})),\hom((\cG,x),(\cH,C))\not\equiv0\pmod p$. Let $V$ be the set of vertices of $\cG$. Then set $U'=U-\{v\}$ and 
\[
\rel'(\bx_{|K'})=\exists^{\equiv p} V \ (\rel\meet\cG\meet(x=v)).
\]

Finally, suppose that $B_3=C$ and $0\not\in C$. Note that in this case $B_1=B_2=\{0\}$ (as otherwise it has to be $H$). Then as before by Lemma~3.7 of \cite{DBLP:conf/stoc/BulatovK22} there exists a structure $(\cG,x)$ such that $\hom((\cG,x),(\cH,\{0\})),\hom((\cG,x),(\cH,C))\not\equiv0\pmod p$ and we complete the proof as in the previous case.
\medskip

\noindent
{\sc Case 4.}
$\cH$ has no 2-element $p$-subalgebras.

\medskip

Since $|U|\ge3$, either $|I'|\ge2$ or $|J'|\ge2$; assume the latter and pick an arbitrary $v\in J'$. Let $\bc'=(\bc'',0)$ and $\bd'=(\bd'',1)$, where $v$ corresponds to the last coordinate of $\bc',\bd'$, and let $B_1,B_2,B_3$ defined as in Case~3. Then for each $i\in[3]$, either $|B_i|=1$ or $|B_i|=3$. Thus, unless $p=3$ we can set $U'=U-\{v\}$ and $\rel'(U')=\exists^{\equiv p} v\ \rel(U)$. The same construction works when $p=2$ and $|B_1|=|B_2|=|B_3|=1$. As in Case~3, if $B_1\cap B_2\cap B_3\ne\eps$, say, $c\in B_1\cap B_2\cap B_3$ then we choose $U'=U-\{v\}$ and  
\[
\rel'(\bx_{|U'})=\exists^{\equiv p}\bx_{|v}\ \rel(\bx_{|U})\meet C_b(\bx_{|v}).
\]
Also, if $B_1\ne B_2$, say, $c\in B_1-B_2$, $c\in\{1,2\}$. Then set $U'=I'\cup\{v\}$ and
\[
\rel'(\bx_{|U'})=\exists^{\equiv p}(\bx_{|J'-\{v\}})\ \rel(\bx_{|U})\meet \bigwedge_{j \in J' - \{ v \}} C_{\bc''_{|j}}(\bx_{|j}).
\]
As is easily seen, these cover all the possibilities.
\end{proof}

\bibliographystyle{plainurl}
\bibliography{refrences.bib}

\begin{thebibliography}{10}

\bibitem{Barto11:conservative}
Libor Barto.
\newblock The dichotomy for conservative constraint satisfaction problems revisited.
\newblock In {\em {LICS} 2011}, pages 301--310, 2011.

\bibitem{Barto14:local}
Libor Barto and Marcin Kozik.
\newblock Constraint satisfaction problems solvable by local consistency methods.
\newblock {\em J. {ACM}}, 61(1):3:1--3:19, 2014.

\bibitem{ref:POlymorphismAndUsethem_barto2017polymorphisms}
Libor Barto, Andrei Krokhin, and Ross Willard.
\newblock Polymorphisms, and how to use them.
\newblock In {\em Dagstuhl Follow-Ups}, volume~7. Schloss Dagstuhl-Leibniz-Zentrum fuer Informatik, 2017.

\bibitem{Barvinok16:combinatorics}
Alexander~I. Barvinok.
\newblock {\em Combinatorics and Complexity of Partition Functions}, volume~30 of {\em Algorithms and Combinatorics}.
\newblock Springer, 2016.

\bibitem{Berkholz15:limitations}
Christoph Berkholz and Martin Grohe.
\newblock Limitations of algebraic approaches to graph isomorphism testing.
\newblock In {\em {ICALP} 2015}, volume 9134 of {\em LNCS}, pages 155--166. Springer, 2015.

\bibitem{Bulatov06:3-element}
Andrei~A. Bulatov.
\newblock A dichotomy theorem for constraint satisfaction problems on a 3-element set.
\newblock {\em J. {ACM}}, 53(1):66--120, 2006.

\bibitem{Bulatov11:conservative}
Andrei~A. Bulatov.
\newblock Complexity of conservative constraint satisfaction problems.
\newblock {\em {ACM} Trans. Comput. Log.}, 12(4):24:1--24:66, 2011.

\bibitem{DBLP:journals/jacm/Bulatov13}
Andrei~A. Bulatov.
\newblock The complexity of the counting constraint satisfaction problem.
\newblock {\em J. {ACM}}, 60(5):34:1--34:41, 2013.

\bibitem{Bulatov16:conservative}
Andrei~A. Bulatov.
\newblock Conservative constraint satisfaction re-revisited.
\newblock {\em J. Comput. Syst. Sci.}, 82(2):347--356, 2016.

\bibitem{Bulatov17:dichotomy}
Andrei~A. Bulatov.
\newblock A dichotomy theorem for nonuniform {CSP}s.
\newblock In {\em {FOCS}}, pages 319--330, 2017.

\bibitem{Bulatov06:simple}
Andrei~A. Bulatov and V\'{\i}ctor Dalmau.
\newblock A simple algorithm for {M}al'tsev constraints.
\newblock {\em SIAM J. Comput.}, 36(1):16--27, 2006.

\bibitem{ref:BULATOV_TowardDichotomy}
Andrei~A. Bulatov and Víctor Dalmau.
\newblock Towards a dichotomy theorem for the counting constraint satisfaction problem.
\newblock {\em Information and Computation}, 205(5):651--678, 2007.

\bibitem{Bulatov12:weighted}
Andrei~A. Bulatov, Martin~E. Dyer, Leslie~Ann Goldberg, Markus Jalsenius, Mark Jerrum, and David Richerby.
\newblock The complexity of weighted and unweighted {\#}{CSP}.
\newblock {\em J. Comp. Syst. Sci.}, 78(2):681--688, 2012.

\bibitem{Bulatov05:partition}
Andrei~A. Bulatov and Martin Grohe.
\newblock The complexity of partition functions.
\newblock {\em Theor. Comput. Sci.}, 348(2-3):148--186, 2005.

\bibitem{Bulatov05:classifying}
Andrei~A. Bulatov, Peter Jeavons, and Andrei~A. Krokhin.
\newblock Classifying the complexity of constraints using finite algebras.
\newblock {\em SIAM J. Comput.}, 34(3):720--742, 2005.

\bibitem{DBLP:conf/stoc/BulatovK22}
Andrei~A. Bulatov and Amirhossein Kazeminia.
\newblock Complexity classification of counting graph homomorphisms modulo a prime number.
\newblock In {\em {STOC}}, pages 1024--1037, 2022.

\bibitem{cai-complex}
Jin-Yi Cai and Xi~Chen.
\newblock Complexity of counting {CSP} with complex weights.
\newblock {\em J. ACM}, 64(3), 2017.

\bibitem{Cai10:graph}
Jin{-}yi Cai, Xi~Chen, and Pinyan Lu.
\newblock Graph homomorphisms with complex values: {A} dichotomy theorem.
\newblock In {\em {ICALP} 2010}, volume 6198 of {\em LNCS}, pages 275--286. Springer, 2010.

\bibitem{Cai16:nonnegative}
Jin{-}Yi Cai, Xi~Chen, and Pinyan Lu.
\newblock Nonnegative weighted {\#}{CSP}: An effective complexity dichotomy.
\newblock {\em {SIAM} J. Comput.}, 45(6):2177--2198, 2016.

\bibitem{Cai14:complexity}
Jin{-}Yi Cai, Pinyan Lu, and Mingji Xia.
\newblock The complexity of complex weighted {B}oolean {\#}{CSP}.
\newblock {\em J. Comput. Syst. Sci.}, 80(1):217--236, 2014.

\bibitem{Creignou96:complexity}
Nadia Creignou and Miki Hermann.
\newblock Complexity of generalized satisfiability counting problems.
\newblock {\em Inf. Comput.}, 125(1):1--12, 1996.

\bibitem{Dyer00:complexity}
Martin Dyer and Catherine. Greenhill.
\newblock The complexity of counting graph homomorphisms.
\newblock {\em Random {S}tructures and {A}lgorithms}, 17:260--289, 2000.

\bibitem{effective-Dyer-doi:10.1137/100811258}
Martin Dyer and David Richerby.
\newblock An effective dichotomy for the counting constraint satisfaction problem.
\newblock {\em {SIAM} {J} on {C}omp}, 42(3):1245--1274, 2013.

\bibitem{Dyer09:weighted}
Martin~E. Dyer, Leslie~Ann Goldberg, and Mark Jerrum.
\newblock The complexity of weighted {B}oolean {\#}{CSP}.
\newblock {\em {SIAM} J. Comput.}, 38(5):1970--1986, 2009.

\bibitem{faben2008complexity}
John Faben.
\newblock The complexity of counting solutions to generalised satisfiability problems modulo $k$, 2008.
\newblock \href {https://arxiv.org/abs/0809.1836} {\path{arXiv:0809.1836}}.

\bibitem{ref:CountingMod2Ini}
John Faben and Mark Jerrum.
\newblock The complexity of parity graph homomorphism: an initial investigation.
\newblock {\em arXiv preprint arXiv:1309.4033}, 2013.

\bibitem{Feder98:computational}
Tom{\'{a}}s Feder and Moshe~Y. Vardi.
\newblock The computational structure of monotone monadic {SNP} and constraint satisfaction: {A} study through {D}atalog and group theory.
\newblock {\em {SIAM} J. Comput.}, 28(1):57--104, 1998.

\bibitem{ref:CountingMod2ToKMinor}
Jacob Focke, Leslie~Ann Goldberg, Marc Roth, and Stanislav Zivny.
\newblock Counting homomorphisms to {$K_4$}-minor-free graphs, modulo 2.
\newblock {\em CoRR}, abs/2006.16632, 2020.
\newblock \href {https://arxiv.org/abs/2006.16632} {\path{arXiv:2006.16632}}.

\bibitem{Freese:ucalc}
Ralph Freese and Emil Kiss.
\newblock Universal algebra calculator.
\newblock Available at \url{http://uacalc.org/}.

\bibitem{ref:CountingMod2ToSquarefree}
Andreas G{\"o}bel, Leslie~Ann Goldberg, and David Richerby.
\newblock Counting homomorphisms to square-free graphs, modulo 2.
\newblock {\em ACM Transactions on Computation Theory (TOCT)}, 8(3):1--29, 2016.

\bibitem{ref:CountingModPToTrees_gbel_et_al_LIPIcs}
Andreas G{\"o}bel, J.~A.~Gregor Lagodzinski, and Karen Seidel.
\newblock Counting homomorphisms to trees modulo a prime.
\newblock In {\em {MFCS}}, volume 117, pages 49:1--49:13, 2018.

\bibitem{Goldberg10:complexity}
Leslie~Ann Goldberg, Martin Grohe, Mark Jerrum, and Marc Thurley.
\newblock A complexity dichotomy for partition functions with mixed signs.
\newblock {\em {SIAM} J. Comput.}, 39(7):3336--3402, 2010.

\bibitem{ref:CountingMod2ToCactus}
Andreas Göbel, Leslie~Ann Goldberg, and David Richerby.
\newblock The complexity of counting homomorphisms to cactus graphs modulo 2.
\newblock {\em {ACM} {T}rans on {C}omp {T}h}, 6(4):1–29, 2014.

\bibitem{Hagemann73:permutable}
J.~Hagemann and A.~Mitschke.
\newblock On $n$-permutable congruences.
\newblock {\em Algebra {U}niversalis}, 3:8--12, 1973.

\bibitem{ref:ProductOfGraphs}
Richard Hammack, Wilfried Imrich, and Sandi Klavzar.
\newblock {\em Handbook of Product Graphs, Second Edition}.
\newblock CRC Press, Inc., USA, 2nd edition, 2011.

\bibitem{Idziak10:tractability}
Pawel~M. Idziak, Petar Markovic, Ralph McKenzie, Matthew Valeriote, and Ross Willard.
\newblock Tractability and learnability arising from algebras with few subpowers.
\newblock {\em {SIAM} J. Comput.}, 39(7):3023--3037, 2010.

\bibitem{Jeavons:algebraic}
Peter Jeavons.
\newblock On the algebraic structure of combinatorial problems.
\newblock {\em Theoretical Computer Science}, 200(1-2):185--204, 1998.

\bibitem{Jeavons99:expressive}
P.G. Jeavons, D.A. Cohen, and M.~Gyssens.
\newblock How to determine the expressive power of constraints.
\newblock {\em Constraints}, 4:113--131, 1999.

\bibitem{Kazeminia25:thesis}
Amirhossein Kazeminia.
\newblock {\em Complexity of Modular Counting}.
\newblock PhD thesis, Simon Fraser University, 2025.

\bibitem{ref:CountingModPToSquarefree}
Amirhossein Kazeminia and Andrei~A Bulatov.
\newblock Counting homomorphisms modulo a prime number.
\newblock In {\em {MFCS}}, volume 138 of {\em LIPIcs}, pages 59:1--59:13, 2019.

\bibitem{Kazeminia25:modular}
Amirhossein Kazeminia and Andrei~A. Bulatov.
\newblock Modular counting {CSP:} reductions and algorithms.
\newblock In {\em {STACS} 2025}, volume 327 of {\em LIPIcs}, pages 60:1--60:18, 2025.

\bibitem{ref:CountingModPToK33free}
J.~A.~Gregor Lagodzinski, Andreas Göbel, Katrin Casel, and Tobias Friedrich.
\newblock On counting (quantum-) graph homomorphisms in finite fields of prime order.
\newblock {\em CoRR}, abs/2011.04827, 2021.
\newblock \href {https://arxiv.org/abs/2011.04827} {\path{arXiv:2011.04827}}.

\bibitem{Post41}
Emil~L. Post.
\newblock {\em The two-valued iterative systems of mathematical logic}, volume~5 of {\em Annals Mathematical Studies}.
\newblock Princeton University Press, 1941.

\bibitem{Valiant79:computing}
Leslie Valiant.
\newblock The complexity of computing the permanent.
\newblock {\em Theor. {C}omput. {S}ci.}, 8:189--201, 1979.

\bibitem{Valiant79:complexity}
Leslie Valiant.
\newblock The complexity of enumeration and reliability problems.
\newblock {\em {SIAM} {J}. on {C}omput.}, 8(3):410--421, 1979.

\bibitem{Welsh90:computational}
Dominic~J.A. Welsh.
\newblock {\em The computational complexity of some classical problems from statistical physics}.
\newblock Oxford Univ. Press, 1990.

\bibitem{Zhuk20:dichotomy}
Dmitriy Zhuk.
\newblock A proof of the {CSP} dichotomy conjecture.
\newblock {\em J. {ACM}}, 67(5):30:1--30:78, 2020.

\end{thebibliography}

\end{document}